\newenvironment{procedure}[1][htb]{%
    \floatname{algorithm}{Procedure}
  \begin{algorithm}[#1]%
  }{\end{algorithm}}
\newtheorem{defn}{Definition}
\newtheorem{thm}{Theorem}
\newtheorem{lem}{Lemma}
\newtheorem{prop}{Proposition}
\newtheorem{cor}{Corollary}
\newtheorem{assu}{Assumption}
\newtheorem{argument}{Argument}
\newcommand{\E}{{\rm E}}
 \newcommand{\red}[1]{\textcolor{black}{#1}}
  \newcommand{\magenta}[1]{\textcolor{black}{#1}}
  \newcommand{\orange}[1]{\red{#1}}
\begin{document}



\RUNTITLE{Sample Optimality of Greedy Procedures}

\TITLE{The (Surprising) Sample Optimality of Greedy Procedures for Large-Scale  Ranking  and  Selection}

\ARTICLEAUTHORS{%
\AUTHOR{Zaile Li}
\AFF{School of Management, Fudan University, Shanghai, China, \\\EMAIL{zaileli21@m.fudan.edu.cn}}
\AUTHOR{Weiwei Fan}
\AFF{Advanced Institute of Business and School of Economics and Management, Tongji University, Shanghai, China, \EMAIL{wfan@tongji.edu.cn}} 
\AUTHOR{L. Jeff Hong}
\AFF{School of Management and School of Data Science, Fudan University, Shanghai, China, \\\EMAIL{hong\_liu@fudan.edu.cn}}
} 

\ABSTRACT{%
Ranking and selection (R\&S) aims to select the best alternative with the largest mean performance from a finite set of alternatives. Recently, considerable attention has turned towards the large-scale R\&S problem which involves a large number of alternatives. Ideal large-scale R\&S procedures should be sample optimal, i.e., the total sample size required to deliver an asymptotically non-zero probability of correct selection (PCS) grows at the minimal order (linear order) in the number of alternatives, $k$. Surprisingly, we discover that the na\"{i}ve greedy procedure, which keeps sampling the alternative with the largest running average, performs strikingly well and appears sample optimal. To understand this discovery, we develop a new boundary-crossing perspective and prove that the greedy procedure is sample optimal 
 \orange{for the scenarios where the best mean maintains at least a positive constant away from all other means as $k$ increases.}
 We further show that the derived PCS lower bound is asymptotically tight for the slippage configuration of means with a common variance. \orange{For other scenarios, we consider the probability of good selection and find that the result depends on the growth behavior of the number of good alternatives: if it remains bounded as $k$ increases, the sample optimality still holds; otherwise, the result may change. }
 Moreover, we propose the explore-first greedy procedures by adding an exploration phase to the greedy procedure. The procedures are proven to be sample optimal and consistent under the same assumptions. Last, we numerically investigate the performance of our greedy procedures in solving large-scale R\&S problems.
}

\KEYWORDS{ranking and selection, sample optimality,  greedy, boundary-crossing} 

\maketitle

%


\section{Introduction}
\label{sec: intro}
Selecting the alternative with the largest mean performance from a finite set of alternatives is an important class of ranking-and-selection (R\&S) problems and has emerged as a fundamental research topic in simulation optimization. Since the pioneering work of \cite{bechhofer1954single}, two different types of formulations and corresponding sample-allocation algorithms (known as procedures) have dominated the literature. Fixed-precision formulation typically aims to guarantee a target level of the probability of correct selection (PCS) using as small a sampling budget as possible and the procedures include the ones of \cite{bechhofer1954single}, \cite{paulson1964sequential}, \cite{rinott1978two}, \cite{kim2001fully} and \cite{jeff2006fully} among others. Fixed-budget formulation intends to allocate a predetermined total sampling budget to all alternatives, statically or sequentially, to optimize a certain objective, e.g., the PCS. The procedures include the ones of \cite{chen2000simulation}, \cite{chick2001new} and \cite{frazier2008knowledge} among others. For readers who are interested in the general development of R\&S, we refer to the comprehensive reviews of \cite{kim2006selecting}, \cite{chen2015ranking} and \cite{hong2021review}.

In recent years, due to the fast increase of computing power, large-scale R\&S \magenta{that involves a considerably large number of alternatives} has received a significant amount of research attention. Most of the works in the literature try to adapt classical procedures to parallel computing environments, and they include the APS procedure \citep{luo2015fully}, the GSP procedure  \citep{ni2017efficient}, the AOCBA and AKG procedures \citep{kaminski2018parallel} and the PPP procedure \citep{zhong2022speeding} among others. These works have made substantial progress in enlarging the solvable problem size, from thousands to now millions of alternatives. Readers are referred to \cite{hunter2017parallel} for an overview of the approach. Lately, the research focus has shifted to designing procedures that are inherently large-scale and parallel. These procedures are very different from classical ones, specifically, they tend to work well for large-scale problems but not necessarily for small-scale ones. They include the knockout-tournament (KT) procedures of \cite{zhong2022knockout}, the fixed-budget KT procedures of \cite{hong2022solving} and the parallel adaptive survivor selection (PASS) procedures of \cite{pei2022parallel}.

An important lesson that we learned from these latest developments is that large-scale R\&S problems are fundamentally different from small-scale problems. When the number of alternatives $k$ is large, the most important efficiency measure of R\&S procedures is the growth order of the total sample size with respect to $k$ in order to deliver a meaningful (i.e., non-zero) PCS asymptotically. This growth order dominates the total sample size for large-scale R\&S problems. \cite{zhong2022knockout} and \cite{hong2022solving} demonstrate that most well-known R\&S procedures, including the most famous fixed-precision procedures like the KN procedure of \cite{kim2001fully} and the most famous fixed-budget procedures like the OCBA procedure of \cite{chen2000simulation}, have a growth order of $O(k\log k)$, while the optimal (i.e., known attainable lower) bound is $O(k)$. Therefore, these procedures are not sample optimal, and numerical experiments show that they may have poor performance when $k$ is only moderately large, e.g., a few thousands. Therefore, in our point of view, it is vital to ensure the sample optimality when designing large-scale R\&S procedures.


To the best of our knowledge, there are only two types of procedures that are proved to be sample optimal for large-scale R\&S problems in the literature.  One is the median elimination (ME) procedure of \cite{even2006action}, which is proposed to solve the best-arm identification problem that is closely related to R\&S problems \citep{audibert2010best}. The procedure is a fixed-precision procedure, and it proceeds round by round. In each round, all surviving alternatives receive the same number of observations, their sample means are sorted, and the lower half are eliminated.  The ME procedure has many variants, e.g., \cite{jamieson2013finding} and \cite{hassidim2020optimal}, and it has also been extended to solve fixed-budget problems by 
\cite{karnin2013almost} and  \cite{zhao2022revisiting}. The other sample-optimal procedure is the (fixed-precision) KT procedure \citep{zhong2022knockout} and its fixed-budget version \citep{hong2022solving}. The procedures proceed round by round like a knockout tournament. In each round, the surviving alternatives are grouped in pairs and an alternative is eliminated from each pair. The last surviving alternative is declared as the best. The procedures allow the use of common random numbers between each pair of alternatives, and they are very suitable for parallel computing environments. 



Notice that both the ME and KT procedures use the halving structure, i.e., eliminating half of the alternatives in each round, with carefully planned sample-allocation schemes in each round to balance exploration and exploitation and to ensure the sample optimality. Despite their sample optimality, the halving structures of these procedures are rigid. Therefore, a natural question to ask is whether there exist other types of procedures that are less rigid but also sample optimal. 

Through a preliminary numerical study, we discover that the simple greedy procedure, which always allocates the next observation to the alternative with the largest running average after first allocating one observation to each of the $k$ alternatives, appears to be sample optimal. 
For illustration, we consider a simple R\&S problem \red{maintaining a fixed mean difference between the best alternative and all others as $k$ varies. In the problem, the performance of alternative $i=1, \ldots, k$ follows a normal distribution with mean $\mu_i$ ($\mu_1=0.1$ and $\mu_i=0$ for $i>1$) and variance $\sigma^2=0.25$.}
We then plot the estimated PCS for both the OCBA procedure and the greedy procedure with a total budget of $B=100k$ while varying $k$ from $2^2$ to $2^{12}$ in Figure \ref{fig: simple_example}. From the plot, we can see that the PCS of the OCBA procedure plunges to zero as $k$ increases, while the PCS of the greedy procedure appears to stabilize around 25\% as $k$ increases. 
\red{It is important to note that, the setting of fixed mean difference considered above may not be sufficient for large-scale R\&S problems. When $k$ increases, the means of some alternatives may become very close to the mean of the best, leading to the low-confidence scenario discussed by \cite{peng2018gradient}. For this scenario, numerical results show that the PCS could decrease to zero, encouraging us to consider the probability of selecting a good alternative (PGS). As we will elaborate later, the greedy procedure can also achieve the sample optimality regarding PGS, given that the number of good alternatives is bounded by a constant.}

\begin{figure}[htbp]
      \centering
      \includegraphics[width=0.5 \textwidth]{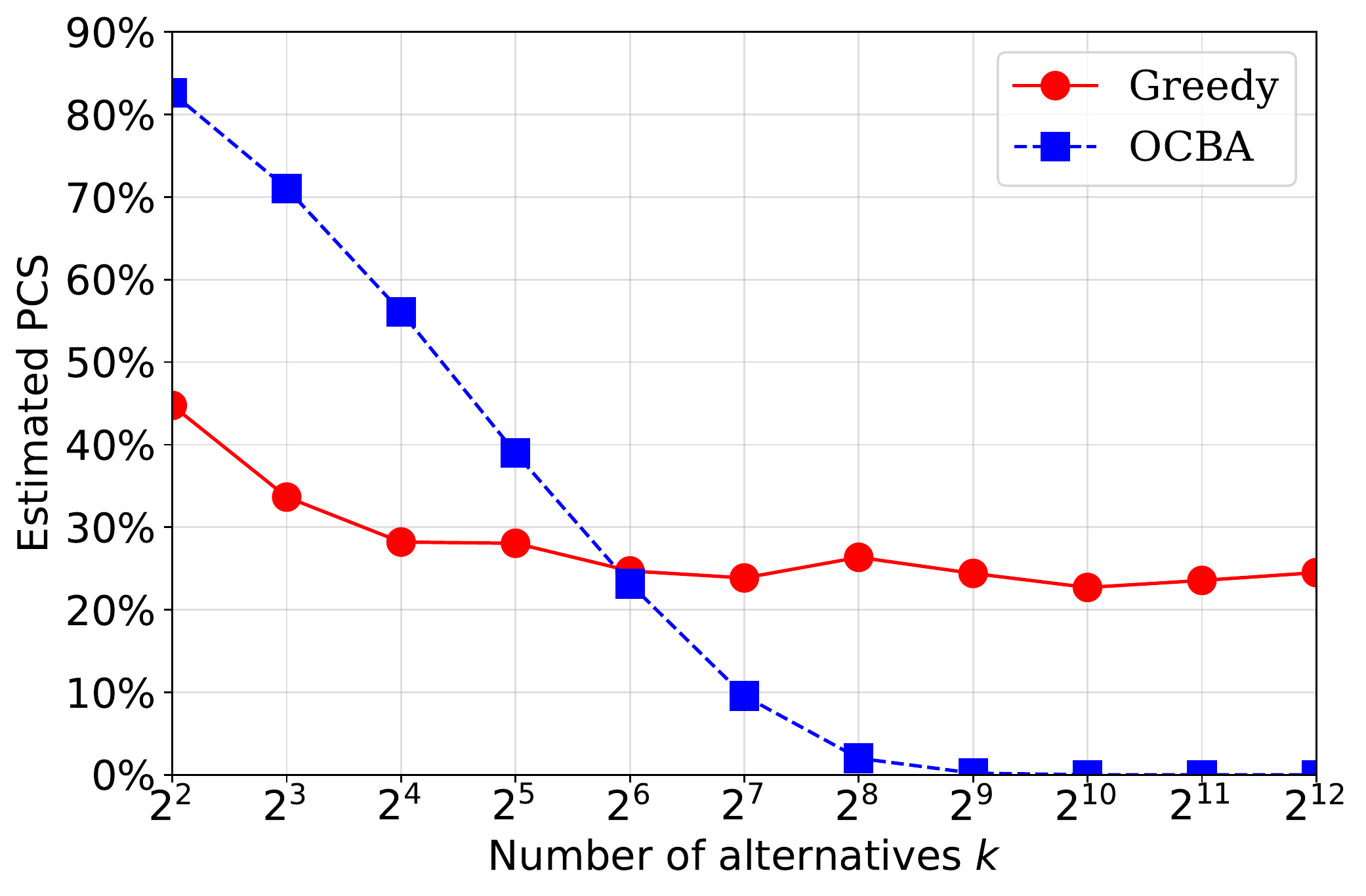}
      \caption{A comparison between the greedy procedure and the OCBA procedure.}
      \label{fig: simple_example}
\end{figure}

Our discovery is surprising because the greedy procedure is often perceived as na\"{i}ve and poor, and it is difficult to imagine that it is a candidate for the sample optimality. Furthermore, this discovery is even more counter-intuitive, because we often think a good R\&S procedure has to balance the exploration-exploitation tradeoff, but the greedy procedure is purely exploitative. 
In this paper, our goal is to prove that the greedy procedure is indeed sample optimal under a fixed-budget formulation, to understand why it is sample optimal, and to use the knowledge to develop procedures that have competitive performance for large-scale R\&S problems.

Although the greedy procedure is simple and easy to understand, analyzing its PCS is challenging. This is because the sample best process of the greedy procedure is a complicated stochastic process that depends on all observations of all alternatives and the sample sizes of all alternatives are intertwined with all sample means. In this paper, we develop a boundary-crossing perspective in Section \ref{sec:bc} that treats the (unknown) best as a boundary. From this new perspective, we observe two interesting insights about the greedy procedure. Firstly, the sampling process of each inferior alternative can be captured by its corresponding boundary-crossing process. \magenta{More specifically, the sample size allocated to each inferior alternative is essentially determined by the first time that its running sample-mean process crosses the aforementioned boundary}. Secondly, the PCS of the greedy procedure can be represented by the probability that the sum of the $k-1$ random boundary-crossing times of all $k-1$ inferior alternatives is smaller than the total sampling budget $B$. Then, by the strong law of large numbers, the total sampling budget required to ensure a non-zero PCS can be in the order of $k$. In other words, we can show that the greedy procedure is sample optimal. Besides, as we explain later, this new boundary-crossing perspective also provides an important insight into why the greedy procedure can be sample optimal even without the halving structure as the ME and KT procedures.

More precisely, based on the boundary-crossing perspective, we derive an analytical non-zero lower bound of the PCS for the greedy procedure as $k$ goes to infinity, \magenta{when the mean difference between the best alternative and the other alternatives remains at least a positive constant irrespective of $k$ \red{(i.e., Assumption \ref{assu:asyreg} holds)}.} We show that the total sample size to ensure this asymptotic lower bound grows linearly in $k$, proving the sample optimality of the procedure.
Furthermore, we prove that the lower bound of the PCS for the greedy procedure is tight under the slippage configuration of means with a common variance. This is also an interesting result. Aside from the work of \cite{bechhofer1954single} and \cite{frazier2014fully} under the exact same configuration, we are not aware of other R\&S procedures, especially sequential R\&S procedures, that have a tight PCS lower bound when $k \ge 3$. Most procedures cannot avoid the use of Bonferroni-type inequalities (e.g., \citealt{kim2001fully} and \citealt{chen2000simulation}), so they only obtain PCS lower bounds that are not tight. 

The greedy procedure has a clear drawback. It is inconsistent, i.e., its worst-case PCS has an asymptotic upper bound that is strictly less than 1 no matter how large the sampling budget is. This is easy to understand because the best alternative may never be revisited if its first observation is very poor. To solve this problem, we develop the explore-first greedy (EFG) procedure, which allocates $n_0\ge 1$ observations to each alternative at the beginning. Notice that the EFG procedure becomes the greedy procedure if $n_0=1$. We show that this equally simple procedure is also sample optimal with a tight PCS lower bound. Furthermore, we show that there are two interesting results as $n_0$ increases to infinity. First, the PCS lower bound may go to 1. Therefore, the EFG procedure is consistent. Second, the proportion of the total sample size needed for the greedy phase decreases to zero. This is a surprising result, because the equal allocation procedure (used in the initial phase) is not sample optimal. However, by adding only a tiny effort of the greedy phase, the equal allocation procedure can be turned into sample optimal. The numerical study also confirms this interesting finding, showing the potential for the greedy procedure to be a remedy to other non-sample-optimal procedures. Moreover, it is worth noticing that we are not restricted to use the equal allocation in the exploration phase. By allocating more exploration budget to competitive alternatives, we propose an enhanced version of the EFG procedure which is also consistent, and name it the $\mbox{EFG}^+$ procedure. As we will see in Section \ref{sec: numerical}, the $\mbox{EFG}^+$ procedure significantly outperforms the EFG procedure. \magenta{In addition, we propose a parallel version of the $\mbox{EFG}^+$ procedure to enhance its practical effectiveness in solving large-scale R\&S problems. In contrast to the standard $\mbox{EFG}^+$ procedure, the parallel procedure performs batched simulations at each stage of the sequential sampling process by using parallel computing resources.}


\red{The PCS results achieved above rely critically on Assumption \ref{assu:asyreg} that the mean difference between the best alternative and all other alternatives remains above a positive constant as $k$ increases. In case the assumption is not met, we also analyze the PGS of the greedy procedures. We rigorously prove that if the total number of good alternatives $g(k)$ remains bounded as $k$ increases, the greedy procedures can achieve the sample optimality in terms of the PGS. In other words, they can maintain a non-zero PGS in the limit $k\to\infty$ given that the total budget $B$ grows linearly in $k$ (i.e., $B=O(k)$). This scenario corresponds to problems where the emergence of new good alternatives is negligible as $k$ increases, e.g., the large-scale throughput maximization problem considered by \cite{luo2015fully} and \cite{ni2017efficient}. }

\red{In practice, there are also scenarios where $g(k) \rightarrow \infty$ as $k \rightarrow \infty$.  For instance, when the alternatives are generated by discretizing a bounded region, considered by \cite{yakowitz2000global} and also in \cite{chia2013limit}, the number of good alternatives that fall in a small fixed neighborhood of the optimal point may scale linearly in the total number of alternatives $k$ as finer discretization is employed; furthermore, the range of the region may expand during discretization, leading to a growth in a sub-linear order (e.g., $O(\sqrt{k})$) of the number good alternatives. In our numerical experiments, we also examine the EFG procedure's PGS for $g(k) = O(\sqrt{k})$ and $g(k) = O(k)$. Intuitively, having more good alternatives makes good selection easier and thus letting $g(k)\rightarrow \infty$ given $B=O(k)$ could lead to a PGS approaching 1 as $k$ increases. Surprisingly, we find that the expected behavior holds for $g(k) = O(\sqrt{k})$ but disappears when $g(k) = O(k)$. This indicates that the performance of the greedy procedures may depend critically on the growth order of the number of good alternatives.}

We summarize the contributions of this paper as follows. 
\begin{itemize}
    \item We discover that the greedy procedure is sample optimal for large-scale R\&S problems, adding a very simple but unexpected procedure to the very short list of sample-optimal R\&S procedures.
    
    \item We propose a boundary-crossing perspective to prove the sample optimality of the greedy procedure \orange{under the indifference-zone assumption (i.e., Assumption \ref{assu:asyreg})}, and show that the resulting PCS lower bound is asymptotically tight under the slippage configuration with a common variance.
    
    \item We develop the EFG and $\mbox{EFG}^+$ procedures that are consistent and sample optimal, and extend them to parallel computing environments. They work well for large-scale R\&S problems compared to other sample-optimal procedures in the literature. 
     \item \orange{We prove the sample optimality regarding PGS when the number of good alternatives is bounded as $k$ increases. We also identify and clarify other scenarios for which the procedures might work well or might not work well.}
    \item Our research further corroborates that a new mindset may be necessary to move from small-scale to large-scale R\&S problems.
\end{itemize}

\magenta{While this paper has yielded interesting insights and contributions, it is important to acknowledge certain limitations that could guide further refinements and explorations. First, we assume that the mean performances of alternatives are unstructured, which restricts our attention to procedures that do not utilize the structure information of ``nearby" alternatives. However, it is worth noting that leveraging this information may be useful in improving the efficiency of the procedures \citep{semelhago2019gaussian, semelhago2021rapid}.
Second, we establish the sample optimality under the assumption that the best mean maintains at least a positive constant away from all others as $k$ increases, and we are only interested in selecting the best. In practice, this is a strong assumption. We relax this assumption by considering the probability of good selection. We establish the sample optimality in terms of the probability of good selection, under the additional assumption that the number of good alternatives is bounded as $k$ increases. Last, we assume the observations collected to be independent across alternatives, which gives up the use of common random numbers that are often used in R\&S procedures to improve the efficiency \citep{nelson1995using, chick2001newCRN, zhong2022knockout}.}
%

We end this section with three additional remarks. First, there is an interesting similarity between the greedy procedure and the PASS procedures of \cite{pei2022parallel}, as they both embed a structure of comparisons with an adaptive standard (or boundary). The PASS procedures have an explicit standard learned gradually through the aggregated sample mean of the surviving alternatives, while the greedy procedure, implicitly uses the sample mean of the (unknown) best as the standard.
Second, alongside our work, the surprisingly good performance of the greedy procedure has also been noticed in the field of multi-armed bandit by \cite{kannan2018smoothed} and \cite{bayati2020unreasonable} among others. However, their focus is on the growth order of the cumulative regret, and the greedy procedure is only shown to be sub-optimal \citep{jedor2021greedy} unless a diversity condition on the alternatives is satisfied \citep{bastani2021mostly}, making them significantly different from our work. \magenta{Last, boundary-crossing mechanisms are common in sequential R\&S procedures. For instance, frequentist procedures, such as the KN procedure of \cite{kim2001fully} and the IZ-free procedure of \cite{fan2016indifference}, use two-sided boundaries to define continuation regions that guide elimination decisions, and Bayesian procedures, such as the ESP procedures of \cite{chick2012sequential}, stop the sampling of an alternative when its posterior mean exceeds a boundary.} \orange{In these procedures, the boundaries are pre-determined; they are either explicitly specified \citep{kim2001fully} or implicitly defined by a free-boundary problem for the optimal stopping of a diffusion process \citep{chick2012sequential}. However, the boundary in our procedures is both implicit and adaptive, thus eliminating the need for choosing the boundary in advance.}

The rest of this paper is organized as follows. In Section \ref{sec: problem}, we provide the problem formulation and introduce the greedy procedure. In Section \ref{sec:bc}, we review the greedy procedure from a boundary-crossing perspective, based on which we prove its sample optimality in Section \ref{sec: GreedyOpt}. In Section \ref{sec: ExploreFirst}, we introduce the EFG procedure to resolve the inconsistency issue, \magenta{extend it to the EFG$^+$ procedure, and discuss its variants in parallel computing environments}. In Section \ref{sec: numerical}, we conduct a comprehensive numerical study to verify the theoretical results and to understand the performance of our procedures for large-scale R\&S problems. We conclude in Section \ref{sec: conclusion} and include the proofs in the e-companion.

\section{Problem Statement}
\label{sec: problem}

Consider a fixed-budget formulation of an R\&S problem. The problem has $k$ alternatives, and each alternative $i$ may be represented by a random variable $X_i$, $i=1,\ldots,k$. We are given a total budget of $B$ observations. Our goal is to sequentially allocate the observations to the alternatives to select the alternative with the largest mean performance, i.e., $\argmax_{i=1,\ldots,k} \E[X_i]$. Following the convention of the simulation literature (e.g., \citealt{bechhofer1954single}, \citealt{kim2001fully} and \citealt{zhong2022knockout}), we assume that $X_i$ follows a normal distribution with mean $\mu_i$ and variance $\sigma_i^2$ for all $i=1,\ldots,k$, and the observations collected from each alternative are independent. Furthermore, we assume that the best alternative is unique. Then, without loss of generality, we may assume that $\mu_1 > \mu_i$ for all $i=2,\ldots, k$, and our goal is to identify which alternative is alternative $1$. We further define the probability of correct selection (PCS) as the probability that the alternative $1$ is selected at the end of the procedure, and we use the PCS as a criterion to evaluate the effectiveness of R\&S procedures.

\subsection{Sample Optimality}

In this paper, we are interested in solving large-scale R\&S problems where the number of alternatives $k$ is large. In particular, we are interested in understanding how the PCS and the necessary budget $B$ are affected by $k$. Therefore, we consider an asymptotic regime where $k\to\infty$. There are two remarks that we want to make on this asymptotic regime. First, although practical R\&S problems typically involve only a finite number of alternatives, different procedures appear to have very different behaviors as $k$ increases, as illustrated by Figure \ref{fig: simple_example} in Section \ref{sec: intro}. Therefore, we believe that the asymptotic regime of $k\to\infty$ provides a good framework under which these differences may be understood. Second, while letting $k\to\infty$, we assume that the difference between the best and the second best remains \orange{above a positive constant $\gamma$}, i.e., $\mu_1-\max_{i=2, \ldots, k}\mu_i\geq \gamma$.  Notice that this treatment is similar to the indifference-zone assumption widely used in R\&S procedures (e.g., \citealt{bechhofer1954single}, \citealt{kim2001fully} and \citealt{zhong2022knockout}). However, we want to emphasize that there is a difference. In our setting, we only need the existence of such $\gamma>0$ while the indifference-zone parameter is assumed known in advance. We will consider the situation where this assumption is violated in Section \ref{subsec: GreedyOptPGS}.

Under the asymptotic regime of $k\to\infty$, following the definition of \cite{hong2022solving}, we present the following definition of the sample optimality. 

\begin{defn}
\label{def: rate_optimality}
A fixed-budget R\&S procedure is sample optimal if there exists a positive constant $c>0$ such that
\begin{equation}\label{eqn:ro}
\liminf_{k\to\infty}\, {\rm PCS}>0\ {\rm for}\ B=ck.
\end{equation}
\end{defn}


As pointed out by \cite{hong2022solving}, to ensure an asymptotically non-zero PCS, the total budget has to grow at least linearly in $k$. That's why we call procedures that satisfy Equation (\ref{eqn:ro}) sample optimal. Many fixed-budget R\&S procedures, including the OCBA procedures and the ones based on the large-deviation principles of \cite{glynn2004large}, allocate the budget $B$ by either solving or approximating the optimal solution of the following optimization problem
\begin{eqnarray*}
&{\rm maximize} & \quad \Pr\left\{\bar X_1(n_1) > \bar X_i(n_i),\, i=2,\ldots,k\right\} \\ &{\rm subject\ to} & \quad n_1+n_2+\ldots+n_k=B, \\
&& \quad n_i>0,\ i=1,2,\ldots,k,
\end{eqnarray*}
where $\bar X_i(n_i)$ is the same mean of $n_i$ observations of alternative $i$ for $i=1,\ldots,k$. \cite{hong2022solving} show that, to ensure an asymptotically non-zero PCS, the total budget of these procedures has to grow at least in the order of $k\log k$. Therefore, these procedures are typically not sample optimal, and they are inefficient for large-scale R\&S procedures even though their performances for small-scale problems may be superb. Therefore, for large-scale R\&S problems, we believe that the sample optimality is a critical requirement. \cite{hong2022solving} show that the fixed-budget KT procedure is sample optimal, and \cite{zhao2022revisiting} prove that a specific fixed-budget version of the ME procedure is sample optimal. 
In Section \ref{subsec: GreedyOpt}, we present a simple greedy procedure that is also sample optimal.

\subsection{The Greedy Procedure}
\label{subsec: GreedyOpt}

Consider a greedy procedure that allocates the first $k$ observations to all $k$ alternatives, so that each alternative has an observation, and subsequently allocates the next observation to the alternative with the largest running average until the sampling budget $B$ is exhausted. See below for a detailed presentation of the procedure.

\begin{procedure}[hbt!]
\caption{\textbf{Greedy Procedure}}
\label{procedure: Greedy}
\begin{algorithmic}[1]
\vspace*{4pt}
\REQUIRE $k$ alternatives $X_1,\ldots,X_k$ and the total sampling budget $B$.
\STATE For all $i=1,\ldots,k$, take an observation $X_{i1}$ from alternative $i$, set $\bar{X}_i(1)={X}_{i1}$, and let $n_i=1$.
\WHILE{$\sum_{i=1}^k n_i < B$}
\STATE Let $s = \arg\max_{i\in\{1,\ldots,k\}}  \bar{X}_{i}\left(n_i\right)$ and take one observation $x_{s}$ from alternative $s$;
\STATE Update $\bar X_{s}(n_{s}+1) = \frac{1}{n_{s}+1}\left[n_{s}\bar X_{s}(n_s) + x_s\right]$ and let $n_s = n_s+1$;
\ENDWHILE
 \STATE Select $\arg\max_{i\in\{1,\ldots,k\}}  \bar{X}_{i}\left(n_i\right)$  as the best.
\end{algorithmic}
\end{procedure}

The greedy procedure is very simple to understand and simple to implement. It is probably one of the simplest sequential R\&S procedures that one can conceive. However, because it is purely exploitative in nature, it is difficult to imagine such a simple and na\"{i}ve procedure would have a competitive performance, let alone the sample optimality, when solving large-scale problems. That is what we intend to show in this paper.

Although the greedy procedure itself is simple and straightforward, rigorously characterizing its sample optimality is by no means trivial. Let $n_i(t)$ denote the sample size of alternative $i$ when a total of $t$ observations have been allocated with $k\le t \le B$. Then,
\begin{eqnarray*}
Y(t) &=& \max_{i\in\{1,\ldots,k\}}\, \bar X_i(n_i(t)), \quad t=k,k+1,\ldots,B,\\
s(t) &=& \argmax_{i\in\{1,\ldots,k\}}\, \bar X_i(n_i(t)),\quad t=k,k+1,\ldots,B,
\end{eqnarray*}
denote the processes of running sample maximum and its identity, respectively. Notice that $s(t)$ is the alternative that receives the $t+1$st observation. These processes fully characterize the dynamics of the greedy procedure. However, they are difficult to analyze, because they depend on all previous observations of all alternatives. Therefore, a new perspective is necessary when analyzing the sample optimality of the greedy procedure.

\section{The Boundary-Crossing Perspective}\label{sec:bc}

In this section, we propose to analyze the greedy procedure from a boundary-crossing perspective and illustrate how this perspective facilitates the analysis of the sample optimality. To streamline the presentation, we start by considering a simplified case of an R\&S problem and then extend the analysis to the general case.

\subsection{A Simplified Case}
\label{subsec: bc_simple}

Suppose that the best alternative (i.e., alternative $1$) can be observed without random noise, i.e., $\bar X_1(n)=\mu_1$ for any $n \geq 1$. Then, it is important to notice that at any $t=k,k+1,\ldots,B$ and for any $i=2,\ldots,k$, if $\bar X_i(n_i(t))<\mu_1$, alternative $i$ will never be sampled again. Therefore, we may view $\mu_1$ as a boundary. If all inferior alternatives have sample-mean values falling below $\mu_1$ after using $t$ observations of the budget with $t\le B$, i.e., $Y(t)=\mu_1$ and $s(t)=1$, then it guarantees a correct selection, i.e., $s(B)=1$.\footnote{Notice that we ignore the event $\max_{i=2,\ldots,k}\,\bar X_i(n_i(t))=\mu_1$, because it is a probability-zero event under the normality assumption.}

Let $N_i(\mu_1)=\inf\left\{n\geq 1: \bar X_i(n) < \mu_1\right\}$ denote the boundary-crossing time of alternative $i$, i.e., the minimal number of observations that alternative $i$ needs to have its sample mean falling below the boundary $\mu_1$, for any $i=2,\ldots,k$. Then, it is clear that $\left\{\sum_{i=2}^{k} N_i(\mu_1) + 1\le B\right\}$ denotes a correct-selection event where the ``+1'' is the one observation allocated to alternative 1.  Then,
\begin{equation}\label{eqn:pcs}
{\rm PCS}=\Pr\left\{\sum_{i=2}^{k} N_i(\mu_1) + 1\le B\right\}.
\end{equation}
Equation \eqref{eqn:pcs} provides a new perspective to look at the greedy procedure, and we call it the boundary-crossing perspective. First, it shows that, for the simplified case, the event of correct selection is determined only by the sum of the boundary-crossing times of all inferior alternatives. Second, it is critical to notice that $N_2(\mu_1),\ldots, N_{k}(\mu_1)$ are mutually independent and the order of observations allocated to inferior alternatives is irrelevant to the PCS. We illustrate the boundary-crossing perspective in Figure \ref{fig: simplified} using a simple example with 4 alternatives. In this example, $N_2(\mu_1)=5$, $N_3(\mu_1)=1$, $N_4(\mu_1)=13$, and the alternative 1 is selected if $B\ge 20$. Furthermore, the figure illustrates that how the three inferior alternatives cross the boundary $\mu_1$ is irrelevant to the event of correct selection.

\begin{figure}[thbp]
  \centering
  \includegraphics[width =0.75 \textwidth]{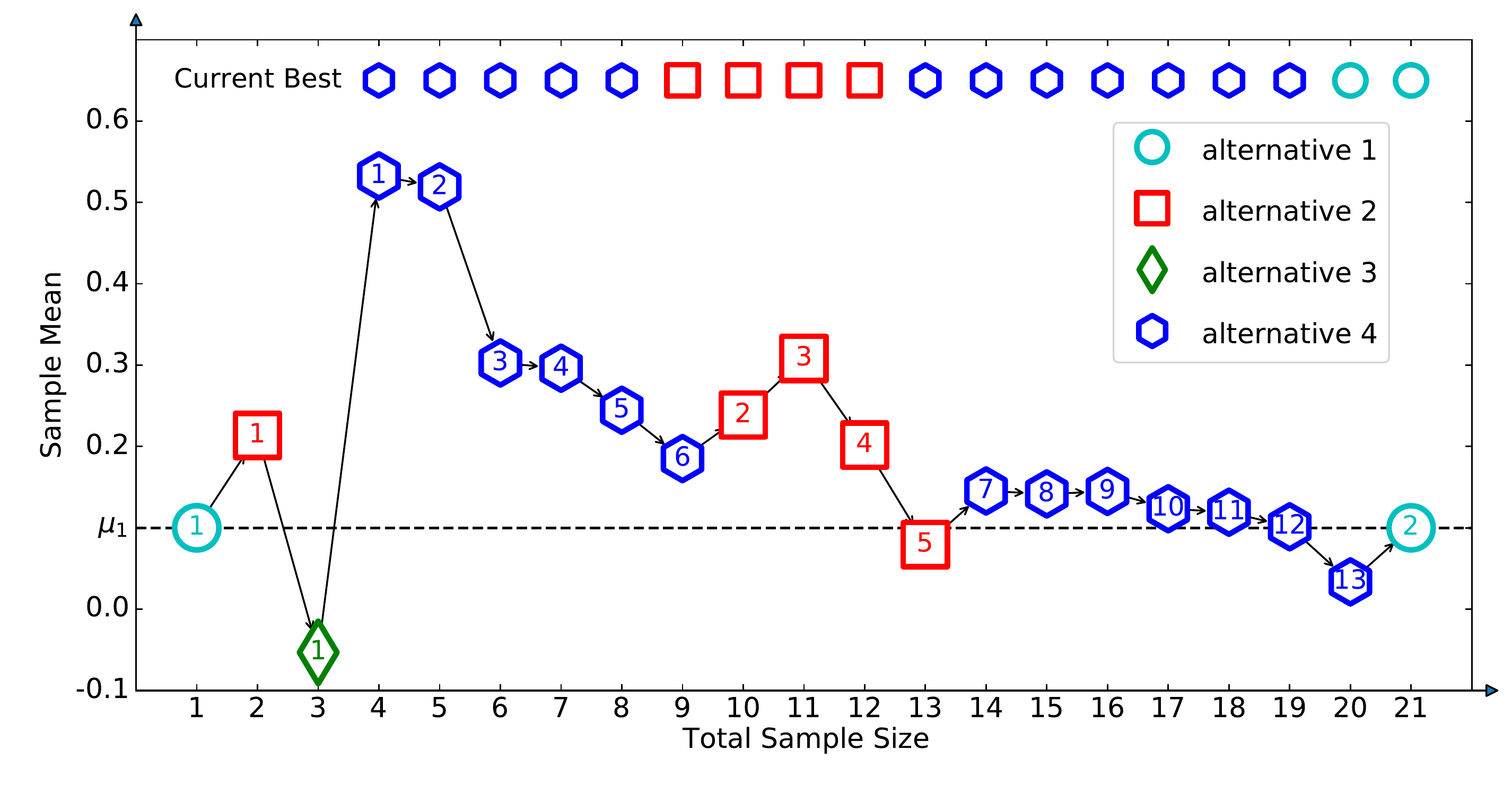}
  \caption{Sequential sampling process of the greedy procedure for a simplified R\&S problem with $4$ alternatives. 
  The numbers inside the markers represent sample sizes. 
  }
  \label{fig: simplified}
\end{figure}

\subsection{The General Case}
\label{subsec: bc_general}

Now we consider the general R\&S problem where all alternatives are observed with random noise. Compared to the simplified case presented in Section \ref{subsec: bc_simple}, there is no deterministic boundary $\mu_1$ that is readily available. To use the boundary-crossing perspective, we need to define a new boundary. Let $\bar X_1^* = \min_{n\ge 1} \bar X_1(n)$ denote the minimum of the running-average process of alternative $1$, \magenta{which is well-defined as it is finite 
 almost surely\footnote{\magenta{By the strong law of large numbers, the sequence $\{\bar X_1(n): n = 1,2,\ldots\}$ converges to $\mu_1$ almost surely. Consequently, $\bar{X}_1^*$ is bounded and finite with probability 1 due to the boundness of convergent sequences.}}}. Notice that $\bar X_1^*$ is a natural choice of the boundary, because inferior alternatives whose sample means are below $\bar X_1^*$ are dominated by alternative $1$ and will never be sampled again. 

Similar to the simplified case presented in Section \ref{subsec: bc_simple}, we let $N_i(\bar X_1^*)=\inf\left\{n\ge 1: \bar X_i(n)<\bar X_1^*\right\}$ denote alternative $i$'s boundary-crossing time for all $i=2,\ldots,k$. Unlike the simplified case, however, $\left\{\sum_{i=2}^{k} N_i(\bar X_1^*) +1 \le B\right\}$ does not necessarily imply a correct-selection event. This is because, in the simplified case, once alternative $1$ is the running sample best, it remains as the best until all budget is used; while in the general case, alternative $1$ has randomness, it may collect a large number of observations and suddenly appear inferior to other alternatives. Then, there may not be enough budget left to support all inferior alternatives to reach their boundary-crossing times. 

To avoid it from happening, we require the alternative $1$ to reach its minimum within the budget. After that, once it becomes the sample best, it remains the best until all budget is used, as in the simplified case. Let $N_1^*=\argmin_{n\ge 1} \bar X_1(n)$. Then, $\left\{\sum_{i=2}^{k} N_i(\bar X_1^*)+N_1^*\le B\right\}$ implies a correct-selection event and, therefore,
\begin{equation}\label{eqn:pcs2}
{\rm PCS}\ge \Pr\left\{\sum_{i=2}^{k} N_i(\bar X_1^*)+N_1^*\le B\right\}.
\end{equation}
Notice that Equation (\ref{eqn:pcs2}) is an inequality, which is different from Equation (\ref{eqn:pcs}) in the simplified case. This is because there may exist other scenarios of correct selection, for instance, all inferior alternatives have already crossed the boundary $\bar X_1^*$ before alternative $1$ reaches the minimum. However, surprisingly, in Section \ref{subsec: tightness}, we show that the PCS lower bound derived using Equation (\ref{eqn:pcs2}) is tight as $k\to\infty$ in some specific problem setting.

The boundary-crossing perspective of Equation (\ref{eqn:pcs2}) is a critical result of this paper. It greatly simplifies the analysis of the PCS for the greedy procedure. In the next section, based on this perspective, we are able to prove the sample optimality of the greedy procedure.

\section{Sample Optimality of The Greedy Procedure}
\label{sec: GreedyOpt}

In this section, we establish the sample optimality of the greedy procedure. First, we define the asymptotic regime more explicitly in the following assumption.
\begin{assu}\label{assu:asyreg}
    There exist constants $\gamma>0$ and $\bar\sigma^2>0$ such that $\mu_1-\max_{i=2,\ldots,k}\,\mu_i \ge\gamma$ and $\max_{i=1,\ldots,k}\,\sigma_i^2\le \bar\sigma^2$ regardless of how large $k$ is.
\end{assu}
Notice that in Assumption \ref{assu:asyreg}, the existence of $\gamma>0$ may be easily understood as the traditional indifference-zone (IZ) formulation \citep{bechhofer1954single}, but it is slightly different. Under the IZ formulation, the decision maker is assumed to know the IZ parameter (i.e., the minimal mean difference) and needs to specify its value in the procedure. Recently, \cite{ni2017efficient} argue that this assumption is unlikely to be true for large-scale problems. In Assumption \ref{assu:asyreg}, however we only require the existence of $\gamma$ and do not 
need to know or to specify its value. In Assumption \ref{assu:asyreg}, the existence of $\bar\sigma^2$ is also a reasonable assumption, as it avoids situations where some variances may go to infinity as $k\to\infty$. Again, we only need its existence and do not need to know its value.

\subsection{Preliminaries on the Running Average}
\label{subsec: runningaverage}
To establish the sample optimality, as shown in Section \ref{sec:bc}, we need to repeatedly use properties of the running-average process $\{\bar X_i(n), n=1,2,\ldots\}$, including its minimum and its boundary-crossing time. In this subsection, we prove several results of the running average that are useful in the rest of this paper.

Let $Z_1,Z_2,\ldots$ be a sequence of independent standard normal observations. Let $\bar Z(n)= \frac{1}{n}\sum_{j=1}^n Z_j$ be the sample average of the first $n$ observations. Then, $\{\bar Z(n), n=1,2,\ldots\}$ denotes the running-average process of the standard normal distribution. First, we consider the minimum of the process. In the following lemma, we show that the running-average process reaches its minimum in a finite number of observations almost surely. The proof of the lemma is included in \ref{subsec: proofargmin}.
\begin{lem}\label{lem:argmin}
    The running-average process $\{\bar Z(n), n=1,2,\ldots\}$ reaches its minimum in a finite number of observations almost surely, i.e., $\Pr\left\{\argmin_{n\ge 1}\bar Z(n)<\infty\right\}=1$.
\end{lem}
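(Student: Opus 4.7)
Write $S_n := \sum_{j=1}^{n}Z_j$, so that $\bar Z(n) = S_n/n$, and set $M := \inf_{n\ge 1}\bar Z(n)$. The plan is to combine the strong law of large numbers with the recurrence of the symmetric one-dimensional random walk $S_n$.

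First, by the strong law of large numbers, $\bar Z(n)\to 0$ almost surely, so on a probability-one event the sequence $\{\bar Z(n)\}$ is bounded and $M$ is finite. I would then observe that, on the further event $\{M<0\}$, the argmin is automatically attained at a finite index: pick any $\epsilon\in (0,-M)$; because $\bar Z(n)\to 0$, there is a finite $n_0$ with $\bar Z(n) > -\epsilon > M$ for every $n > n_0$, and hence the infimum reduces to a minimum over the finite set $\{\bar Z(1),\ldots,\bar Z(n_0)\}$.

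The whole problem therefore reduces to proving $M<0$ almost surely, equivalently $\Pr\{S_n\ge 0 \text{ for all } n\ge 1\}=0$. I would derive this from the classical Chung--Fuchs recurrence of one-dimensional mean-zero random walks with finite variance, which yields $\liminf_{n\to\infty} S_n = -\infty$ almost surely; in the Gaussian setting the same conclusion is also immediate from the law of the iterated logarithm, $\liminf_n S_n/\sqrt{2n\log\log n} = -1$ a.s. Either way, $S_n$ is strictly negative infinitely often with probability one, which rules out the event that $S_n$ stays nonnegative forever (the probability that $S_n=0$ for some $n$ is zero because $S_n$ has a continuous distribution). Combining with the previous paragraph, the argmin is attained at a finite index almost surely.

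The only real subtlety, and the step I expect to be the main obstacle, is cleanly eliminating the borderline scenario in which $M=0$ but is not attained at any finite index; this is precisely the event that the running sum $S_n$ never dips below zero, and it is exactly this event that the Chung--Fuchs recurrence (or LIL) removes. Once this reduction to a classical random-walk property is made, the rest of the argument is straightforward bookkeeping via the SLLN.
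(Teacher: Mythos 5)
Your proof is correct, and it reaches the same intermediate reduction as the paper — namely that, by the SLLN, non-attainment of the argmin can only happen on the event $\left\{\inf_{n\ge 1}\bar Z(n)\ge 0\right\}$, i.e.\ the event that the partial sums $S_n$ never go strictly negative — but you dispatch that event by a different tool. The paper kills it with the explicit boundary-crossing identity it has already imported from Siegmund (its Lemma \ref{lem:normalbc}): $\Pr\{\tilde N(0)=\infty\}=\exp\left(-\sum_{n\ge 1} n^{-1}\Pr\{\bar Z(n)\le 0\}\right)=\exp\left(-\sum_{n\ge 1}\tfrac{1}{2n}\right)=0$. You instead invoke Chung--Fuchs recurrence (or the LIL) to get $\liminf_n S_n=-\infty$ a.s., which gives $S_n<0$ at some finite time and hence $\inf_n\bar Z(n)<0$ a.s. Both are valid; your route is more self-contained in the sense that it rests on classical random-walk recurrence rather than the Siegmund formula, while the paper's route is essentially free given that the same formula is needed anyway for the quantitative results (Equations \eqref{eqn:normbc1}--\eqref{eqn:normbc3}) that drive the rest of the analysis. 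Your handling of the borderline case $M=0$ (the only genuine subtlety) is also clean: you correctly identify it with the never-negative event and eliminate it together with the rest.
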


Second, we consider the boundary-crossing time. Let $\tilde N(x)=\inf\{n:\bar Z(n)<x\}$. Notice that $\bar Z(n)\to 0$ by the strong law of large numbers. Then, $\tilde N(x)$ displays different behaviors depending on the value of $x$; see, e.g., Corollaries 8.39 and 8.44 of \cite{siegmund1985sequential}. We summarize the necessary results of their behaviors in the following lemma.

\begin{lem}\label{lem:normalbc}
    For any $x>0$, $\Pr\{\tilde N(x)<\infty\}=1$ and 
    \begin{equation}\label{eqn:normbc1}
        \E\left[\tilde N(x)\right]=\exp \left(\sum_{n=1}^{\infty} \frac{1}{n} \Phi\left(-\sqrt{n}x \right)\right),
    \end{equation}
    and furthermore, $\E[\tilde N(x)]$ is continuous and strictly decreasing in $x\in(0,\infty)$, where $\Phi(\cdot)$ denotes the cumulative distribution function of the standard normal distribution.
    
    For any $x<0$, $\E[\tilde N(x)]=\infty$ and
    \begin{equation}\label{eqn:normbc2}
        \Pr\left\{\tilde N(x)<\infty\right\}=1-\exp \left(-\sum_{n=1}^{\infty} \frac{1}{n} \Phi\left(\sqrt{n}x \right)\right),
    \end{equation}
   and furthermore,  $\Pr\{\tilde N(x)<\infty\}$ is continuous and strictly increasing for $x\in(-\infty,0)$.
    
    When $x=0$, $\Pr\{\tilde N(x)<\infty\}=1$ and $ \E[\tilde N(x)]=\infty$. 
\end{lem}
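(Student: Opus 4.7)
The plan is to reduce $\tilde{N}(x)$ to a first-passage problem for a random walk and then invoke classical Spitzer-type identities. Observing that $\bar{Z}(n) < x$ is equivalent to $S_n := \sum_{j=1}^n (Z_j - x) < 0$, we have $\tilde{N}(x) = \inf\{n \geq 1 : S_n < 0\}$, the first strictly descending ladder epoch of the walk $\{S_n\}$, whose increments are i.i.d.\ Gaussian with mean $-x$ and unit variance. Since $\sqrt{n}\,\bar{Z}(n) \sim N(0,1)$, one immediately obtains $\Pr\{S_n < 0\} = \Phi(\sqrt{n}\,x)$ and $\Pr\{S_n > 0\} = \Phi(-\sqrt{n}\,x)$, while $\Pr\{S_n = 0\} = 0$ by continuity.

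For $x > 0$, the walk has strictly negative drift, so $S_n \to -\infty$ almost surely, yielding $\Pr\{\tilde{N}(x) < \infty\} = 1$. The classical expected-ladder-epoch identity (Corollary 8.39 of Siegmund 1985) then delivers (\ref{eqn:normbc1}). For $x < 0$, the walk has strictly positive drift, so with positive probability $\{S_n\}$ never crosses below zero, making $\tilde{N}(x) = \infty$ on a set of positive probability and hence $\E[\tilde{N}(x)] = \infty$; the companion identity for the finiteness probability of the descending ladder epoch (Corollary 8.44 of Siegmund 1985) yields (\ref{eqn:normbc2}). For $x = 0$, the symmetric, zero-mean walk visits $(-\infty,0)$ infinitely often by the Chung-Fuchs recurrence theorem, so $\Pr\{\tilde{N}(0) < \infty\} = 1$; and the series in (\ref{eqn:normbc1}) specialized to $x = 0$ degenerates to $\frac{1}{2}\sum_n 1/n = \infty$, which via a monotone passage from $x \downarrow 0$ (or a direct zero-mean renewal argument) forces $\E[\tilde{N}(0)] = \infty$.

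For the monotonicity and continuity claims, I would work directly with the series representations. Strict monotonicity is immediate from the strict monotonicity of $\Phi$: each summand $\frac{1}{n}\Phi(-\sqrt{n}\,x)$ is strictly decreasing in $x > 0$, so $\E[\tilde{N}(x)]$ is strictly decreasing on $(0,\infty)$; and each summand $\frac{1}{n}\Phi(\sqrt{n}\,x)$ is strictly increasing in $x < 0$, so $\Pr\{\tilde{N}(x) < \infty\}$ is strictly increasing on $(-\infty,0)$. For continuity on any compact $[a,b] \subset (0,\infty)$, the Mills-ratio bound $\Phi(-y) \leq \phi(y)/y$ for $y > 0$ gives $\frac{1}{n}\Phi(-\sqrt{n}\,x) \leq \frac{1}{n^{3/2}\sqrt{2\pi}\,a}\exp(-na^2/2)$ uniformly in $x \in [a,b]$, which is summable, so dominated convergence transfers continuity of each summand to the series; the argument on $(-\infty,0)$ is symmetric.

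The main obstacle I anticipate is the clean handling of the boundary case $x = 0$, in particular justifying $\E[\tilde{N}(0)] = \infty$ (since the Spitzer-type identity degenerates there), and carefully verifying the hypotheses of the Siegmund corollaries for our continuous Gaussian setting so that strict versus non-strict inequalities in the ladder-epoch definition are immaterial. Everything else reduces to tail estimates for $\Phi$ and dominated convergence.
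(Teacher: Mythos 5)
Your proposal is correct and follows essentially the same route as the paper: both treat the identities, finiteness statements, and the $x=0$ case as consequences of Corollaries 8.39 and 8.44 of Siegmund (1985), and both establish continuity of the series on $(0,\infty)$ by exhibiting a summable dominating bound uniform on $[\theta,\infty)$ (the paper bounds the tail by its value at $\theta$ and invokes uniform convergence, while you use the Mills-ratio bound — the same mechanism). Your explicit treatment of strict monotonicity termwise and of $\E[\tilde N(0)]=\infty$ via the monotone limit $x\downarrow 0$ simply fills in details the paper delegates to the cited corollaries.
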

\begin{proof}{Proof.}

The proof is included in \ref{subsec: proofNormalBC}, where we only prove the continuity and strict monotonicity of $\E[\tilde N(x)]$ for $x \in (0, \infty)$. The other results are direct consequences of Corollaries 8.39 and 8.44 of \cite{siegmund1985sequential}. \hfill\Halmos
\end{proof}
\vspace*{6pt}

Notice that there is an interesting link between the minimum and the boundary-crossing time of the running-average process, i.e., when the minimum is above $x$, the boundary-crossing time is infinite, and vice versa. Mathematically, we may write
\[
\left\{\min_{n\ge 1} \bar Z(n) > x\right\}=\left\{\tilde N(x)=\infty\right\}.
\]
Therefore, when $x<0$ and by Equation (\ref{eqn:normbc2}), we have
\begin{equation}\label{eqn:normbc3}
    \Pr\left\{\min_{n\ge 1} \bar Z(n) > x\right\}=\exp \left(-\sum_{n=1}^{\infty} \frac{1}{n} \Phi\left(\sqrt{n}x \right)\right).
\end{equation}
To simplify the notation, we let 
\[
C(x)=\exp \left(\sum_{n=1}^{\infty} \frac{1}{n} \Phi\left(-\sqrt{n}x \right)\right).
\]
Then, by Equations (\ref{eqn:normbc1}) and (\ref{eqn:normbc3}), we have $\E[\tilde N(x)]=C(x)$ when $x>0$ and $\Pr\{\min_{n\ge 1} \bar Z(n) > x\}=[C(-x)]^{-1}$ when $x<0$, respectively.

\subsection{Sample Optimality}
\label{subsec: GreedyOpt_opt}

Following Equation (\ref{eqn:pcs2}), we may establish the sample optimality of the greedy procedure based on the following three arguments. The more rigorous analyses based on these arguments are provided in the proof of the theorem.

\begin{argument}
\label{argu1}
$N_1^*<\infty$ a.s. and, therefore, it does not affect the sample optimality.
\end{argument}

Notice that $\bar X_1(n)=\mu_1+\sigma_1 \bar Z(n)$ statistically, where $\bar Z(n)$ is the sample mean of $n$ independent standard normal observations. Then, it is clear that \[N_1^*=\argmin_{n\ge 1} \bar X_1(n)=\argmin_{n\ge 1} \bar Z(n)<\infty\ \ a.s.\] by Lemma \ref{lem:argmin}. In the consideration of the sample optimality, the total budget $B=ck$ and $k\to\infty$. Therefore, a finite number of observations of $N_1^*$ do not affect the sample optimality.

\begin{argument}
\label{argu2}
Under the condition $\bar X_1^*>\mu_1-\gamma_0$ with $0<\gamma_0<\gamma$, there exists a constant $h>0$, which may depend on $\gamma_0$, such that $\limsup_{k\to\infty} \frac{1}{k}\sum_{i=2}^{k} N_i(\bar X_1^*)\le h$ a.s.
\end{argument}


Under the condition $\bar X_1^*>\mu_1-\gamma_0$, for every $i=2,\ldots,k$, because $\mu_1-\mu_i \ge \gamma$ and $\sigma_i^2\le\bar\sigma^2$, 
\begin{equation}\label{eqn: Nbound}
\begin{aligned}
    N_i\left(\min_{n \geq 1}\bar X_1(n)\right) =& \inf\left\{n\geq 1: \bar{X}_i(n)<  \min_{n \geq 1}\bar X_1(n)\right\}  \\
     \leq &  \inf \{n\geq 1: \bar X_i(n) < \mu_1-\gamma_0\} \\
    =&\inf\left\{n\geq 1: \frac{\bar{X}_i(n)-\mu_i}{\sigma_i} < \frac{\mu_1-\mu_i-\gamma_0}{\sigma_i}\right\}  \\
\leq &\inf\left\{n\geq 1: \bar Z_i(n) <\frac{\gamma-\gamma_0}{\bar \sigma}\right\} \\
     = &\tilde N_i\left(\frac{\gamma-\gamma_0}{\bar \sigma}\right), 
\end{aligned}
\end{equation} 
where $\tilde N_i(x)$ is the $i$th observation of $\tilde N(x)$. Therefore, by the independence of $\tilde N_i(x)$, $i=2,\ldots,k$, the strong law of large numbers and Equation (\ref{eqn:normbc1}), we have
\begin{equation}\label{eqn: AvgBC_SLLN}
\begin{aligned}
    \lefteqn{ \limsup_{k\to\infty} \frac{1}{k} \sum_{i=2}^{k} N_i(\bar X_1^*) }\\
    &\leq  \limsup_{k\to\infty} \frac{k-1}{k}  \frac{1}{k-1} \sum_{i=2}^{k} \tilde N_i\left(\frac{\gamma-\gamma_0}{\bar \sigma}\right) = \E\left[\tilde N\left(\frac{\gamma-\gamma_0}{\bar \sigma}\right)\right] = C\left(\frac{\gamma-\gamma_0}{\bar \sigma}\right) \ \ a.s.
\end{aligned}
\end{equation}

\begin{argument}
\label{argu3}
If the total budget $B=ck$ and $c> C\left(\frac{\gamma-\gamma_0}{\bar \sigma}\right)$, as $k\to\infty$, the PCS is at least $\Pr\{\bar X_1^*>\mu_1-\gamma_0\}>0$.
\end{argument}

Notice that Argument 2 holds under the condition $\bar X_1^*>\mu_1-\gamma_0$. If the condition holds, the total budget $B=ck$ is sufficient to ensure $B\ge \sum_{i=2}^{k} N_i(X_1^*) + N_1^*$, thus a correct selection by Equation (\ref{eqn:pcs2}). Therefore, by Equation (\ref{eqn:normbc3}),
\begin{eqnarray}
    {\rm PCS} &\ge&  \Pr\{\bar X_1^*>\mu_1-\gamma_0\} \nonumber\\
    &=& \Pr\left\{\min_{n\ge 1} \bar X_1(n) > \mu_1-\gamma_0\right\}= \Pr\left\{\min_{n\ge 1} \bar Z(n) > -{\frac{\gamma_0}{\sigma_1}}\right\} = \left[C\left(\frac{\gamma_0}{\sigma_1}\right)\right]^{-1}>0.\label{eqn:pcs-equiv}
\end{eqnarray}

\vspace{11pt}
With the above three arguments, we can rigorously prove the following theorem on the sample optimality of the greedy procedure. Slightly different from the above arguments, we select $\gamma_0\in (0,\gamma)$ such that the PCS lower bound may be maximized given the total budget $B=ck$. The proof is deferred to  \ref{subsec: proofGreedyOpt}.
\begin{thm}
\label{thm: Greedy_PCS_opt}
Suppose that Assumption \ref{assu:asyreg} holds. If the total sampling budget $B$ satisfies $B=ck$ and $c > C\left(\frac{\gamma}{\bar \sigma}\right)$,
the PCS of the greedy procedure satisfies
\[
\liminf\limits_{k \to \infty} {\rm PCS} \geq \Pr\left\{\min_{n\ge 1} \bar X_1(n) > \mu_1-\gamma_0\right\} = \left[C\left(\frac{\gamma_0}{\sigma_1}\right)\right]^{-1}>0,
\]
where $\gamma_0$ is a positive constant satisfying $\gamma_0 \in (0, \gamma)$ and $C\left(\frac{\gamma-\gamma_0}{\bar \sigma}\right) =c$.
\end{thm}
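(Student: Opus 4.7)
The plan is to build on the boundary-crossing bound in Equation (\ref{eqn:pcs2}) and combine the three informal arguments into a rigorous limit. The only subtle point beyond those arguments is that at $\gamma_0$ itself we have $C((\gamma-\gamma_0)/\bar\sigma)=c$, so the strong law gives $\sum_{i=2}^{k}\tilde N_i\approx (k-1)c$, which is not strictly below $B=ck$. I would circumvent this by working with an auxiliary $\tilde\gamma_0\in(0,\gamma_0)$, obtaining strict inequality via the strong law, and then taking $\tilde\gamma_0\uparrow\gamma_0$ using the continuity of $C$.

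First I would confirm that the stated $\gamma_0$ is well-defined. By Lemma \ref{lem:normalbc}, $C$ is continuous and strictly decreasing on $(0,\infty)$, with $C(x)\to\infty$ as $x\to 0^+$. Combined with $c>C(\gamma/\bar\sigma)$, the intermediate value theorem gives a unique $\gamma_0\in(0,\gamma)$ solving $C((\gamma-\gamma_0)/\bar\sigma)=c$. Next, for any fixed $\tilde\gamma_0\in(0,\gamma_0)$, let $A_{\tilde\gamma_0}=\{\bar X_1^*>\mu_1-\tilde\gamma_0\}$ and let $\tilde N_i$ denote independent copies of $\tilde N((\gamma-\tilde\gamma_0)/\bar\sigma)$ built from the normalizations in display (\ref{eqn: Nbound}). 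These variables are independent across $i\ge 2$ and independent of alternative $1$. On $A_{\tilde\gamma_0}$, the pathwise derivation in (\ref{eqn: Nbound}) (with $\gamma_0$ replaced by $\tilde\gamma_0$) gives $N_i(\bar X_1^*)\le \tilde N_i$ for every $i=2,\ldots,k$.

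Since $(\gamma-\tilde\gamma_0)/\bar\sigma>(\gamma-\gamma_0)/\bar\sigma$ and $C$ is strictly decreasing, $\E[\tilde N_i]=C((\gamma-\tilde\gamma_0)/\bar\sigma)<c$. By the strong law of large numbers applied to $\{\tilde N_i\}_{i\ge 2}$, together with $N_1^*<\infty$ a.s.\ from Lemma \ref{lem:argmin}, there exists a random $K$ such that for all $k\ge K$,
\[
\sum_{i=2}^{k} N_i(\bar X_1^*)+N_1^{*}\;\le\;\sum_{i=2}^{k}\tilde N_i+N_1^{*}\;<\;ck\;=\;B,
\]
almost surely on $A_{\tilde\gamma_0}$. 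Writing $E_k=\{\sum_{i=2}^{k}N_i(\bar X_1^*)+N_1^*\le B\}$, this means $\liminf_k \mathbf 1_{E_k}\ge \mathbf 1_{A_{\tilde\gamma_0}}$ pointwise a.s., and Fatou's lemma together with Equation (\ref{eqn:pcs2}) and Equation (\ref{eqn:normbc3}) yields
\[
\liminf_{k\to\infty}\mathrm{PCS}\;\ge\;\liminf_{k\to\infty}\Pr(E_k)\;\ge\;\Pr(A_{\tilde\gamma_0})\;=\;\bigl[C(\tilde\gamma_0/\sigma_1)\bigr]^{-1}.
\]
Finally, letting $\tilde\gamma_0\uparrow\gamma_0$ and invoking the continuity of $C$ from Lemma \ref{lem:normalbc} upgrades this to $[C(\gamma_0/\sigma_1)]^{-1}$, which is strictly positive because $C(\gamma_0/\sigma_1)<\infty$.

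The principal obstacle is the borderline nature of the SLLN at $\gamma_0$: one cannot directly apply a strong law with a mean equal to the budget-per-alternative $c$ and conclude that the sum falls below $B$. The $\tilde\gamma_0$-perturbation trick, enabled by the strict monotonicity and continuity of $C$ on $(0,\infty)$ established in Lemma \ref{lem:normalbc}, is precisely what resolves this issue and makes the argument ``almost tight'' up to a boundary limit. The remaining steps (independence of $\{\tilde N_i\}$ from alternative $1$, measurability in $k$ of $E_k$, and application of Fatou) are routine once this perturbation is in place.
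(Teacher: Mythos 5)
Your proposal is correct and follows essentially the same route as the paper's proof: both start from Equation (\ref{eqn:pcs2}), dominate the boundary-crossing times by i.i.d.\ copies of $\tilde N\left(\frac{\gamma-\tilde\gamma_0}{\bar\sigma}\right)$ on the event $\left\{\bar X_1^*>\mu_1-\tilde\gamma_0\right\}$, apply the strong law with a strictly sub-$c$ mean obtained by perturbing $\gamma_0$ downward, and then recover $\gamma_0$ by continuity of $C$ (the paper parameterizes the perturbation by $\varepsilon$ via $C\left(\frac{\gamma-\gamma_\varepsilon^-}{\bar\sigma}\right)=c-2\varepsilon$ and uses explicit probability inequalities where you use a pointwise $\liminf$ of indicators plus Fatou, but these are equivalent in substance).
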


Theorem \ref{thm: Greedy_PCS_opt} confirms that the greedy procedure is indeed sample optimal. It indicates that greedy procedure tends to perform better than existing procedures that are sub-optimal in order when solving large-scale R\&S problems, which explains the findings observed in Figure \ref{fig: simple_example}.

Another interesting observation from Theorem \ref{thm: Greedy_PCS_opt} is that the PCS goes to 1 if the best alternative has no noise, i.e., $\sigma_1=0$. This goes back to the simplified case considered in Section \ref{subsec: bc_simple}. Notice that 
\[
\limsup_{k\to\infty} \frac{1}{k}\left[\sum_{i=2}^{k} N_i(\mu_1) +1 \right] \le C\left(\frac{\gamma}{\overline\sigma}\right)\ \ a.s.
\]
Then, by Equation (\ref{eqn:pcs}), the PCS goes to 1 if $c=B/k>C\left(\frac{\gamma}{\bar\sigma}\right)$.

\subsection{Tightness of the PCS Lower Bound}
\label{subsec: tightness}

It is interesting to point out that the asymptotic PCS lower bound obtained in Theorem \ref{thm: Greedy_PCS_opt} is actually tight. Specifically, there exists a problem configuration of the means $\mu_i$ and variances $\sigma^2_i$, $i=1, \ldots, k,$ under which, given a total sampling budget $B=C\left(\frac{\gamma-\gamma_0}{\overline\sigma}\right) k$, the asymptotic PCS of the greedy procedure is exactly the probability that the best alternative's running-average process remains above the boundary $\mu_1 - \gamma_0$. This configuration is called the slippage configuration with a common variance (SC-CV), where the non-best alternatives have the same means and all alternatives have the same variances. We have the following theorem on the tightness of the asymptotic PCS.
\begin{thm}
\label{thm: Greedy_PCS_tightness}
Suppose that $\mu_1-\mu_i=\gamma > 0$ for all $i=2,\ldots,k$ and $\sigma_i^2=\sigma^2>0$ for all $i=1,\ldots,k$. Then, for any sampling budget with $B=ck$ and $c>C\left(\frac{\gamma}{ \sigma}\right)$, the PCS of the greedy procedure satisfies
\[
\lim\limits_{k \to \infty} {\rm PCS} =\Pr\left\{ \min_{n\ge 1} \bar X_1(n) > \mu_1-\gamma_0 \right\} = \left[C\left(\frac{\gamma_0}{\sigma}\right)\right]^{-1},
\]
where $\gamma_0$ is a positive constant satisfying $\gamma_0 \in (0, \gamma)$ and $C\left(\frac{\gamma-\gamma_0}{ \sigma}\right) =c$.
\end{thm}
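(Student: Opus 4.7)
The plan is to combine Theorem~\ref{thm: Greedy_PCS_opt}, which already supplies the lower bound $\liminf_{k\to\infty}{\rm PCS}\geq [C(\gamma_0/\sigma)]^{-1}$, with a matching upper bound obtained by conditioning on the running-average minimum $\bar X_1^*$ of alternative $1$ and exploiting the exchangeability of the $k-1$ inferior alternatives granted by SC-CV. The target is $\limsup_{k\to\infty}{\rm PCS}\leq [C(\gamma_0/\sigma)]^{-1}=\Pr\{\bar X_1^*>\mu_1-\gamma_0\}$, which would close the gap to Theorem~\ref{thm: Greedy_PCS_opt}.

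Under SC-CV I may write $\bar X_i(n)=\mu_1-\gamma+\sigma \bar Z_i(n)$ for $i\geq 2$ with the $\bar Z_i$'s independent copies of the standard-normal running-average process, so that $N_i(\bar X_1^*)=\tilde N_i\!\left(\gamma/\sigma+(\bar X_1^*-\mu_1)/\sigma\right)$ are conditionally i.i.d.\ given $\bar X_1^*$. By Lemma~\ref{lem:normalbc} (continuity and strict monotonicity of $C$ on $(0,\infty)$) together with the defining identity $C((\gamma-\gamma_0)/\sigma)=c$, the conditional mean of $N_i(\bar X_1^*)$ lies strictly below $c$ on $\{\bar X_1^*>\mu_1-\gamma_0\}$ and strictly above $c$ on $\{\bar X_1^*<\mu_1-\gamma_0\}$. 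The conditional strong law of large numbers, combined with Lemma~\ref{lem:argmin} ($N_1^*<\infty$ a.s.), then shows that the sufficient correct-selection event in Equation~(\ref{eqn:pcs2}) is a.s.\ eventually satisfied on $\{\bar X_1^*>\mu_1-\gamma_0\}$ and a.s.\ eventually violated on $\{\bar X_1^*<\mu_1-\gamma_0\}$.

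To finish, I will upgrade the ``violated'' observation to the full statement that CS itself, not merely the sufficient event, has conditional probability tending to $0$ on $\{\bar X_1^*<\mu_1-\gamma_0\}$. The argument I have in mind is that once the cumulative demand $\sum_{i=2}^k N_i(\bar X_1^*)$ asymptotically exceeds the total budget $B=ck$, a non-vanishing fraction of inferior alternatives must retain sample means above $\bar X_1^*$ at termination; by the exchangeability of the inferiors under SC-CV, a concentration/extreme-value bound on the maximum over these surviving inferiors then forces at least one of them to exceed $\bar X_1(n_1(B))$ with probability tending to one, contradicting CS. Integrating the two conditional regimes over $\bar X_1^*$ via dominated convergence delivers $\limsup_{k\to\infty}{\rm PCS}\leq \Pr\{\bar X_1^*>\mu_1-\gamma_0\}$ and hence the tight equality claimed in the theorem.

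The main obstacle, and the step that will need the most care, is ruling out the atypical CS paths in which alt $1$'s terminal mean $\bar X_1(n_1(B))$ drifts substantially above $\bar X_1^*$ after the global minimum is attained, so that alt $1$ could in principle dominate even the inferiors that never crossed below $\bar X_1^*$. I anticipate this will require stratifying on the terminal sample size $n_1(B)$ and accounting carefully for how much residual budget is available for alt $1$'s recovery once the inferiors have already absorbed the bulk of the $ck$ observations. This budget-accounting estimate is what I expect to carry the technical heart of the upper-bound proof.
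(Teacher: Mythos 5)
Your lower bound and the overall strategy (match the bound of Theorem \ref{thm: Greedy_PCS_opt} by showing incorrect selection is forced once alternative 1's running mean dips below $\mu_1-\gamma_0$) agree with the paper, but your upper-bound argument has a genuine gap, and it is precisely the step you flag as "the technical heart." Conditioning on the global infinite-horizon minimum $\bar X_1^*$ is the wrong object: the procedure may terminate before alternative 1 ever reaches $\argmin_{n\ge1}\bar X_1(n)$, and, as you note, the terminal mean $\bar X_1(n_1(B))$ generally sits strictly above $\bar X_1^*$, so showing that many inferiors stay above $\bar X_1^*$ does not show any of them beats alternative 1 at termination. Your proposed repair --- an exchangeability/extreme-value bound on the maximum of the surviving inferiors plus stratification on $n_1(B)$ --- is left entirely unexecuted, and it is not needed.

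The paper's device, which dissolves the obstacle rather than confronting it, is to condition on the \emph{first passage time} $m_\epsilon=\inf\{n\ge1:\bar X_1(n)<\mu_1-\gamma_\epsilon^+\}$ for a level $\gamma_\epsilon^+$ slightly larger than $\gamma_0$ (sandwiched against a second level $\gamma_\epsilon^-<\gamma_0$, with $C((\gamma-\gamma_\epsilon^\mp)/\sigma)=c\mp2\epsilon$, to handle the overshoot and the non-uniformity of the SLLN near the critical level). On the event $B_\epsilon=\{m_\epsilon<\infty,\ \min_{1\le n\le m_\epsilon-1}\bar X_1(n)\ge\mu_1-\gamma_\epsilon^-\}$, the greedy dynamics give the two-sided budget inequality in Equation \eqref{falseselection}: the budget suffices for alternative 1 to actually collect its $m_\epsilon$-th observation (the sum of crossing times of $\min_{1\le n\le m_\epsilon-1}\bar X_1(n)$ is roughly $(c-2\epsilon)k$), but does \emph{not} suffice for all $k-1$ inferiors to cross below $\bar X_1(m_\epsilon)<\mu_1-\gamma_\epsilon^+$ (that would cost roughly $(c+2\epsilon)k>B$). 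Hence alternative 1 is never sampled after time $m_\epsilon$, its terminal mean is exactly $\bar X_1(m_\epsilon)$, and by definition of the crossing times some inferior sits above it when the budget is exhausted --- incorrect selection, with no concentration or extreme-value argument and no "drift after the minimum" to control. Letting $k\to\infty$ and then $\epsilon\to0$ recovers $\liminf_k{\rm PICS}\ge\Pr\{\exists\,n:\bar X_1(n)<\mu_1-\gamma_0\}$. Without this first-passage reformulation (or an equivalent way to pin down alternative 1's terminal sample mean), your plan does not close.
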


The proof of this theorem is somewhat tedious. So we defer it to \ref{subsec: proofPCStightness}, and only introduce the main intuitions here. Considering Theorem \ref{thm: Greedy_PCS_opt}, we only need to show 
\[
\limsup_{k\to\infty}\, \mbox{PCS} \leq\Pr\left\{ \min_{n\ge 1} \bar X_1(n) > \mu_1-\gamma_0 \right\}.
\]
Equivalently, it suffices to show the probability of incorrect selection (PICS) satisfies\footnote{We ignore the event $\min_{n\ge 1} \bar X_1(n) = \mu_1-\gamma_0$ as it is a probability-zero event under the normality assumption.}
\[
  \liminf_{k\to\infty}\, \mbox{PICS} \geq \Pr\left\{ \min_{n\ge 1} \bar X_1(n) < \mu_1-\gamma_0 \right\}.
\]
This statement requires that, under the SC-CV and provided that $B=C\left(\frac{\gamma-\gamma_0}{ \sigma}\right) k$ and $k$ is sufficiently large, if the best alternative's sample mean drops below the boundary $\mu_1 - \gamma_0$, a false selection is inevitable for the greedy procedure.  

The intuition is simple. In short, under the SC-CV, when $k$ is sufficiently large, if $\bar X_1(n)$ ever drops below $\mu_1 - \gamma_0$, the total budget $B=C\left(\frac{\gamma-\gamma_0}{ \sigma}\right)k$ is not sufficient for the alternative $1$ to become the sample-best again. We take a sample-path viewpoint to explain this intuition. Let $m=\min\{n\ge 1: \bar X_1(n)<\mu_1-\gamma_0\}$ be the first time that the running-average process of $\bar X_1(n)$ falls below $\mu_1-\gamma_0$. Let $\bar X_1(m)=\mu_1-\gamma_0-\varepsilon$, where $\varepsilon>0$. For alternative $1$ to become the sample-best again, all $k-1$ inferior alternatives have to fall below $\mu_1-\gamma_0-\varepsilon$, therefore, a total budget of $\sum_{i=2}^{k} N_i(\mu_1-\gamma_0-\varepsilon)$ is needed. Notice that under the SC-CV, $N_i(\mu_1 - \gamma_0-\varepsilon)$ are independent and identically distributed. Then, $\sum_{i=2}^{k} N_i(\mu_1-\gamma_0-\varepsilon)$ is roughly $C\left(\frac{\gamma-\gamma_0-\varepsilon}{ \sigma}\right) k$. Notice that $C(\cdot)$ is a strictly decreasing function as stated in Lemma \ref{lem:normalbc}. Then, $\sum_{i=2}^{k} N_i(\mu_1-\gamma_0-\varepsilon)$ is strictly greater than the given total sampling budget $B=C\left(\frac{\gamma-\gamma_0}{ \sigma}\right) k$ when $k$ is sufficiently large, implying that the budget is insufficient and a false selection is inevitable.

Theorem \ref{thm: Greedy_PCS_tightness} indicates that the SC-CV is the worst case for the greedy procedure, and the asymptotic PCS lower bound of the greedy procedure stated in Theorem \ref{thm: Greedy_PCS_opt} cannot be improved. This result further shows that the boundary-crossing perspective introduced in Section \ref{sec:bc} fully captures the mechanism behind the surprising performance of the greedy procedure for large-scale R\&S problems. 

Theorem \ref{thm: Greedy_PCS_tightness} also provides an interesting linkage between the fixed-budget formulation and the fixed-precision formulation. Even though the greedy procedure is a fixed-budget procedure, the theorem provides an approach to turn it into a fixed-precision procedure with known and equal variances and known indifference-zone parameter ($\gamma$ in our case). In that sense, Theorem \ref{thm: Greedy_PCS_tightness} makes an interesting theoretical contribution. In the fixed-precision R\&S literature, aside of early work of \cite{bechhofer1954single}, to the best of our knowledge, only the BIZ procedure of \cite{frazier2014fully} is known to have a tight PCS bound. Most procedures, including the sample-optimal procedures such as median elimination procedures and KT procedures,  \magenta{use the Bonferroni-type inequalities (e.g., the Bonferroni's or Slepian's inequality) so that the PCS bound is not tight. Reducing the impact of the Bonferroni-type inequalities in R\&S is also an active research topic and has been explored by \cite{nelson1995using}, \cite{eckman2022posterior} and \cite{wangbonferroni}. In the analysis of the greedy procedure, the Bonferroni-type inequalities are not used and a tight PCS bound is achieved.}

Furthermore, we highlight here that, under the SC-CV, there is an upper limit of the PCS no matter how large the total budget $B$ is and the limit is strictly less than 1. 
Notice that by Theorem \ref{thm: Greedy_PCS_tightness}, $\lim\limits_{k \to \infty} {\rm PCS} =\Pr\left\{ \min_{n\ge 1} \bar X_1(n) > \mu_1-\gamma_0 \right\}$ where $\gamma_0\in (0,\gamma)$. Then, the upper bound of the PCS is reached as $\gamma_0$ approaches $\gamma$. Intuitively, under the SC-CV, if the first observation of the alternative $1$ is below $\mu_1-\gamma$, which is the mean value of all other alternatives, a false selection is inevitable. The result is summarized in the following corollary.

\begin{cor}
\label{cor: inconsistency}
Suppose that $\mu_1-\mu_i=\gamma > 0$ for all $i=2,\ldots,k$ and $\sigma_i^2=\sigma^2>0$ for all $i=1,\ldots,k$. Then, for any sampling budget with $B=ck$ and $c>C\left(\frac{\gamma}{ \sigma}\right)$, the PCS of the greedy procedure satisfies
\[
\lim\limits_{k \to \infty} {\rm PCS} \leq \Pr\left\{ \min_{n\ge 1} \bar X_1(n) > \mu_1-\gamma\right\} = \left[C\left(\frac{\gamma}{\sigma}\right)\right]^{-1}<1.
\]
\end{cor}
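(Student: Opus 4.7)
The plan is to obtain the corollary essentially as a direct consequence of Theorem \ref{thm: Greedy_PCS_tightness}, together with a monotonicity argument in the boundary parameter $\gamma_0$ and a one-line application of Equation (\ref{eqn:normbc3}). Since no new sampling analysis is needed, the proof should be short.

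First, I would invoke Theorem \ref{thm: Greedy_PCS_tightness}: for any $c>C(\gamma/\sigma)$, the limit PCS equals $\Pr\{\min_{n\ge 1}\bar X_1(n) > \mu_1-\gamma_0\}$ where $\gamma_0\in(0,\gamma)$ is the unique value satisfying $C((\gamma-\gamma_0)/\sigma)=c$; existence and uniqueness of such $\gamma_0$ follow from the continuity and strict monotonicity of $C(\cdot)$ established in Lemma \ref{lem:normalbc}. Next, since $\gamma_0<\gamma$, the event $\{\min_{n\ge 1}\bar X_1(n)>\mu_1-\gamma_0\}$ is strictly contained in $\{\min_{n\ge 1}\bar X_1(n)>\mu_1-\gamma\}$, so monotonicity of probability yields
\[
\lim_{k\to\infty}\mathrm{PCS}=\Pr\!\left\{\min_{n\ge 1}\bar X_1(n) > \mu_1-\gamma_0\right\}\le \Pr\!\left\{\min_{n\ge 1}\bar X_1(n) > \mu_1-\gamma\right\}.
\]

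Second, I would evaluate the right-hand side in closed form. Because $\bar X_1(n)\stackrel{d}{=}\mu_1+\sigma\bar Z(n)$, we have $\{\min_{n\ge 1}\bar X_1(n)>\mu_1-\gamma\}=\{\min_{n\ge 1}\bar Z(n) > -\gamma/\sigma\}$. Since $-\gamma/\sigma<0$, applying Equation (\ref{eqn:normbc3}) with $x=-\gamma/\sigma$ gives
\[
\Pr\!\left\{\min_{n\ge 1}\bar Z(n) > -\gamma/\sigma\right\}=\exp\!\left(-\sum_{n=1}^{\infty}\frac{1}{n}\Phi(-\sqrt{n}\,\gamma/\sigma)\right)=\left[C\!\left(\frac{\gamma}{\sigma}\right)\right]^{-1},
\]
which yields the stated equality.

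Finally, I would verify the strict inequality $[C(\gamma/\sigma)]^{-1}<1$. By definition $C(\gamma/\sigma)=\exp(\sum_{n=1}^\infty \frac{1}{n}\Phi(-\sqrt n\gamma/\sigma))$; each $\Phi(-\sqrt n\gamma/\sigma)>0$, so the exponent is strictly positive and thus $C(\gamma/\sigma)>1$. Equivalently, one may observe that $\tilde N(\gamma/\sigma)\ge 1$ always and exceeds $1$ with positive probability, giving $\E[\tilde N(\gamma/\sigma)]=C(\gamma/\sigma)>1$. I do not anticipate any real obstacle in this proof, since it is essentially an envelope/monotonicity argument over the free parameter $\gamma_0\in(0,\gamma)$ in Theorem \ref{thm: Greedy_PCS_tightness}; the only point requiring a touch of care is ensuring that Theorem \ref{thm: Greedy_PCS_tightness} is applicable for every admissible $c$, which is automatic because $\gamma_0$ and $c$ are in continuous bijection under Lemma \ref{lem:normalbc}.
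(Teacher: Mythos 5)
Your proposal is correct and follows essentially the same route as the paper, which likewise derives the corollary directly from Theorem \ref{thm: Greedy_PCS_tightness} by noting that $\gamma_0<\gamma$ and applying monotonicity. The additional details you supply (the closed-form evaluation via Equation (\ref{eqn:normbc3}) and the check that $C(\gamma/\sigma)>1$) are left implicit in the paper but are consistent with its argument.
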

\begin{proof}{Proof.}
The conclusion is straightforward by noticing that $\gamma_0<\gamma$ in Theorem \ref{thm: Greedy_PCS_tightness}. \hfill\Halmos
\end{proof}

Corollary \ref{cor: inconsistency} reveals a drawback of the greedy procedure, i.e., it is not consistent. Even with $c\to\infty$, the total budget of $B=ck$ is not sufficient to ensure that the worst-case PCS goes to 1. We discuss how to solve this problem in Section \ref{sec: ExploreFirst}.

\subsection{Insights behind the Sample Optimality}

Traditional procedures cannot be sample optimal mainly because they typically view the PCS as the probability that the best alternative beats all $k-1$ inferior alternatives. 
Then, as $k$ increases, more pairwise comparisons are involved and therefore, to keep the PCS from decreasing to zero, the procedures must increase the budget allocation in each pairwise comparison. As a consequence, the total sampling budget $B$ should grow faster than order of $k$ as $k\rightarrow \infty$, leading to the lack of sample optimality of the procedures. To alleviate this, both the ME and KT procedures adopt a halving structure in which the number of necessary pairwise comparisons is reduced. Particularly, the best alternative only has to beat $\lceil \log_2 k \rceil$ alternatives or survive $\lceil \log_2 k \rceil$ rounds to ensure a correct selection. This helps to reduce the growth order of the sampling budget and yield the sample optimality.

The greedy procedure achieves the sample optimality in a completely different manner. Instead of reducing the number of necessary pairwise comparisons, the greedy procedure achieves the sample optimality by aggregating and removing the randomness in the $k-1$ pairwise comparisons as $k\to\infty$. From the boundary-crossing perspective, for the inferior alternative involved in each pairwise comparison, its received sampling budget can be represented by its boundary-crossing time with respect to the best. Although each boundary-crossing time is random, it is found from  the strong law of large numbers that their summation grows linearly in $k$ in a deterministic way. In other words, the randomness of the total sampling budget of all inferior alternatives is removed as $k\to\infty$. Given this, it is intuitively clear why the greedy procedure can be sample optimal without the halving structure.

\subsection{From Correct Selection to Good Selection}
\label{subsec: GreedyOptPGS}

In the previous subsections, we analyze the greedy procedure with the objective of selecting the unique best. In this subsection, we consider an alternative objective of selecting alternatives that are good enough. An alternative $i$ is regarded as \textit{good} (or acceptable) if the mean $\mu_i$ is within an indifference zone (IZ)  of the best mean $\mu_1$, controlled by a user-specified IZ parameter $\delta > 0$, i.e., $\mu_i > \mu_1 - \delta$. A good correction occurs if any one of the good alternatives is selected by the R\&S procedure, and its probability is referred to as the probability of good selection (PGS).  The objective of achieving good selection is of particular interest for large-scale R\&S problems \citep{ni2017efficient}, because, when the number of alternatives $k$ is large, selecting the best may be prohibitively difficult while selecting a good alternative is generally easier. In this subsection, we analyze whether the greedy procedure is also sample optimal with respect to the PGS.

We also take the boundary-crossing perspective. Let the set $\cal G$ denote the set of good alternatives and the set $\cal N$ denote the rest of the alternatives. Notice that $|{\cal G}|+|{\cal N}|=k$. Let $\bar Y_\delta(r)=\max_{i\in{\cal G}} \bar X_i(n_i)$, where $n_i$ denotes the sample size of alternative $i$ and $r=\sum_{i\in{\cal G}} n_i$. Notice that 
\[
\min_{r\ge 1} \bar Y_\delta(r) = \max_{i\in{\cal G}}\ \min_{n\ge 1}\bar X_i(n).
\]
Then, analogous to Equation (\ref{eqn:pcs2}), we have
\begin{equation}\label{eqn:pgs}
{\rm PGS}\ge \Pr\left\{\sum_{i\in{\cal N}} N_i\left(\min_{r\ge 1} \bar Y_\delta(r)\right)+ \argmin_{r\ge 1} \bar Y_\delta(r)\leq B \right\}.
\end{equation}
However, Equation (\ref{eqn:pgs}) is significantly more difficult to analyze than Equation (\ref{eqn:pcs2}), because it is difficult to find a tight upper bound of $\argmin_{r\ge 1} \bar Y_\delta(r)$. One bound that one may use is that
\begin{equation}\label{eqn:pgs2}
\argmin_{r\ge 1} \bar Y_\delta(r) \le \sum_{i\in{\cal G}} \argmin_{n\ge 1}\bar X_i(n),
\end{equation}
i.e., the $\bar Y_\delta (r)$ process reaches its minimum before all $\bar X_i(n)$ processes reach their minimums for all $i\in{\cal G}$. However, because for each alternative $i \in \mathcal{G}$ the $\argmin_{n\ge 1}\bar X_i(n)$ does not admit a finite expectation\footnote{Notice that for each alternative $i$, if the running-average process ever falls below the true mean $\mu_i$, i.e., $N_i(\mu_i) < \infty$, the inequality $\argmin_{n\ge 1}\bar X_i(n) \geq N_i(\mu_i)$ naturally holds. Since $N_i(\mu_i)= \tilde N_i(0)$, by Lemma \ref{lem:normalbc} we further have that $N_i(\mu_i)<\infty$ with probability 1 and $\E[ N_i(\mu_i)]=\infty$. Thus, $\E\left[\argmin_{n\ge 1}\bar X_i(n)\right] \geq \E[ N_i(\mu_i)]=\infty$.}, we cannot use Equation (\ref{eqn:pgs2}) to bound $\argmin_{r\ge 1} \bar Y_\delta(r)$ by $c_1|{\cal G}|$, and thus $c_1 k$, for some constant $c_1>0$ when $|{\cal G}|\to\infty$ as $k\to\infty$. Therefore, we can only prove the following proposition that assumes that $|{\cal G}|<\infty$ as $k\to\infty$, i.e., the number of good alternatives is bounded by a constant no matter how large $k$ is. The proof of the proposition is included in  \ref{subsec: proofgreedyPGS}.

\begin{prop}
\label{prop: Greedy_PGS_opt}
Suppose that $\sigma^2_i \leq \bar \sigma^2 < \infty$ for all $i=1,\ldots,k$ and $\limsup_{k\to\infty}|{\cal G}|<\infty$ for a given $\delta>0$. If the total sampling budget $B$ satisfies $B/k = c$ and $c > C\left(\frac{\delta}{\bar \sigma}\right)$, the PGS of the greedy procedure satisfies
\[
\liminf\limits_{k \to \infty} {\rm PGS} \geq \limsup_{k\to\infty}\Pr\left\{ \exists\ i\in{\cal G} : \min_{n\ge 1}\bar X_{i}(n) > \mu_1-\delta_0 \right\}\geq \Pr\left\{\min_{n\ge 1}\bar X_{1}(n) > \mu_1-\delta_0\right\},
\]
where $\delta_0$ is a positive constant satisfying $\delta_0 \in (0, \delta)$ and $C\left(\frac{\delta-\delta_0}{\bar \sigma}\right) =c$.

\end{prop}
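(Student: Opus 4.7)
The plan is to adapt the three-argument strategy behind Theorem \ref{thm: Greedy_PCS_opt}, replacing the boundary $\bar X_1^*$ with $\min_{r\ge 1}\bar Y_\delta(r)$ and introducing a small amount of slack in the ``confidence'' event. Starting from the boundary-crossing lower bound \eqref{eqn:pgs}, the aim is to identify a high-probability event on which $\frac{1}{k}\sum_{i\in{\cal N}}N_i(\min_{r\ge 1}\bar Y_\delta(r))$ is asymptotically strictly below $c$, while $\arg\min_{r\ge 1}\bar Y_\delta(r)=o(k)$ almost surely, so that the budget $B=ck$ suffices.

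First, I would fix an arbitrary $\delta_0' \in (0,\delta_0)$, for which $C((\delta-\delta_0')/\bar\sigma) < c$ strictly by the continuity and strict monotonicity of $C$ in Lemma \ref{lem:normalbc}, and condition on the event $A_k(\delta_0')=\{\exists\,i\in{\cal G}: \min_{n\ge 1}\bar X_i(n) > \mu_1-\delta_0'\}$. On this event, $\min_{r\ge 1}\bar Y_\delta(r)=\max_{i\in{\cal G}}\min_{n\ge 1}\bar X_i(n) > \mu_1-\delta_0'$, so repeating the chain in \eqref{eqn: Nbound} with $\delta$ in place of $\gamma$ gives $N_i(\min_{r\ge 1}\bar Y_\delta(r)) \le \tilde N_i((\delta-\delta_0')/\bar\sigma)$ for every $i\in{\cal N}$ (since $\mu_i\le\mu_1-\delta$ and $\sigma_i\le\bar\sigma$). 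The variables $\tilde N_i((\delta-\delta_0')/\bar\sigma)$ are i.i.d.\ with finite mean $C((\delta-\delta_0')/\bar\sigma)$ by Lemma \ref{lem:normalbc}, and $|{\cal N}|/k \to 1$ because $|{\cal G}|$ is bounded, so the strong law of large numbers yields $\limsup_{k\to\infty}\frac{1}{k}\sum_{i\in{\cal N}}N_i(\min_{r\ge 1}\bar Y_\delta(r)) \le C((\delta-\delta_0')/\bar\sigma) < c$ almost surely on $A_k(\delta_0')$.

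Next I would handle the argmin term via \eqref{eqn:pgs2}, i.e., $\arg\min_{r\ge 1}\bar Y_\delta(r) \le \sum_{i\in{\cal G}}\arg\min_{n\ge 1}\bar X_i(n)$. Because $\limsup_k|{\cal G}|<\infty$ and the sets ${\cal G}_k$ are nested in $k$, they stabilize at a finite set ${\cal G}_\infty$ for all $k$ sufficiently large; since each summand is almost surely finite by Lemma \ref{lem:argmin}, the right-hand side becomes a single fixed random variable for large $k$ and hence is $o(k)$ almost surely. Combining with the previous step, on $A_k(\delta_0')$ the budget constraint $\sum_{i\in{\cal N}}N_i + \arg\min_{r\ge 1}\bar Y_\delta(r) \le ck=B$ holds for all $k$ large enough almost surely, so dominated convergence on the indicator function gives $\liminf_{k\to\infty}{\rm PGS} \ge \liminf_{k\to\infty}\Pr(A_k(\delta_0'))$. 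The stabilization of ${\cal G}_k$ turns this $\liminf$ into an ordinary limit $\Pr(A_\infty(\delta_0'))$; letting $\delta_0'\nearrow\delta_0$ and using monotone convergence together with the continuity of the distribution of each $\min_{n\ge 1}\bar X_i(n)$ (which implies $\Pr\{\min_{n\ge 1}\bar X_i(n) = \mu_1-\delta_0\}=0$) then raises the bound to $\Pr(A_\infty(\delta_0)) = \limsup_{k\to\infty}\Pr(A_k(\delta_0))$, proving the first inequality. The second inequality is immediate because $1\in{\cal G}$ (since $\mu_1>\mu_1-\delta$), so $\{\min_{n\ge 1}\bar X_1(n)>\mu_1-\delta_0\}\subseteq A_k(\delta_0)$ for every $k$.

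The main obstacle is controlling $\arg\min_{r\ge 1}\bar Y_\delta(r)$; as the paper emphasizes just before the proposition, $\E[\arg\min_{n\ge 1}\bar X_i(n)]=\infty$ for every $i$, which means the clean SLLN averaging used for the $N_i$'s cannot be transplanted to the argmin. The assumption $\limsup_k|{\cal G}|<\infty$ is precisely what rescues the argument: it turns $\sum_{i\in{\cal G}}\arg\min_{n\ge 1}\bar X_i(n)$ into a bounded-length sum of almost-surely-finite random variables, which is automatically $o(k)$ even though each individual term has an infinite mean. Everything else is a faithful adaptation of the PCS proof of Theorem \ref{thm: Greedy_PCS_opt}, but this delicate treatment of the good set is what makes the PGS extension go through and also explains why the case $|{\cal G}|\to\infty$ requires a genuinely different analysis (as the paper indicates it will pursue separately).
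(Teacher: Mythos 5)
Your proposal is correct in substance and follows essentially the same route as the paper's proof: take $\min_{r\ge 1}\bar Y_\delta(r)$ as the boundary in \eqref{eqn:pgs}, introduce a slack parameter so that the relevant $C(\cdot)$ value sits strictly below $c$, dominate each $N_i$ for $i\in{\cal N}$ by an i.i.d.\ copy $\tilde N_i\left(\frac{\delta-\delta_0'}{\bar\sigma}\right)$ exactly as in \eqref{eqn: Nbound}, apply the strong law of large numbers, and dispose of the argmin term via \eqref{eqn:pgs2} together with Lemma \ref{lem:argmin} and the boundedness of $|{\cal G}|$. The one step not licensed by the hypotheses is your claim that the good sets are nested in $k$ and stabilize at a finite ${\cal G}_\infty$: the proposition assumes only $\limsup_{k\to\infty}|{\cal G}|<\infty$, and since the collection of alternatives (and hence which ones are good) changes with $k$, no such stabilization is available; the paper instead simply carries the $\limsup_{k}$ on $\Pr\left\{\exists\, i\in{\cal G}:\min_{n\ge 1}\bar X_i(n)>\mu_1-\delta_0\right\}$ through to the statement. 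This affects only the middle expression of the chain; the final lower bound $\Pr\left\{\min_{n\ge 1}\bar X_1(n)>\mu_1-\delta_0\right\}$ is unaffected, since alternative $1$ belongs to ${\cal G}$ for every $k$ and that probability does not depend on $k$.
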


Notice that the correct selection is a special case of the good selection, where there is only one good alternative. Therefore, Theorems \ref{thm: Greedy_PCS_opt} and \ref{thm: Greedy_PCS_tightness} are special cases of Proposition \ref{prop: Greedy_PGS_opt}. 
\magenta{Moreover, Proposition \ref{prop: Greedy_PGS_opt} is versatile enough to encompass scenarios where multiple alternatives share an equally best mean performance. For these scenarios, selecting any of these alternatives is considered correct. It is evident that the associated PCS is equivalent to the PGS when $\delta$ is set as the mean difference between the ``tied" best alternatives and others with strictly smaller mean performances.  As a result, we obtain a version of Theorem \ref{thm: Greedy_PCS_opt} that accounts for multiple tied-best alternatives.}
Furthermore, from the proposition, we can see that a larger $\delta$ value often leads to a larger PGS because we may choose a large $\delta_0$ with the same budget. For the more general case where ${\cal G}$ may contain an infinite number of alternatives as $k\to\infty$, we need a better upper bound of $\argmin_{r\ge 1} \bar Y_\delta(r)$. We leave it as a topic for future research.

\magenta{Apart from the PCS and PGS discussed above, the expected opportunity cost (EOC) serves as another widely used measure for procedure effectiveness. As a comparison, the EOC measures the linear loss caused by a possible false selection, while the PCS and PGS measures the 0-1 loss.}
\magenta{As demonstrated in the literature \citep{chick2001new, branke2007selecting, gao2017new}, the procedures designed for one measure often exhibit favorable performance when evaluated in terms of another. In light of this connection, examining whether the greedy procedure remains sample optimal in terms of the EOC would be of interest. However, we think that solving this issue is beyond the scope of this paper, and we leave it as a topic for future research.}

\section{The Explore-First Greedy Procedure}
\label{sec: ExploreFirst}

Despite the greedy procedure being sample optimal, it has an important theoretical drawback. As shown in Corollary \ref{cor: inconsistency}, even as the $c$ in the total sampling budget $B=c k$ grows to infinity, the limiting PCS is strictly less than 1,  indicating that the greedy procedure is inconsistent. 
In the literature, fixed-budget R\&S procedures are typically shown to bear the property of consistency (see, e.g., \citealt{frazier2008knowledge}, \citealt{wu2018analyzing} and \citealt{chen2022balancing}). We believe that it is a fundamental requirement for fixed-budget R\&S procedures. Therefore, in this section, we investigate how to strengthen the greedy procedure to make it consistent.

The conventional definition of consistency for R\&S procedures only considers  a fixed problem size $k$ \citep{frazier2008knowledge}. For a total sampling budget $B$, it examines whether $\lim_{B \to \infty} \rm{PCS} = 1$ or equivalently, whether there exists a correspondent $B$ such that $\rm{PCS} \geq \alpha$ for any $\alpha \in (0, 1)$. 
However, in defining and analyzing the sample optimality of fixed-budget procedures, the asymptotic regime of letting $k \to \infty$  is involved. Therefore, the conventional
definition of consistency is extended here to account for $k$ explicitly. In this paper, we define the consistency of sample-optimal fixed-budget procedures as follows. 

\begin{defn}
\label{def: large_scale_consistency}
A sample-optimal fixed-budget R\&S procedure is consistent if the PCS of the procedure satisfies that, for any $\alpha \in (0, 1) $, there exists a constant $c>0$ such that,
 $$\liminf\limits_{k \to \infty}\, {\rm PCS} \geq  \alpha, \mbox{ for } B=ck.$$
\end{defn}

The above definition requires that, as long as the total sampling budget grows faster than the order of $k$ (e.g., in the order of $k\log \log k$), the PCS should converge to 1 even when $k$ increases to infinity. Notice that this requirement is impossible to achieve for those non-sample-optimal procedures. Take the OCBA procedure as an example. For the procedure, a total sampling budget that grows in the order of $k\log \log k$ does not even suffice to maintain a non-zero PCS as the problem size $k$ becomes sufficiently large.

\subsection{The Procedure Design}
\label{subsec: EFGdesign}
The reason behind the greedy procedure's inconsistency is easy to understand. Consider the SC-CV configuration aforementioned.  Corollary \ref{cor: inconsistency} states that no matter how large the total sampling budget is, the limiting PCS of the greedy procedure is upper bounded by $\Pr\left\{\min_{n\ge 1} \bar X_{1}(n) > \mu_1 -\gamma\right\}$. Therefore, if $\bar X_1(n)$ ever falls below $\mu_1-\gamma$, an incorrect selection will occur no matter how large the sampling budget is.

To resolve the inconsistency issue, instead of allocating one observation to each alternative at the beginning of the procedure, we allocate $n_0$ observations to each alternative. This may be viewed as adding an exploration phase to the purely exploitative greedy procedure. We name the new procedure the explore-first greedy (EFG) procedure, and its detailed description is listed in Procedure \ref{procedure: ExploreFirstGreedy}.

\begin{procedure}[ht]
\caption{\textbf{Explore-First Greedy Procedure}}
\label{procedure: ExploreFirstGreedy}
\begin{algorithmic}[1]
\REQUIRE {$k$ alternatives $X_1,\ldots,X_k$}, the total sampling budget $B$, and the exploration size $n_0$.
\STATE For all $i=1,\ldots,k$, take $n_0$ independent observations $X_{i1},\ldots,X_{in_0}$ from alternative $i$, set $\bar X_i(n_0)=\frac{1}{n_0}\sum_{j=1}^{n_0} X_{ij}$, and let $n_i=n_0$. 
\WHILE{$\sum_{i=1}^k n_i < B$}
\STATE Let $s = \arg\max_{i=1,\ldots,k}  \bar{X}_{i}\left(n_i\right)$ and take one observation $x_{s}$ from alternative $s$;
\STATE Update $\bar X_{s}(n_{s}+1) = \frac{1}{n_{s}+1}\left[n_{s}\bar X_{s}(n_s) + x_s\right]$ and let $n_s = n_s+1$;
\ENDWHILE
 \STATE Select $\arg\max_{i\in\{1,\ldots,k\}} \bar{X}_{i}\left(n_i\right)$  as the best.
\end{algorithmic}
\end{procedure}

In the EFG procedure, roughly speaking, the greedy phase plays the role of maintaining the sample optimality, while the exploration phase plays the role of increasing the PCS. Without the greedy phase, the exploration phase itself becomes the equal allocation procedure that is sub-optimal in order \citep{hong2022solving}. Provided with a total sampling budget of $B=c k$, the EFG procedure allocates a sampling budget $n_0 k$ to the exploration phase and finishes the greedy phase with the remaining sampling budget $n_g k$ where $n_g= c- n_0$. After the total sampling budget is exhausted, the alternative with the largest sample mean is selected as the best. Notice that the greedy procedure is a special case of the EFG procedure with $ n_0=1$.

\subsection{Preliminaries on Running Average}
\label{subsec: runningave_explore}

Similar to Section \ref{sec: GreedyOpt}, we may establish the sample optimality and the consistency of the EFG procedure by analyzing the associated running-average process $\{\bar{X}_i(n), n=n_0,n_0+1,\ldots\}$, for each alternative $i=1,2,\ldots,k$, which starts from $n=n_0$. Following Section \ref{subsec: runningaverage}, we consider the running-average process $\{\bar{Z}(n), n=n_0,n_0+1,\ldots\}$ of the standard normal distribution. First, we show in the following lemma that it still reaches its minimum within a finite number of observations almost surely, regardless of the starting time $n_0$. The proof is straightforward based on Lemma~\ref{lem:argmin}, and we provide it in \ref{subsec: proofargminEFG}.
\begin{lem}\label{lem:argmin_EFG}
    The running-average process $\{\bar Z(n), n=n_0,n_0+1,\ldots\}$ with $n_0\ge 1$ reaches its minimum in a finite number of observations almost surely, i.e., $\Pr\left\{\argmin_{n\geq n_0}\bar Z(n)<\infty\right\}=1$.
\end{lem}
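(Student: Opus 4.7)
The plan is to reduce this statement to Lemma~\ref{lem:argmin}, which is the special case $n_0 = 1$. I would start by letting $N^* := \argmin_{n \geq 1} \bar Z(n)$, which is almost surely finite by Lemma~\ref{lem:argmin}, and then split on the location of $N^*$ relative to $n_0$.

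On the event $\{N^* \geq n_0\}$, the global minimizer of $\bar Z(\cdot)$ already lies in the range $n \geq n_0$, so $\argmin_{n \geq n_0} \bar Z(n) = N^*$, which is finite. This case is immediate.

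On the complementary event $\{N^* < n_0\}$, the argument requires more care. My plan is to re-run the same reasoning that gives Lemma~\ref{lem:argmin} but with starting index $n_0$ in place of $1$. The two ingredients are: (i) by the strong law of large numbers, $\bar Z(n) \to 0$ almost surely, so the sequence is eventually trapped in any arbitrarily small neighborhood of $0$; and (ii) by the recurrence of the Gaussian random walk $S_n = n \bar Z(n) = \sum_{j=1}^n Z_j$, there almost surely exists some $m \geq n_0$ with $\bar Z(m) < 0$. Combining (i) and (ii) implies that the infimum of $\bar Z(n)$ over $n \geq n_0$ is strictly negative, and that only finitely many terms with $n \geq n_0$ can fall below any strictly negative threshold, so the infimum is attained at a finite index almost surely.

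The main obstacle I foresee is rigorously dismissing the event $\{\bar Z(n) \geq 0 \text{ for all } n \geq n_0\}$ as a null event, since this is exactly what would prevent the infimum from being attained. I would handle it by conditioning on $S_{n_0}$ and invoking the standard fact that a Gaussian random walk with zero-mean increments is recurrent and unbounded below, so it crosses any level $-S_{n_0}$ in finite time with probability one. With this in hand, the argument from Lemma~\ref{lem:argmin} transfers almost verbatim, which is why the lemma is described as a straightforward extension of the $n_0 = 1$ case.
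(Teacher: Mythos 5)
Your argument is correct and follows essentially the same route as the paper: both proofs reduce the claim to showing that $\left\{\bar Z(n)\geq 0 \text{ for all } n\geq n_0\right\}$ is a null event, since otherwise the strong law of large numbers would force the running minimum after time $n_0$ to decrease to $0$ without ever being attained. The only difference is in how that null event is dismissed --- the paper cites $\Pr\{\tilde N(0;n_0)=\infty\}=0$ from Lemma \ref{lem:normalbc_EFG}, which rests on the sandwich inequality $\tilde N(x)\leq \tilde N(x;n_0)\leq n_0\tilde N\left(x/\sqrt{n_0}\right)$ and Siegmund's boundary-crossing formula, whereas you invoke the recurrence (unboundedness below) of the mean-zero Gaussian random walk after conditioning on $S_{n_0}$, an equally valid and somewhat more elementary justification; your initial case split on whether $\argmin_{n\geq 1}\bar Z(n)$ exceeds $n_0$ is harmless but not needed.
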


Secondly, let $\tilde{N}(x;n_0)=\inf\{n\geq n_0: \bar{Z}(n)<x\}$. Notice that when $n_0=1$, $\tilde{N}(x;n_0)$ becomes the boundary-crossing time $\tilde{N}(x)$ that is discussed in Section \ref{subsec: runningaverage}. There is an interesting relationship between $\tilde{N}(x;n_0)$ and $\tilde{N}(x)$. Specifically, we have
\begin{eqnarray*}
\lefteqn{ \inf\{n\geq 1: \bar{Z}(n)<x\}\le \inf\{n\geq n_0: \bar{Z}(n)<x\} }\\
&\le & n_0\inf\{m\geq 1: \bar{Z}(mn_0)<x\} = n_0\inf\left\{m\geq 1: \bar{Z}(m)<{\frac{x}{\sqrt{n_0}}}\right\},
\end{eqnarray*}
where the last equality follows the fact that $\bar{Z}(mn_0)=\sqrt{n_0}\,\bar{Z}(m)$ statistically. Then, the inequality above can be rewritten as
\begin{equation}\label{eqn: NvsNE}
\tilde{N}(x)\leq \tilde{N}\left(x;n_0\right)  \leq n_0 \tilde{N}\left(\frac{x}{\sqrt{n_0}}\right).
\end{equation}
Combining Equation (\ref{eqn: NvsNE}) and Lemma~\ref{lem:normalbc}, we can easily prove the following lemma, whose proof is omitted.
\begin{lem}\label{lem:normalbc_EFG}
  Given any positive integer $n_0$, then we have:
  \begin{enumerate}[(1)]
      \item for any $x>0$, $\Pr\left\{\tilde N(x;n_0)<\infty\right\}=1$ and $\E\left[\tilde{N}(x;n_0)\right]<\infty$;
      \item for any $x<0$, $0<\Pr\left\{\tilde N(x;n_0)<\infty\right\}<1$ and $\E\left[\tilde{N}(x;n_0)\right]=\infty$;
      \item for $x=0$, $\Pr\left\{\tilde N(x;n_0)<\infty\right\}=1$ and $ \E\left[\tilde N(x;n_0)\right]=\infty$.
  \end{enumerate}
\end{lem}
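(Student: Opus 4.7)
The plan is to derive all three parts by sandwiching $\tilde{N}(x;n_0)$ between $\tilde{N}(x)$ and $n_0\tilde{N}(x/\sqrt{n_0})$ via Equation~(\ref{eqn: NvsNE}), and then invoking the corresponding conclusions of Lemma~\ref{lem:normalbc}. The key observation is that scaling the threshold from $x$ to $x/\sqrt{n_0}$ preserves the sign of $x$, so both sides of the sandwich fall into the same case of Lemma~\ref{lem:normalbc} as $\tilde{N}(x;n_0)$ itself. Consequently, no additional probabilistic input beyond the sandwich and Lemma~\ref{lem:normalbc} is needed.

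For part~(1) with $x>0$, I would use only the upper bound. Since $x/\sqrt{n_0}>0$, Lemma~\ref{lem:normalbc} gives $\Pr\{\tilde{N}(x/\sqrt{n_0})<\infty\}=1$ and $\E[\tilde{N}(x/\sqrt{n_0})]=C(x/\sqrt{n_0})<\infty$; multiplying by the constant $n_0$ transfers both properties to $\tilde{N}(x;n_0)$. Part~(3) with $x=0$ is analogous at the level of almost-sure finiteness: the upper bound reads $\tilde{N}(0;n_0)\le n_0\tilde{N}(0)$, and Lemma~\ref{lem:normalbc} supplies $\Pr\{\tilde{N}(0)<\infty\}=1$. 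The divergence $\E[\tilde{N}(0;n_0)]=\infty$ is then obtained from the lower bound $\tilde{N}(0)\le \tilde{N}(0;n_0)$ combined with $\E[\tilde{N}(0)]=\infty$.

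Part~(2) with $x<0$ is the only case that genuinely uses both sides of the sandwich. The upper bound gives $\Pr\{\tilde{N}(x;n_0)<\infty\}\ge \Pr\{\tilde{N}(x/\sqrt{n_0})<\infty\}>0$, since Lemma~\ref{lem:normalbc} ensures that the tail at a negative threshold is strictly positive. The lower bound $\tilde{N}(x)\le \tilde{N}(x;n_0)$ yields the event inclusion $\{\tilde{N}(x;n_0)<\infty\}\subseteq\{\tilde{N}(x)<\infty\}$, so $\Pr\{\tilde{N}(x;n_0)<\infty\}\le \Pr\{\tilde{N}(x)<\infty\}<1$, and the same lower bound forces $\E[\tilde{N}(x;n_0)]\ge \E[\tilde{N}(x)]=\infty$.

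I do not anticipate any real obstacle: each claim is an immediate consequence of the sandwich and a previously established property. The only point requiring care is bookkeeping in part~(2), where the upper and lower bounds play complementary roles, the upper bound supplying the lower estimate on $\Pr\{\tilde{N}(x;n_0)<\infty\}$ while the lower bound supplies both the upper estimate and the divergence of the expectation.
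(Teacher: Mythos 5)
Your proof is correct and takes exactly the route the paper intends: the paper states that the lemma follows by combining the sandwich in Equation~(\ref{eqn: NvsNE}) with Lemma~\ref{lem:normalbc} and omits the details, which are precisely what you supply. Your bookkeeping of which side of the sandwich delivers which bound in each of the three sign cases is accurate.
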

Unlike in Lemma~\ref{lem:normalbc}, however, the exact expression of $\E\left[\tilde{N}(x;n_0)\right]$ is not available when $n_0 \geq 2$.
For the sake of easy presentation, we let $C(x; n_0) = \E\left[\tilde{N}(x;n_0)\right]$ in the rest of this paper.

\subsection{Sample Optimality}
\label{subsec: EFGrateOPT}
To analyze the PCS of the EFG procedure, we use again the boundary-crossing perspective. Similar to Section \ref{subsec: bc_general}, we let $\bar X_1^*(n_0) = \min_{n\ge n_0} \bar X_1(n)$ be the minimum of the running-average process of the alternative $1$ with $n \ge n_0$, and use it as the boundary. Then, let $N_i(\bar X_1^*(n_0); n_0)=\inf\left\{n\ge n_0: \bar X_i(n)<\bar X_1^*(n_0) \right\}$ denote alternative $i$'s boundary-crossing time with respect to the boundary $\bar X_1^*(n_0)$ for all $i=2,\ldots,k$. As in the case of $n_0=1$ in Section \ref{subsec: bc_general}, once the alternative $1$  becomes the sample best after it reaches its minimum $\bar X_1^*(n_0)$, it remains as the best until all budget is exhausted. Let $N_1^*(n_0)=\argmin_{n\ge n_0} \bar X_1(n)$. Similar to Equation \eqref{eqn:pcs2}, the PCS of the EFG procedure satisfies
\begin{equation}
\label{eqn:pcs_EFG}
{\rm PCS}\ge \Pr\left\{\sum_{i=2}^{k} N_i(\bar X_1^*(n_0); n_0) +N_1^*(n_0) \le B\right\},
\end{equation}
where $B$ is the total sampling budget. Notice that Equation \eqref{eqn:pcs_EFG} generalizes Equation \eqref{eqn:pcs2} from $n_0=1$ to $n_0 \geq 1$.

Combing Equation \eqref{eqn:pcs_EFG} and Lemma \ref{lem:normalbc_EFG}, we are able to prove the following theorem on the sample optimality of the EFG procedure and also the asymptotic tightness under the SC-CV. The proof almost replicates the proofs of Theorems \ref{thm: Greedy_PCS_opt} and \ref{thm: Greedy_PCS_tightness} with simple modifications on the necessary notation, and we include it in  \ref{subsec: proofEFGopt}.

\begin{thm}
\label{thm: ExploreFirstGreedy_opt}
Suppose that Assumption \ref{assu:asyreg} holds. 
 For any $n_0 \geq 1$,
if the total sampling budget $B$ satisfies
$B/k = (n_0 + n_g)$ and $n_g  >C \left(\frac{\gamma}{ \bar \sigma}; n_0\right)-n_0$, 
the PCS of the EFG procedure satisfies
\[
\liminf\limits_{k \to \infty} {\rm PCS} \geq \Pr\left\{\min_{n\ge n_0} \bar X_1(n) > \mu_1-\gamma_0\right\},
\]
where $\gamma_0$ is a positive constant satisfying $\gamma_0 \in (0, \gamma)$ and $C\left(\frac{\gamma-\gamma_0}{\bar \sigma}; n_0\right) =n_g+n_0$. Furthermore, if for every non-best alternative $i=2,\ldots,k$, $\mu_i=\mu_1-\gamma$, and for every alternative $i=1,\ldots,k$, $\sigma_i^2 = \sigma^2>0$, 
the PCS of  the EFG procedure satisfies
\[
\lim\limits_{k \to \infty} {\rm PCS} = \Pr\left\{\min_{n\ge n_0} \bar X_1(n) > \mu_1-\gamma_0\right\}.
\]
\end{thm}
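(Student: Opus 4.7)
The plan is to retrace the arguments of Theorems \ref{thm: Greedy_PCS_opt} and \ref{thm: Greedy_PCS_tightness}, replacing the $n=1$ starting point of every running-average process by $n=n_0$. All the raw material needed is already in place: Lemma~\ref{lem:argmin_EFG} gives $N_1^*(n_0)<\infty$ almost surely; Lemma~\ref{lem:normalbc_EFG} says $\tilde N(x;n_0)$ has a finite expectation $C(x;n_0)$ whenever $x>0$; and Equation~\eqref{eqn:pcs_EFG} is the $n_0$-generalization of the boundary-crossing PCS lower bound~\eqref{eqn:pcs2}.

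For the first (lower-bound) assertion, I would start from Equation~\eqref{eqn:pcs_EFG} and condition on the event $\{\bar X_1^*(n_0)>\mu_1-\gamma_0\}$ for some $\gamma_0\in(0,\gamma)$ to be chosen. On that event, the same chain of inequalities as in \eqref{eqn: Nbound} yields $N_i(\bar X_1^*(n_0);n_0)\le \tilde N_i((\gamma-\gamma_0)/\bar\sigma;\,n_0)$ for every $i=2,\ldots,k$, where the $\tilde N_i(\cdot;n_0)$ are i.i.d.\ copies of $\tilde N(\cdot;n_0)$. By Lemma~\ref{lem:normalbc_EFG} these copies are integrable with mean $C((\gamma-\gamma_0)/\bar\sigma;n_0)$, so the strong law of large numbers gives
$$\limsup_{k\to\infty}\frac{1}{k}\sum_{i=2}^k N_i(\bar X_1^*(n_0);n_0)\le C\!\left(\tfrac{\gamma-\gamma_0}{\bar\sigma};\,n_0\right)\quad\text{a.s.,}$$
while $N_1^*(n_0)/k\to 0$ a.s.\ because Lemma~\ref{lem:argmin_EFG} makes it a finite random variable independent of $k$. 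Choosing $\gamma_0$ so that $C((\gamma-\gamma_0)/\bar\sigma;n_0)=n_0+n_g$, which is feasible under the hypothesis $n_g>C(\gamma/\bar\sigma;n_0)-n_0$ together with the continuity and strict monotonicity of $C(\cdot;n_0)$ in its first argument, the inequality inside Equation~\eqref{eqn:pcs_EFG} is satisfied with probability at least $\Pr\{\bar X_1^*(n_0)>\mu_1-\gamma_0\}$ for all sufficiently large $k$.

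For the tightness assertion, I would mirror the sample-path argument of Theorem~\ref{thm: Greedy_PCS_tightness}. It suffices to show that $\liminf_{k\to\infty}\mbox{PICS}\ge\Pr\{\min_{n\ge n_0}\bar X_1(n)<\mu_1-\gamma_0\}$. On the event $\{\bar X_1^*(n_0)<\mu_1-\gamma_0\}$ there exist $m\ge n_0$ and $\varepsilon>0$ with $\bar X_1(m)=\mu_1-\gamma_0-\varepsilon$; for alternative~1 ever to re-emerge as the sample best, every non-best alternative must receive at least $N_i(\mu_1-\gamma_0-\varepsilon;n_0)$ observations. Under the SC-CV these counts are i.i.d., so SLLN implies a total budget requirement of roughly $C((\gamma-\gamma_0-\varepsilon)/\sigma;n_0)\,k$, which strictly exceeds $B=(n_0+n_g)k=C((\gamma-\gamma_0)/\sigma;n_0)\,k$ for all large $k$ because $C(\cdot;n_0)$ is strictly decreasing in $x$. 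Hence an incorrect selection is inevitable on this event.

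The main obstacle is verifying the continuity and strict monotonicity of $C(x;n_0)$ in $x$, since for $n_0\ge 2$ no closed-form analogue of Equation~\eqref{eqn:normbc1} is available. Strict monotonicity is essentially a pathwise statement: enlarging $x$ can only shrink the stopping time $\tilde N(x;n_0)$, and a routine coupling shows the shrinkage is strict with positive probability. Continuity then follows from dominated convergence using the integrable upper envelope $n_0\,\tilde N(x/\sqrt{n_0})$ supplied by Equation~\eqref{eqn: NvsNE} together with Lemma~\ref{lem:normalbc}. Once these two properties are in hand, the rest of the argument is a direct and largely notational adaptation of the $n_0=1$ proofs, which is precisely why the authors describe the modifications as simple.
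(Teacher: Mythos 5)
Your proposal follows essentially the same route as the paper: first establish strict monotonicity and continuity of $C(\cdot\,;n_0)$ (the paper's Lemma \ref{lem: C_EFG}, proved there by exactly your pathwise-coupling argument for strictness and dominated convergence for continuity, with the integrable envelope coming from Equation \eqref{eqn: NvsNE}), and then rerun the boundary-crossing arguments of Theorems \ref{thm: Greedy_PCS_opt} and \ref{thm: Greedy_PCS_tightness} starting from Equation \eqref{eqn:pcs_EFG}. One small caution: choosing $\gamma_0$ so that $C\left(\frac{\gamma-\gamma_0}{\bar\sigma};n_0\right)$ equals $n_0+n_g$ exactly leaves the SLLN with no slack to absorb $N_1^*(n_0)$ (the normalized sum converges \emph{to} the budget rate, not strictly below it), so you must, as the paper does, work with $\gamma_\epsilon^-$ satisfying $C\left(\frac{\gamma-\gamma_\epsilon^-}{\bar\sigma};n_0\right)=n_0+n_g-2\epsilon$ and pass to the limit $\epsilon\to 0$ via the continuity you established.
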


Theorem \ref{thm: ExploreFirstGreedy_opt} confirms the sample optimality of the EFG procedure in solving large-scale R\&S problems. Moreover, similar to Theorem \ref{thm: Greedy_PCS_tightness}, the PCS lower bound is tight under the SC-CV.
Recall that, when $n_0$ = 1, the EFG procedure is the greedy procedure. Therefore, we may view Theorem \ref{thm: ExploreFirstGreedy_opt} as a generalization of Theorems \ref{thm: Greedy_PCS_opt} and \ref{thm: Greedy_PCS_tightness} from $n_0=1$ to any $n_0\ge 1$. 

Similarly, we may also generalize the PGS result of Proposition \ref{prop: Greedy_PGS_opt} from $n_0=1$ to any $n_0\ge 1$. We include it in  \ref{subsec: proofEFGPGS} for completeness.

\subsection{Consistency}
\label{subsec: consistency}
Now we prove the consistency of the EFG procedure. Notice that by the strong law of large numbers, $\lim_{n_0 \to \infty} \Pr\left\{\min_{n\ge n_0} \bar X_1(n) > \mu_1-\gamma_0\right\} = 1$ for any fixed value of $\gamma_0 > 0$. Then obviously, we have the following lemma.
\begin{lem}
\label{lem: alpha}
For any fixed value of $\gamma_0 > 0$ and any $\alpha\in(0,1)$, there exists a finite value of $n_0\ge 1$, which may depend on $\gamma_0$ and $\alpha$, such that 
\begin{eqnarray}
\label{ineq: alpha}
\Pr\left\{\min_{n\ge n_0} \bar X_1(n) > \mu_1-\gamma_0\right\}  \geq \alpha.
\end{eqnarray}
\end{lem}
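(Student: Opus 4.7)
The plan is to deduce the lemma as a direct corollary of the strong law of large numbers (SLLN) applied to the sequence of running averages of alternative $1$, together with the monotone continuity of probability along an increasing sequence of events.

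First I would set up the events explicitly. For each $n_0 \ge 1$, define $B_{n_0} = \left\{\min_{n \ge n_0} \bar X_1(n) > \mu_1 - \gamma_0\right\}$. The key observation is that $\{B_{n_0}\}_{n_0 \ge 1}$ is an increasing sequence of events in $n_0$, since enlarging $n_0$ restricts the index set over which the minimum is taken, and a minimum over a smaller set is at least as large. Hence by continuity of probability from below, $\lim_{n_0 \to \infty} \Pr(B_{n_0}) = \Pr\left(\bigcup_{n_0 \ge 1} B_{n_0}\right)$.

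Next I would identify $\bigcup_{n_0 \ge 1} B_{n_0}$ with (a superset of) the almost-sure event on which SLLN holds. Since $X_{1,1}, X_{1,2}, \ldots$ are i.i.d.\ with finite mean $\mu_1$, the SLLN gives $\bar X_1(n) \to \mu_1$ a.s. On the corresponding probability-one event, for the fixed $\gamma_0 > 0$ there exists a finite (random) index $N(\omega)$ with $|\bar X_1(n) - \mu_1| < \gamma_0$ for all $n \ge N(\omega)$; in particular $\min_{n \ge N(\omega)} \bar X_1(n) > \mu_1 - \gamma_0$, so $\omega \in B_{N(\omega)} \subset \bigcup_{n_0 \ge 1} B_{n_0}$. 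Thus $\Pr\bigl(\bigcup_{n_0 \ge 1} B_{n_0}\bigr) = 1$, which yields $\lim_{n_0 \to \infty} \Pr(B_{n_0}) = 1$.

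Finally, given any $\alpha \in (0,1)$, the definition of the limit immediately produces a finite $n_0$ (depending on $\gamma_0$ and $\alpha$) with $\Pr(B_{n_0}) \ge \alpha$, which is the claim. There is essentially no obstacle here: all three ingredients (SLLN, monotonicity of $B_{n_0}$ in $n_0$, continuity of probability) are standard, and the paper itself foreshadows the argument in the sentence preceding the lemma. The only minor care needed is to verify monotonicity of $B_{n_0}$ so that continuity from below applies; this is immediate from the definition of the minimum.
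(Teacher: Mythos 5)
Your proposal is correct and follows essentially the same route as the paper, which simply invokes the strong law of large numbers in the sentence preceding the lemma ($\lim_{n_0\to\infty}\Pr\{\min_{n\ge n_0}\bar X_1(n)>\mu_1-\gamma_0\}=1$) and declares the conclusion obvious. You merely make explicit the standard details the paper omits — monotonicity of the events in $n_0$ and continuity of probability from below — so there is nothing to criticize.
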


Notice that given any $\gamma_0\in(0,\gamma)$ and any $\alpha\in(0,1)$, we may determine $n_0$ such that Equation (\ref{ineq: alpha}) holds. Furthermore, we may also determine $n_g$ based on Theorem \ref{thm: ExploreFirstGreedy_opt}, i.e., $n_g=C\left(\frac{\gamma-\gamma_0}{\bar \sigma};  n_0\right)-n_0$. Then, by Theorem \ref{thm: ExploreFirstGreedy_opt}, if $B\ge (n_0+n_g)k$, the asymptotic lower bound of the PCS of the EFG procedure is at least $\alpha$. This leads to the following theorem that shows the EFG procedure is consistent, based on Definition \ref{def: large_scale_consistency}. The proof is straightforward and is omitted. 

\begin{thm}
\label{thm: Consistency}
Suppose that Assumption \ref{assu:asyreg} holds. For any given PCS level $\alpha\in (0,1)$, if the total sampling budget $B$ satisfies $B/k = (n_0 + n_g)$ with
$n_0$ satisfying Equation \eqref{ineq: alpha} for a $\gamma_0 \in (0, \gamma)$ and $n_g \geq C\left(\frac{\gamma-\gamma_0}{\bar \sigma};  n_0\right)-n_0$, 
the PCS of the EFG procedure satisfies $\liminf_{k \to \infty} {\rm PCS} \geq \alpha$.
\end{thm}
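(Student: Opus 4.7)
The plan is to chain Lemma \ref{lem: alpha} with Theorem \ref{thm: ExploreFirstGreedy_opt}. Given any target level $\alpha\in(0,1)$, I would first fix any $\gamma_0\in(0,\gamma)$ and invoke Lemma \ref{lem: alpha} to choose a finite $n_0\ge 1$ for which $\Pr\{\min_{n\ge n_0}\bar X_1(n)>\mu_1-\gamma_0\}\ge\alpha$. With $(n_0,\gamma_0)$ fixed, take any $n_g\ge C\bigl(\frac{\gamma-\gamma_0}{\bar\sigma};\,n_0\bigr)-n_0$ and set $B=(n_0+n_g)k$; this matches precisely the data in the theorem being proved.

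The single verification needed is that this choice satisfies the hypothesis of Theorem \ref{thm: ExploreFirstGreedy_opt}, namely the strict inequality $n_g>C\bigl(\frac{\gamma}{\bar\sigma};\,n_0\bigr)-n_0$. Since $\gamma-\gamma_0<\gamma$ and $C(\cdot;n_0)$ is strictly decreasing on $(0,\infty)$---a routine consequence of the pointwise monotonicity of $\tilde N(x;n_0)$ in $x$ together with Lemmas \ref{lem:normalbc} and \ref{lem:normalbc_EFG}---we have $C\bigl(\frac{\gamma-\gamma_0}{\bar\sigma};\,n_0\bigr)>C\bigl(\frac{\gamma}{\bar\sigma};\,n_0\bigr)$, and the required strict inequality follows.

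Applying Theorem \ref{thm: ExploreFirstGreedy_opt} then delivers $\liminf_{k\to\infty}{\rm PCS}\ge \Pr\{\min_{n\ge n_0}\bar X_1(n)>\mu_1-\gamma_0'\}$, where $\gamma_0'\in(0,\gamma)$ is implicitly defined by $C\bigl(\frac{\gamma-\gamma_0'}{\bar\sigma};\,n_0\bigr)=n_g+n_0$. Because $n_g+n_0\ge C\bigl(\frac{\gamma-\gamma_0}{\bar\sigma};\,n_0\bigr)$, the same strict monotonicity argument gives $\gamma_0'\ge\gamma_0$, so the event $\{\min_{n\ge n_0}\bar X_1(n)>\mu_1-\gamma_0'\}$ contains $\{\min_{n\ge n_0}\bar X_1(n)>\mu_1-\gamma_0\}$, and the desired bound $\liminf_{k\to\infty}{\rm PCS}\ge\alpha$ follows from the choice of $n_0$. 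I expect no real obstacle beyond confirming the strict monotonicity of $C(\cdot;n_0)$ for general $n_0\ge 1$, which the preliminaries in Section \ref{subsec: runningave_explore} already support; the rest is bookkeeping.
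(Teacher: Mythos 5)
Your proposal is correct and follows essentially the same route as the paper, which omits the proof as ``straightforward'' after sketching exactly this chaining of Lemma \ref{lem: alpha} with Theorem \ref{thm: ExploreFirstGreedy_opt} in the paragraph preceding the theorem. The only detail you flag as needing verification---the strict monotonicity and continuity of $C(\cdot;n_0)$ for general $n_0\ge 1$---is indeed established in the e-companion (Lemma \ref{lem: C_EFG}), so your argument closes without any gap.
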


Theorem \ref{thm: Consistency} establishes the consistency of the EFG procedure. It shows that, adding a simple exploration phase to the greedy procedure overcomes its inconsistency while keeping its sample optimality. 

We also want to highlight that Theorem \ref{thm: Consistency} also provides us a simple way to turn the fixed-budget  EFG procedure into a fixed-precision procedure from R\&S problems with $\gamma$ and $\bar\sigma^2$ are known. Notice that under the Assumption \ref{assu:asyreg}, 
\begin{eqnarray}
  \label{eq: EFG_PCS_trans}
\Pr\left\{\min_{n\ge n_0} \bar X_1(n) > \mu_1-\gamma_0\right\}  = \Pr \left\{\min_{n\ge n_0} \bar Z(n) > -\frac{\gamma_0}{ \sigma_1}\right\} \ge  \Pr \left\{ \min_{n\ge n_0} \bar Z(n) > -\frac{\gamma_0}{ \bar \sigma}\right\}. \label{ineq: alpha2}
\end{eqnarray}
Then, for any $\gamma_0\in(0,\gamma)$ (e.g., $\gamma_0=\gamma/2$), given the PCS target $1-\alpha$, we may use Equation (\ref{ineq: alpha2}) to determine $n_0$ and $n_g=C\left(\frac{\gamma-\gamma_0}{\bar \sigma};  n_0\right)-n_0$ to determine $n_g$. With such $n_0$ and $n_g$, given a total sampling budget $B=(n_0+n_g)k$, it is expected that the EFG procedure can deliver asymptotically a PCS no smaller than the targeted level $1-\alpha$.

\subsection{Budget Allocated to Exploration and Exploitation}
\label{subsec: allocation}
The EFG procedure has two phases, the exploration phase and the greedy (i.e., exploitation) phase. It is interesting to think about how the total budget should be allocated to these two phases, i.e., how to determine the relationship between $n_0$ and $n_g$. Intuitively, one might think that a significant amount of the budget, if not the most, should be allocated to the greedy phase, because it is the reason why the EFG procedure is sample optimal and also because the exploration phase alone is the equal allocation procedure, which performs poorly for large-scale problems. In this subsection we show that, surprisingly, this intuition is {\it not} true, and only a very small amount of budget is needed for the greedy phase when the total budget is sufficiently large. 

By Theorem \ref{thm: ExploreFirstGreedy_opt}, we let $n_g = C\left(\frac{\gamma-\gamma_0}{\bar \sigma}; n_0\right) - n_0 $ for some fixed $\gamma_0\in (0,\gamma)$. Then, we have the following lemma on the relationship between $n_0$ and $n_g$, which is proved in \ref{subsec: proofCto0}.

\begin{lem}
\label{lem: bct_mean_convergence}
For any $\gamma_0\in(0,\gamma)$, there exist positive constants $\beta$ and $\kappa$ such that $n_g\le \beta e^{-\kappa n_0}$, where $\kappa= \frac{(\gamma-\gamma_0)^2}{2\bar\sigma^2}$ and $\beta=\left(1-e^{-\kappa}\right)^{-1}$.
\end{lem}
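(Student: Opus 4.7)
The plan is to unpack the definition of $n_g$ and bound the tail of the shifted boundary-crossing time by a single Gaussian tail probability, whose geometric decay in the starting index $n_0$ yields the claimed exponential bound.

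First, I would rewrite the quantity of interest in a form amenable to tail analysis. By the choice $n_g = C\!\left(\frac{\gamma-\gamma_0}{\bar\sigma}; n_0\right) - n_0 = \E[\tilde N(x;n_0)] - n_0$ with $x := \frac{\gamma-\gamma_0}{\bar\sigma} > 0$, and since $\tilde N(x;n_0)\geq n_0$ always, the shifted random variable $T := \tilde N(x;n_0) - n_0$ is nonnegative integer-valued. Hence $n_g = \E[T] = \sum_{m=0}^{\infty}\Pr\{T>m\}$.

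Second, I would replace the ``running supremum'' event by its final-time counterpart. By definition,
\[
\Pr\{T > m\} \;=\; \Pr\!\left\{\bar Z(n) \ge x \ \text{for all } n=n_0,\ldots,n_0+m\right\} \;\le\; \Pr\{\bar Z(n_0+m)\ge x\},
\]
dropping all constraints except the one at $n=n_0+m$. Since $\bar Z(n)\sim N(0,1/n)$, the standard Gaussian tail inequality gives $\Pr\{\bar Z(n)\ge x\}\le e^{-nx^2/2}$ for any $x>0$. Substituting and noting $x^2/2 = \kappa$,
\[
\Pr\{T>m\} \;\le\; e^{-(n_0+m)\kappa}.
\]

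Third, summing the resulting geometric series yields
\[
n_g \;=\; \sum_{m=0}^{\infty}\Pr\{T>m\} \;\le\; e^{-\kappa n_0}\sum_{m=0}^{\infty} e^{-\kappa m} \;=\; \frac{e^{-\kappa n_0}}{1-e^{-\kappa}} \;=\; \beta\, e^{-\kappa n_0},
\]
which is the desired bound. The whole argument is short and has no real obstacle: the only conceptual step is recognizing that the event $\{T>m\}$ is dominated by the single-endpoint event at time $n_0+m$, after which the Gaussian tail bound and a geometric sum finish the proof.
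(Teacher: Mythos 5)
Your proof is correct and follows essentially the same route as the paper's: both reduce $n_g=\E[\tilde N(x;n_0)]-n_0$ to a tail sum, bound each tail probability $\Pr\{\tilde N(x;n_0)>n_0+m\}$ by the single-time event $\Pr\{\bar Z(n_0+m)\ge x\}=\bar\Phi(\sqrt{n_0+m}\,x)\le e^{-(n_0+m)x^2/2}$, and sum the resulting geometric series. The only difference is a trivial index shift in the tail-sum representation.
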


Lemma \ref{lem: bct_mean_convergence} is a very interesting result, because it shows that $n_g$ goes down to zero in an exponential rate as $n_0$ increases to infinity. Therefore, when the total budget is sufficiently large (so is $n_0$), only a very small amount of budget is needed for the greedy phase, and this small amount is sufficient to turn a poor equal allocation procedure (i.e., the exploration phase) into a sample-optimal and consistent procedure. Based on this observation, we suspect that the greedy phase may also be used as a remedy for other non-sample-optimal procedures, such as the OCBA procedures, to make them sample optimal as well. We leave it as a topic for future research.

When using the EFG procedure to solve practical fixed-budget R\&S problems, $\gamma$ and $\bar\sigma^2$ are typically unknown. Therefore, we cannot use the allocation rule of Lemma \ref{lem: bct_mean_convergence} directly. Instead, we suggest to use a simple proportional rule that allocates $p\in(0,1)$ proportion of the total budget to the exploration phase and the remaining $1-p$ proportion to the greedy phase. Lemma \ref{lem: bct_mean_convergence} indicates that, for a reasonable choice of $p$ (e.g., $p=60\%$ or $p=80\%$), as long as $c$ is sufficiently large, $n_g = (1-p)c$ is enough to maintain the sample optimality. Numerical results show that the PCS of the EFG procedure is not sensitive to the value of $p$ for a relatively large $c$. We illustrate this observation in Section \ref{sec: numerical}.

\magenta{We note here that the issue of budget allocation between an initial EA phase and a subsequent sequential selection phase is very common  \citep{hartmann1991improvement, wu2020algorithm}. Different from our treatment, the most prevalent practice is to choose a small constant $n_0$ for each alternative in the EA phase regardless of how large $B$ is. Interestingly, we notice that a similar proportional allocation rule to the one discussed above is used by \cite{wu2018analyzing, wu2018provably} for a totally different purpose, where $n_0$ is set as $\lfloor p B \rfloor$ for a fixed $p \in (0, 1)$. The authors show that for several sequential OCBA procedures, doing this can accelerate the convergence of the PCS to 1  as $B \rightarrow \infty$ for a fixed $k$, and only a small value of $p$ (e.g., $p=20\%$) is sufficient. We refer the interested readers to the papers for more details.}


\subsection{Enhancing the EFG Procedure}
\label{subsec: EFGPlus}
In this subsection, we enhance the EFG procedure by altering the budget allocation in the exploration phase. From Theorem~\ref{thm: ExploreFirstGreedy_opt}, it can be found that, to ensure a higher PCS of EFG procedure, we only need to allocate more observations to the best alternative (i.e. alternative 1), instead of all alternatives in the exploration phase.
However, the identity of the best alternative is unknown. For this reason, we add a seeding phase, where a small proportion of the sampling budget is used to determine the seeding of all alternatives (for instance, based on the rankings of the sample means). We then allocate the budget of the exploration phase according to the seeds of all alternatives, so that good alternatives tend to have more observations. We call the new procedure the enhanced explore-first greedy (EFG$^+$) procedure and leave its details in Procedure \ref{procedure: EFGPlus}\magenta{, which is put in \ref{subsec: proofEFGP_Opt}}.

In the seeding phase, we generate $n_{sd}$ observations from each alternative, rank all the alternatives in a descending order based on the sample means, and then separate them into $G$ uneven groups denoted by $I^1, \ldots, I^G$. By construction, $I^1$ contains the fewest alternatives,  $I^2$ contains twice as many alternatives as $I^1$, and so on and so forth. Subsequently, in the exploration phase, we equally allocate the exploration budget $n_0 k$ among the $G$ groups, and for each group $r$, we allocate an equal number of observations $n^r=\left\lfloor \frac{n_0 (2^G-1)}{G 2^{r-1} } \right\rfloor$ to every alternative within the group. 
In this way\footnote{In \ref{subsec: proofEFGP_Opt}, we show that the total amount of observations collected in the exploration phase (i.e., $\sum_{r=1}^G n^r |I^r|$), does not exceed the given exploration budget $n_0 k$. }, since the best alternative is very likely to be seeded into the top groups with a small $r$ value, it is expected to receive more observations than $n_0$ in the exploration phase.

Following the boundary-crossing perspective, we can also prove that the EFG$^+$ procedure is sample optimal in the following theorem. The detailed proof is deferred to \ref{subsec: proofEFGP_Opt}. 

\begin{thm}
\label{thm: EFGP_Opt}
Suppose that Assumption \ref{assu:asyreg} holds. 
 For the chosen $G\geq 2$ and any $n_0 \geq G$,
if the total sampling budget $B$ satisfies
$B/k = (n_{sd} + n_0 + n_g)$ where $n_{sd}$ is a positive integer, $n_g > C \left(\frac{\gamma}{ \bar \sigma};  n^1 \right)-n_0$ and  $n^1=\left\lfloor \frac{n_0(2^G-1)}{G} \right\rfloor$, 
the PCS of the EFG$^+$ procedure satisfies
\[
\liminf\limits_{k \to \infty} {\rm PCS} \geq \Pr\left\{\min_{n\ge  n^G} \bar X_1(n) > \mu_1-\gamma_0\right\},
\]
where $n^G=\left\lfloor \frac{n_0(2^G-1)}{G2^{G-1}} \right\rfloor$ and $\gamma_0$ is a positive constant satisfying $\gamma_0 \in (0, \gamma)$ and $C\left(\frac{\gamma-\gamma_0}{ \bar \sigma};  n^1 \right) =n_0+n_g$.
\end{thm}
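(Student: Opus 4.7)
My plan is to adapt the boundary-crossing argument of Theorem \ref{thm: ExploreFirstGreedy_opt} to the EFG$^+$ procedure, keeping track of the fact that the seeding phase partitions the $k$ alternatives into $G$ groups with distinct post-exploration sample sizes $n_i^{\text{start}} := n_{sd} + n^{r_i}$, where $r_i$ denotes the group index of alternative $i$. Analogous to Equation (\ref{eqn:pcs_EFG}), I first write
\[
\mathrm{PCS} \;\ge\; \Pr\!\left\{\sum_{i=2}^{k} N_i(\bar X_1^*; n_i^{\text{start}}) + N_1^*(n_1^{\text{start}}) \;\le\; B\right\},
\]
where $\bar X_1^* := \min_{n \ge n_1^{\text{start}}} \bar X_1(n)$, $N_1^*(n_1^{\text{start}}) := \argmin_{n \ge n_1^{\text{start}}} \bar X_1(n)$, and $N_i(\bar X_1^*; n_i^{\text{start}}) := \inf\{n \ge n_i^{\text{start}} : \bar X_i(n) < \bar X_1^*\}$.

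Second, since alternative $1$'s group $r_1$ is random, I use the pointwise inequality $n_1^{\text{start}} \ge n^{G}$ to obtain $\bar X_1^* \ge \min_{n \ge n^{G}} \bar X_1(n)$, so that on the event $\mathcal{E}_{\gamma_0} := \{\min_{n \ge n^{G}} \bar X_1(n) > \mu_1 - \gamma_0\}$ the boundary satisfies $\bar X_1^* > \mu_1 - \gamma_0$ irrespective of $r_1$. Conditional on $\mathcal{E}_{\gamma_0}$, Assumption \ref{assu:asyreg} combined with the normalization step of Equation (\ref{eqn: Nbound}) yields $N_i(\bar X_1^*; n_i^{\text{start}}) \le \tilde N_i((\gamma-\gamma_0)/\bar\sigma; n_i^{\text{start}})$. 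I will then split the sum across the $G$ groups and invoke the strong law within each group: since $|I^r|/k \to 2^{r-1}/(2^G-1)$ as $k \to \infty$, the group-wise per-capita boundary-crossing times converge almost surely to $C((\gamma-\gamma_0)/\bar\sigma; n_{sd}+n^r)$, which by monotonicity of $C(x; n_0)$ in $n_0$ is dominated by $C((\gamma-\gamma_0)/\bar\sigma; n^1)$. Choosing $\gamma_0 \in (0,\gamma)$ so that $C((\gamma-\gamma_0)/\bar\sigma; n^1) = n_0 + n_g$, which is feasible under the hypothesis $n_g > C(\gamma/\bar\sigma; n^1) - n_0$ by continuity and strict monotonicity of $C$ in its first argument, makes the aggregate crossing-time budget at most $B$ asymptotically. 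Since $N_1^* < \infty$ a.s.\ by Lemma \ref{lem:argmin_EFG}, its contribution is negligible after division by $k$. Together these give $\liminf_{k\to\infty} \mathrm{PCS} \ge \Pr(\mathcal{E}_{\gamma_0})$, which is the stated bound.

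The main technical obstacle I anticipate is the coupling between the group label $r_i$ and the boundary-crossing time $\tilde N_i(\cdot; n_{sd}+n^{r_i})$, since both are measurable with respect to the same standardized observation sequence $\{\bar Z_i(n)\}$. To recover a clean SLLN argument within each group, I plan to dominate $\tilde N_i(\cdot; n_i^{\text{start}})$ by a random variable whose starting index does not depend on the seeding outcome, using the monotonicity of $\tilde N(x; n_0)$ in $n_0$ from Lemma \ref{lem:normalbc_EFG}, or alternatively to condition on the $\sigma$-field generated by the seeding observations and invoke the tower property. A secondary bookkeeping point is the handling of the floor functions in $n^r = \lfloor n_0(2^G-1)/(G\, 2^{r-1})\rfloor$ and the slack $\sum_r |I^r|\, n^r \le n_0 k$, which must be tracked carefully so that the aggregate exploration budget does not exceed its nominal value; this can be handled by mirroring the feasibility argument outlined in the footnote accompanying Procedure \ref{procedure: EFGPlus}.
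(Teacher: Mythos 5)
Your proposal follows the same boundary-crossing route as the paper's proof and, once one bookkeeping error is corrected, it arrives at the stated bound. The error: in the EFG$^+$ procedure the seeding observations are \emph{discarded} --- an alternative in group $r$ has its post-exploration sample mean computed from $n^r$ fresh observations only, so its running-average process restarts at index $n^{r_i}$, not $n_{sd}+n^{r_i}$, and the relevant budget comparison is against $(n_0+n_g)k$ rather than $B$. As written, your domination step actually fails: $C(x;\cdot)$ is \emph{increasing} in the starting index (pathwise, $\tilde N(x;n_0)$ is non-decreasing in $n_0$), so $C(x; n_{sd}+n^1) \ge C(x; n^1)$ and the group-$1$ per-capita crossing time is not dominated by $C(x;n^1)$. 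With the starting indices corrected to $n^{r_i} \in [n^G, n^1]$, the rest of your argument goes through.

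On the comparison of methods: the paper conditions on the vector of exploration sample sizes $\mathbf e=(e_1,\dots,e_k)$ and then eliminates it wholesale via the pathwise bounds $\min_{n\ge e_1}\bar X_1(n)\ge \min_{n\ge n^G}\bar X_1(n)$, $\argmin_{n\ge e_1}\bar X_1(n)\le\argmin_{n\ge n^1}\bar X_1(n)$, and $N_i(x;e_i)\le N_i(x;n^1)$, reducing everything to a single SLLN over i.i.d. variables $\tilde N_i(\cdot;n^1)$ whose law does not depend on the (mutually correlated) seeding outcome. This is exactly the resolution you anticipate for your ``coupling'' obstacle, and it makes your proposed group-wise SLLN unnecessary; note also that the coupling you worry about largely vanishes once the seeding samples are discarded, since the partition is measurable with respect to observations independent of those driving the crossing times. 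Your group-wise decomposition would in principle yield the sharper limiting constant $\sum_{r=1}^{G} \frac{2^{r-1}}{2^G-1}\,C\bigl(\frac{\gamma-\gamma_0}{\bar\sigma};n^r\bigr)$, i.e., a genuinely weaker budget requirement, but since you dominate every group by $C(\cdot;n^1)$ at the end you recover only the paper's bound, at the extra cost of justifying a conditional SLLN on the random partition.
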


Compared to Theorem \ref{thm: ExploreFirstGreedy_opt}, Theorem \ref{thm: EFGP_Opt} is actually a weaker result, as the asymptotic tightness of the PCS lower bound no longer holds.
However, numerical experiments in Section \ref{sec: numerical} demonstrate that the EFG$^+$ procedure may significantly outperform the EFG procedure. Moreover, notice that $\lim_{n_0 \rightarrow \infty} n^G=\infty$.
Then, we can easily prove that the EFG$^+$ procedure is consistent as well.

We close this subsection by noting that any budget allocation rule may be applied to the seeding phase, not limited to the equal allocation rule. Since the observations taken for seeding are not used in the following phases, the sample optimality and consistency still hold no matter what rule is adopted. This opens a door for further improving the EFG$^+$ procedure, and we leave this topic for future investigation.

\subsection{\magenta{Parallelization of the EFG procedures}}
\label{subsec: parallel}

\magenta{To take advantage of the ubiquitously available parallel computing environments, we now delve into the parallelization of the EFG procedures. Given the similar structure between the EFG$^+$ procedure and the relatively simpler EFG procedure, our focus narrows down to parallelizing the EFG$^+$ procedure. Recall that the EFG$^+$ procedure has three phases, the seeding phase, exploration phase and greedy phase. The seeding phase and exploration phase are readily parallelizable, as both phases simply employ a static-allocation approach. Indeed, the challenge lies in parallelizing the inherently sequential greedy phase.}

\magenta{From preliminary numerical experiments, we observe that in the greedy phase, typically only a very small proportion (e.g., 2\%) of the alternatives are sampled, and each of them may be sampled many times. We refer to  \ref{subsec: allocationAlternativesMain} for more discussions of this phenomenon. Inspired by this, we propose to use a batching approach to parallelize the greedy phase. Specifically, at each stage of the greedy phase, instead of taking a single observation from the current best, we take a batch of $m >1$ observations from it. With multiple computing processors (e.g., CPUs), the simulation task of the $m$ observations can be distributed among these processors to achieve parallelization. Through batching, the previously sequential greedy phase is rendered parallelizable.}

\magenta{For convenience, we label the parallel EFG$^+$ procedure with a batched greedy phase as the EFG$^{++}$ procedure.  The formal description of the EFG$^{++}$ procedure is available in \ref{sec: parallelization}, along with a comprehensive numerical study of its performance and parallel efficiency in a master-worker parallel computing environment. The numerical study reveals that the EFG$^{++}$ procedure can deliver almost the same PCS and PGS as the EFG$^+$ procedure, even with a relatively large batch size (e.g., $m=200$). At the same time, it can effectively reduce the wall-clock time compared to the EFG$^+$ procedure when using a small number of parallel processors. However, its parallel efficiency diminishes as the number of processors increases. This may be due to the two potential drawbacks of the batching approach. First, synchronization among processors at each stage could be inefficient when the simulation times are random and unequal. Second, the master processor has to frequently communicate with the worker processors to assign simulation tasks and collect simulation results, incurring a substantial communication overhead when the number of processors is large. Strategies for mitigating these drawbacks are also explored in \ref{sec: parallelization}.}

\section{Numerical Experiments}
\label{sec: numerical}
In this section, we conduct numerical experiments to examine our theoretical results and test the performance of the proposed procedures.
 In Section \ref{subsec: numerical1}, we show the sample optimality of the greedy procedure and the EFG procedure. Then, in Section \ref{subsec: numerical2}, we demonstrate that the EFG procedure is consistent. Moreover, we show the impact of budget allocation between exploration and exploitation on the procedure's performance.
In Section \ref{subsec: numerical4}, we compare our EFG and EFG$^+$ procedures with existing sample-optimal fixed-budget R\&S procedures on a practical large-scale R\&S problem. 
\magenta{Lastly, in Section \ref{subsec: additional}, we summarize the additional experiments conducted to understand the EFG procedure's budget allocation behaviors and the procedure's performance when the assumptions on the problem configurations are violated. }
Program codes are written using the Python language and available at \href{https://github.com/largescaleRS/greedy-procedures}{https://github.com/largescaleRS/greedy-procedures}.

Throughout the first two subsections, we consider the following four problem configurations:
\begin{itemize}
    \item the slippage configuration of means with a common variance (SC-CV) under which
$$\mu_1=0.1,\, \mu_i = 0\,, i = 2, \ldots, k \text{ and } \sigma_i^2 = 1,\, i = 1, \ldots, k;$$
    \item the configuration with equally-spaced means and a common variance (EM-CV) under which
    $$\mu_1=0.1,\, \mu_i = {-(i-1)}/{k},\,  i = 2, \ldots, k\text{ and }  \sigma_i^2 =  1,\,  i = 1, \ldots, k;$$
    \item the configuration with equally-spaced means and increasing variances (EM-IV) under which
    $$\mu_1=0.1,\, \mu_i = {-(i-1)}/{k},\,  i = 2, \ldots, k\text{ and }  \sigma_i^2 =  1 + {(i-1)}/{k},\,  i = 1, \ldots, k;$$
    \item the configuration with equally-spaced means and decreasing variances (EM-DV) under which
    $$\mu_1=0.1,\, \mu_i = {-(i-1)}/{k},\,  i = 2, \ldots, k\text{ and }  \sigma_i^2 =  2 - {(i-1)}/{k},\,  i = 1, \ldots, k.$$
\end{itemize}
Notice that for all four configurations, alternative 1 is the unique best alternative. For each configuration, unless otherwise specified, we set the total number of alternatives as $k=2^{l}$ with $l$ ranging from 2 to 16, and for each $k$, we set the total sampling budget as $B=100k$. We estimate the PCS of a procedure in solving a particular R\&S problem based on 1000 independent macro replications.

\subsection{Sample Optimality of the Greedy Procedure and the EFG Procedure}
\label{subsec: numerical1}
In this subsection, we test the greedy procedure and the EFG procedure to verify the sample optimality and compare them with the EA  procedure. 
When applying the EFG procedure, one must decide the budget allocation to the exploration and greedy phases. 
We consider allocating a fixed proportion $p=n_0/c$ of the total sampling budget to the exploration phase.
In this experiment, we let $p=0.8$. Later we will show how the value of $p$ impacts the performance of the EFG procedure. Then, for the three procedures, we plot the estimated PCS against different $k$ under each configuration in Figure \ref{numerical: GreedyEFGRateOptimality}.

\begin{figure}[htbp]
  \centering
  \includegraphics[width=0.75\textwidth]{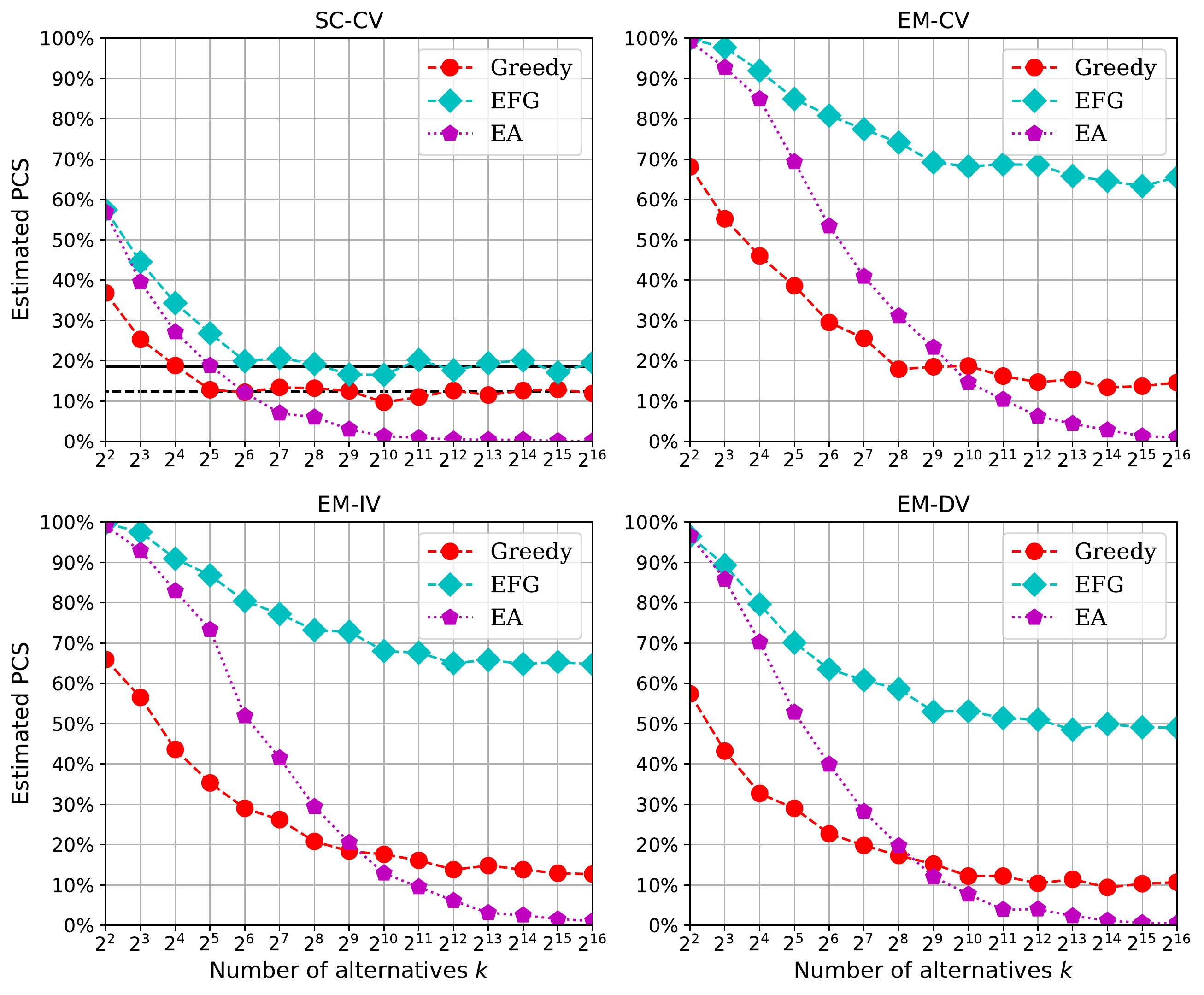}
  \caption{A comparison between the greedy procedure, the EFG procedure and the EA procedure.}
  \label{numerical: GreedyEFGRateOptimality}
\end{figure}

We highlight the main findings from Figure \ref{numerical: GreedyEFGRateOptimality}. 
First, for each configuration, the estimated PCS of the greedy procedure and the EFG procedure remain above a non-zero level regardless of the increase of $k$, proving their sample optimality. 
Second, the asymptotic PCS lower bound of the greedy (EFG) procedure stated in Theorem \ref{thm: Greedy_PCS_tightness} (Theorem \ref{thm: ExploreFirstGreedy_opt}) is tight. Under SC-CV, the estimated PCS of the greedy (EFG) procedure quickly converges to the theoretical value annotated with a black dotted (real) line.
Third, for all configurations, the EFG procedure significantly outperforms the greedy procedure. It means that using part of the sampling budget for exploration can dramatically improve the performance of the greedy procedure. Fourth, the PCS of the non-sample-optimal EA procedure decreases to zero as $k$ increases. However, using a small proportion of the sampling budget for greedy sampling can turn it into sample optimal and significantly improve its performance in solving large-scale R\&S problems. 

Furthermore, notice that for each of the three procedures, the obtained PCS under EM-CV is higher than that under SC-CV. It is because when the alternatives' means become dispersed, the problem becomes easier to solve.
\magenta{In fact,  when $k\rightarrow \infty$, the slippage configuration is least favorable for the greedy procedures. See the discussions following Theorems \ref{thm: Greedy_PCS_tightness} and \ref{thm: ExploreFirstGreedy_opt}. We provide additional numerical results to validate this result for a finite $k$ in \ref{subsec: least_favorable_EC}.}

\subsection{Additional Properties of the EFG Procedure}
\label{subsec: numerical2}

\subsubsection{Consistency.}
Now, we verify the consistency of the EFG procedure. In this experiment, we consider a fixed number of alternatives $k=2^{13}=8192$ and we vary the value of $c$ from $100$ to $800$ for the total sampling budget $B=ck$. For each $c$, we let $n_0=0.8c$. As a comparison, we also include the greedy procedure in this experiment.  We plot the estimated PCS of the EFG procedure and the greedy procedure against different values of $c$ under SC-CV and EM-CV in Figure \ref{numerical: EFGConsistency}.  \magenta{The PCS curves under EM-IV and EM-DV are similar to those shown in Figure  \ref{numerical: EFGConsistency} and therefore we put them in \ref{subsec: IVDV_EC} due to page limit. }

\begin{figure}[htbp]
  \centering
  \includegraphics[width=0.75\textwidth]{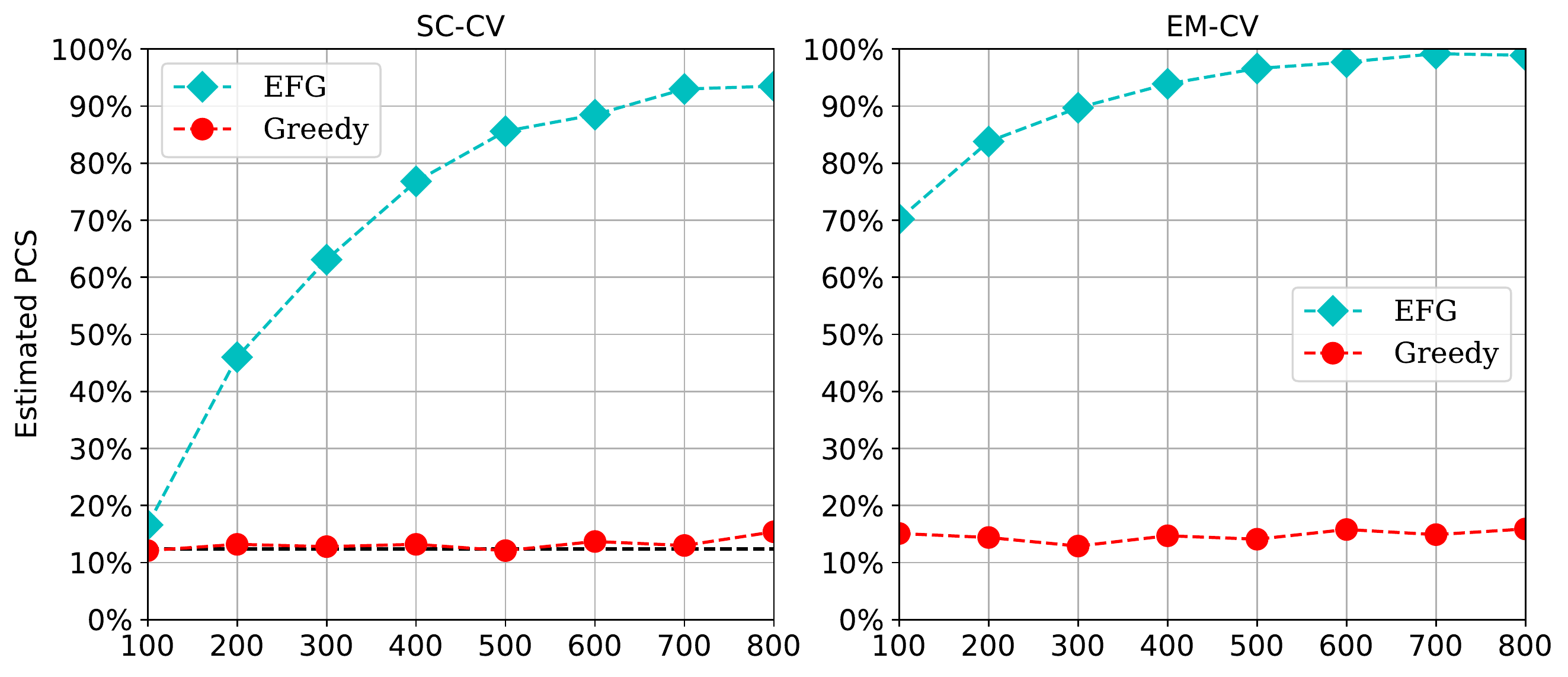}
  \caption{Estimated PCS of the EFG procedure and the greedy procedure for different values of $c$. $k=8192$. }
  \label{numerical: EFGConsistency}
\end{figure}

For this experiment, we have the following findings. First, the EFG procedure is consistent. As shown in Figure \ref{numerical: EFGConsistency}, under each configuration, as the total sampling budget increases from $100k$ to $800k$, the PCS of the EFG procedure rises to somewhere above 90\%. We expect the PCS to converge to 1 if we keep increasing the value of $c$. 
Second, in contrast to the EFG procedure, the greedy procedure is inconsistent. For all configurations, the PCS curves are almost flat. After $c=100$, they do not increase as the total sampling budget $B=ck$ grows. Therefore, they are unlikely to converge to 1 even if we keep boosting the $c$ value to infinity. In particular, under SC-CV, the PCS fluctuates around the theoretical upper bound, which is computed based on Corollary \ref{cor: inconsistency} and annotated with a black dotted line. It is because a total sampling budget $B=100k$ is enough for the greedy procedure to achieve a PCS close to the upper bound.

\subsubsection{Budget Allocation between Exploration and Exploitation.}
\label{subsec: allocationEEmain}
To understand the impact of $p=n_0/c$ on the PCS of the EFG procedure, we conduct the following experiment. We let $k=2^{13}=8192$, $B=200k$, and vary the value of $p$ from $0.005$ to $1$. Then, we plot the estimated PCS against different $p$ for SC-CV and EM-CV in Figure \ref{numerical: EFGBudgetAllocation} and summarize the main observations as follows. \magenta{The PCS curves under EM-IV and EM-DV are deferred to \ref{subsec: IVDV_EC}.} 
 
 \begin{figure}[htbp]
  \centering
  \includegraphics[width=0.75\textwidth]{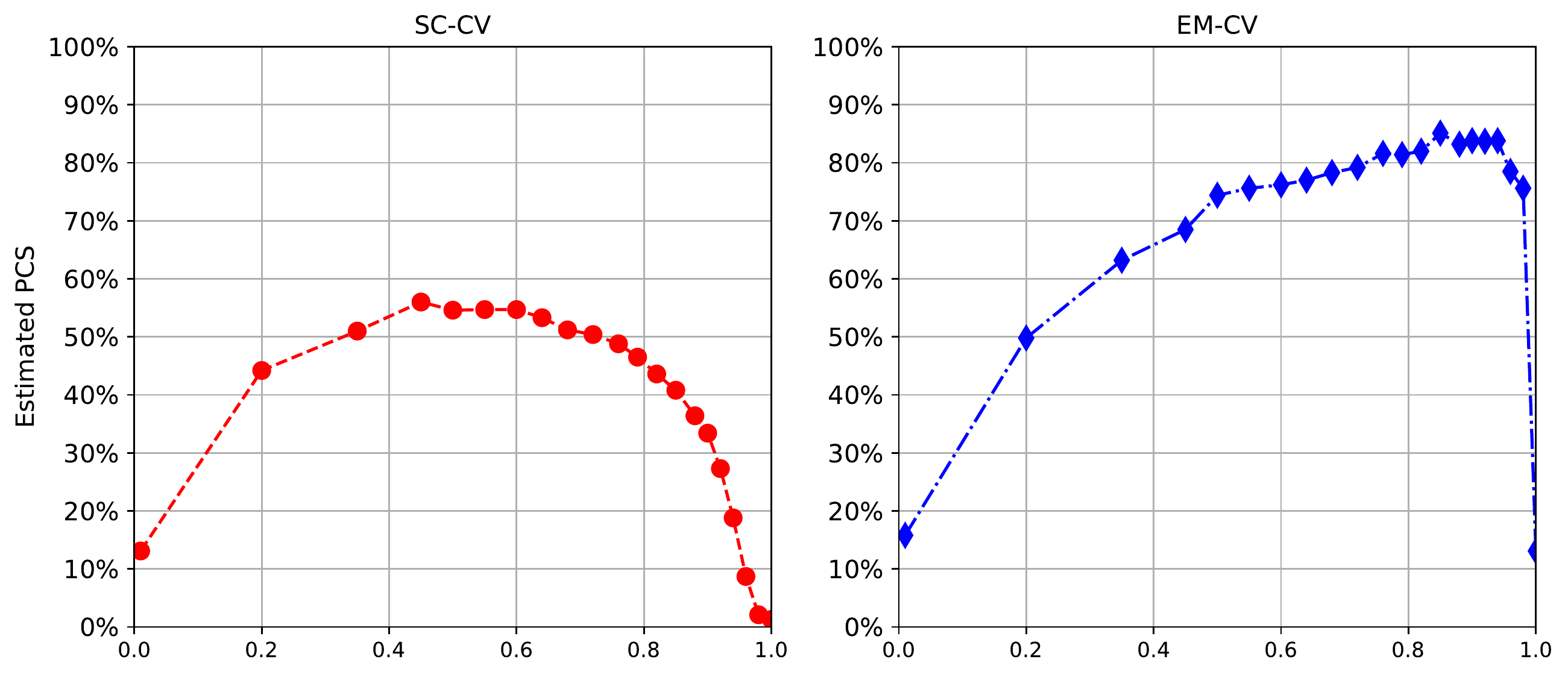}
  \caption{Estimated PCS of the EFG procedure for different values of $p=n_0/c$. $c=200$ and $k=8192$. }
  \label{numerical: EFGBudgetAllocation}
\end{figure}

It is interesting to see that for each configuration, the PCS curve has an inverted U-shape: as $p$ increases, the PCS increases first and then decreases. Specifically, first, under EM-CV, EM-IV, and EM-DV, the PCS starts to decrease only when $p$ exceeds 0.9. 
Interestingly, when $p$ rises to $1$ (the EFG procedure becomes the EA procedure), the PCS plummets to its lowest level, which should be 0 when the problem size is large enough, as shown in Figure \ref{numerical: GreedyEFGRateOptimality}. The above observations illustrate that for the EFG procedure, a small proportion of the total sampling budget for the greedy phase can be sufficient to guarantee the sample optimality and to achieve a near-optimal PCS.
Second, for SC-CV, the PCS starts to decrease when it exceeds 0.6 rather than 0.9. It may be because SC-CV is the most difficult among the four configurations and thus requires a larger greedy budget to guarantee the sample optimality than the other configurations. However, even so, letting $p$ = 0.7 can still achieve a near-optimal PCS, which again validates the above conclusion.
 
Another noteworthy finding from Figure \ref{numerical: EFGBudgetAllocation} is that, although the optimal $p$ that maximizes the PCS depends on the specific configuration, the PCS may be insensitive to the value of $p$ over a wide range of $p$. For configurations EM-CV, EM-IV, and EM-DV, this range is approximately (0.5, 0.95). In the range, the PCS does not vary much as the variation is almost within 10\%. Furthermore, for SC-CV, this range is approximately (0.5, 0.8). This fact implies that, in practice, the user does not need to decide $p$ carefully when applying the EFG procedure. 
\magenta{We also conduct this experiment for different settings of $k$
to study how the optimal $p$ and the a near-optimal range of $p$ changes as $k$ increases. See \ref{subsec: allocationEE} for the numerical results.} 


\subsection{Comparison between Sample-Optimal Fixed-Budget R\&S Procedures}
\label{subsec: numerical4}
In this subsection, we compare the EFG procedure, the EFG$^+$ procedure, and the existing fixed-budget sample-optimal R\&S procedures on a large-scale practical R\&S problem. 

\subsubsection{Procedures in Comparison.} 
We compare our EFG procedures with the  FBKT procedure and its enhanced version (i.e., the FBKT-Seeding procedure) of \cite{hong2022solving}, as well as the modified sequential halving (SH) procedure of \cite{zhao2022revisiting} which works as a fixed-budget median elimination procedure.
To the best of our knowledge, these procedures cover all the fixed-budget R\&S procedures currently known to be sample optimal. \magenta{Moreover, although the original SH procedure \citep{karnin2013almost} has not been proved to be sample optimal, we also include it in the comparison. We provide the necessary introduction and the implementation setting for each of them in \ref{subsec: EC_comparison}.} 

\subsubsection{The Throughput Maximization Problem.}
We test the procedures on the throughput maximization problem. The problem was first introduced by \cite{pichitlamken2006sequential}
and it has become a common testbed for large-scale R\&S research (e.g., see \citealt{luo2015fully, ni2017efficient} and \citealt{zhong2022knockout}).
A detailed description of the problem can be found in \ref{subsec: TP_EC}. The problem has two parameters $S_1, S_2 $, which decide the total number of alternatives and the means of the alternatives. 
We consider four problem instances with different combinations of $S_1$ and $S_2$, as summarized in Table \ref{tab: TP}. For the problem instances, we recognize that there may be multiple alternatives bearing the best mean, and we regard each of them as the best. 
Then, \magenta{$\gamma$ becomes the minimal difference between the best mean and all other strictly smaller ones.} We also summarize the information about $\gamma$, 
the number of best alternatives, and the number of good alternatives under $\delta=0.01$ in Table \ref{tab: TP}. 
\begin{table}[htb]
\centering
\begin{tabular}{c|c|c|c|c|c}
\hline
\hline
$\left(S_1,S_2\right)$ & $k$ & Highest mean & $\gamma$ & \# of best alt. & \# of good alt. ($\delta=0.01$) \\ \hline
$\left(20,20\right)$ & $3,249$ & 5.7761 & 0.0046 & 2 & 6 \\
$\left(30,30\right)$ & $11,774$ & 9.1882 & 0.0038 & 1 & 3 \\
$\left(45,30\right)$ & $27,434$ & 13.7823 & 0.0057 & 1 & 3 \\
$\left(45,45\right)$ & $41,624$ & 14.1499 & 0.0038 & 2 & 4 
\\\hline 
\hline
\end{tabular}
\caption{Information about the throughput maximization problem instances.} 
\label{tab: TP}
\end{table}

\subsubsection{Experiment Settings and Findings.}
In the comparison, we test the procedures on the problem instances reported in Table \ref{tab: TP}.  For each problem instance and every procedure, we estimate the PCS and the PGS with different total sampling budgets. Specifically, when estimating the PCS, we let $B = 50k$, $100k$, and $200k$. When estimating the PGS, we set the IZ parameter $\delta = 0.01$ and let $B=30k$, $50k$, and $100k$. \magenta{See \ref{subsec: setting_EC} for more implementation settings.} Then, we report the results for PCS in Table \ref{tab: TPPCS} and the results for PGS in Table \ref{tab: TPPGS}. Note that according to the budget allocation scheme introduced previously, the modified SH procedure requires a minimum total sampling budget of $81k$ to start the selection process. Therefore, the estimated PCS for $B = 50k$, and the estimated PGS for $B=30k$ and $B=50k$ are left empty. 

\begin{table}[htbp]
	\centering
	\begin{tabular}{c|ccc|ccc|ccc|ccc}
		\hline
		\hline
		$k$          & \multicolumn{3}{c|}{3,249}                    & \multicolumn{3}{c|}{11,774}                   & \multicolumn{3}{c|}{27,434}                   & \multicolumn{3}{c}{41,624}                    \\ \hline
		$c\, (B=ck)$ & 50      & 100     & 200       & 50        & 100     & 200       & 50      & 100     & 200     & 50      & 100     & 200       \\ \hline
		FBKT         & 0.31    & 0.40    & 0.51      & 0.24      & 0.31    & 0.36      & 0.30    & 0.34    & 0.40    & 0.37    & 0.40    & 0.50      \\
		FBKT-Seeding & 0.41    & 0.46    & 0.55      & 0.31      & 0.34    & 0.41      & 0.34    & 0.37    & 0.45    & 0.40    & 0.47    & 0.50      \\
		SH           &  0.57 &  0.63 &  0.69   &  0.44      &  0.52 &  0.60      &  0.54 &  0.61 &  0.70 &  0.62 &  0.69    &  {0.75} \\
		Modified SH  & -       & {0.61}  & 0.63      & -         & {0.41}  & {0.52}    & -       & {0.57}  & {0.57}  & -       & {0.65}  & {0.74}    \\
		EFG          & 0.37    & 0.38    & 0.43      & 0.24      & 0.24    & 0.30      & 0.22    & 0.24    & 0.29    & 0.26    & 0.30    & 0.35      \\
		EFG$^+$      & {0.51}  & 0.59    & {0.66}    & {0.35}    & 0.38    & {0.52}    & {0.37}  & 0.45    & 0.53    & {0.50}  & 0.63    & 0.68      \\
		\hline
		\hline
	\end{tabular}
	\caption{Estimated PCS on the throughput maximization problem instances. }
	\label{tab: TPPCS}
\end{table}

\begin{table}[htbp]
	\centering
	\begin{tabular}{c|ccc|ccc|ccc|ccc}
		\hline
		\hline
		$k$          & \multicolumn{3}{c|}{3,249}                    & \multicolumn{3}{c|}{11,774}                   & \multicolumn{3}{c|}{27,434}                   & \multicolumn{3}{c}{41,624}                             \\ \hline
		$c\, (B=ck)$ & 30      & 50        & 100       & 30        & 50      & 100       & 30        & 50     & 100     & 30      & 50      & 100     \\ \hline
		FBKT         & 0.78    & 0.87      & 0.94      & 0.67      & 0.70    & 0.80      & 0.69      & 0.73   & 0.82    & 0.64    & 0.65    & 0.71    \\
		FBKT-Seeding & 0.86    & 0.87      & 0.95      & 0.75      & 0.78    & 0.87      & 0.78      & 0.83   & 0.88    & 0.70    & 0.71    & 0.80    \\
		SH           &  0.98 &  0.99   &  {1.00} &  0.90    &  0.96 &  {0.98} &  0.96   &  0.99   &  {0.99}  &  0.92 &  0.96 & {0.96}  \\
		Modified SH  & -       & -         &  {1.00}    & -         & -       & {0.97}    & -         & -      &  {0.99}  & -       & -       &  {0.97}  \\
		EFG          & 0.77    & 0.83      & 0.89      & 0.53      & 0.60    & 0.67      & 0.54      & 0.60   & 0.65    & 0.48    & 0.52    & 0.58    \\
		EFG$^+$      & {0.94}  & {0.98}    &  {1.00}    & {0.82}    & {0.84}  & 0.94      & {0.79}    & {0.90} & 0.94    & {0.78}  & {0.89}  & 0.92    
		\\ \hline
		\hline
	\end{tabular}
	\caption{Estimated PGS  on the throughput maximization problem instances.}
	\label{tab: TPPGS}
\end{table}

From Table \ref{tab: TPPCS}, we have three findings regarding the PCS. First,
in each column of Table \ref{tab: TPPCS}, the PCS of the EFG$^+$ procedure is far higher than that of the EFG procedure. This demonstrates that although 20\% of the given sampling budget is sacrificed for the additional seeding phase, the enhanced exploration through the seeding phase is much more effective than the equal-allocation exploration for the EFG procedure.  Second, the EFG procedure performs weaker than the FBKT procedure and also the FBKT-Seeding procedure. 
However, the EFG$^+$ procedure performs better than the FBKT-Seeding procedure. This again verifies the effectiveness of the seeding phase for improving the EFG procedure. 
Third, when $k=3249$ and $k=11774$, the EFG$^+$ procedure is comparable to the modified SH procedure. 
\magenta{However, it is not as good as the original SH procedure, which performs the best among all the compared procedures on all problem instances, even though the original SH procedure has not been proved sample optimal \citep{karnin2013almost}.}

We have three additional findings from Table \ref{tab: TPPGS}. First, all the above findings about PCS also hold for PGS. Second, by comparing the results for $B=50k$ or $100k$ of Table \ref{tab: TPPCS} and Table \ref{tab: TPPGS}, we find that for all procedures, the obtained PGS is much larger than the corresponding PCS. This is because there are a larger number of good alternatives than the best alternative(s) in the problems as reported in Table \ref{tab: TP}, and the procedures are able to identify alternatives that are good enough but not the best. Third, even when the total sampling budget is relatively small (i.e., $B=30k$ or $50k$) and the modified SH procedure is not appliable, our EFG$^+$ procedure can still obtain a relatively high PGS for large-scale problems, e.g., $0.89$ for $k=41624$.  Furthermore, when the total sampling budget is relatively large, i.e., $B=100k$, our EFG$^+$ procedure is comparable to the modified SH procedure regarding the PGS.

To summarize, we conclude that the EFG$^+$ procedure performs significantly better than the EFG procedure, and it is comparable to the existing fixed-budget sample-optimal procedures in solving large-scale R\&S problems. \magenta{In addition,  it is also interesting to observe that the original SH procedure performs the best even though it is not provably rate optimal. However, we would like to emphasize that the ranking of these procedures depends on the problem being solved.  See \ref{subsec: more_compare_EC} for additional experiments where the EFG$^+$ procedure performs the best.}  



\subsection{\magenta{Summary of Additional Experiments}}
\label{subsec: additional}
\magenta{To expand our research scope, we perform additional experiments to further explore the behaviors of our procedures. Particularly, we center our attention on the EFG procedure, since it is more realistic than the greedy procedure and serves as a foundation for other variants of EFG procedures. We first investigate the budget allocation mechanism of the EFG procedure. Subsequently, we create problem configurations that challenge the structural assumptions (e.g., Assumption \ref{assu:asyreg}) outlined in the paper, and proceed to evaluate the procedure's PCS and PGS under these configurations. We leave the details and the results in \ref{subsec: allocationAlternativesMain} and \ref{subsec: additional_EC} and summarize the main findings here.}



\magenta{The EFG procedure intelligently concentrates the sampling budget on a small subset of highly promising alternatives during the greedy phase by leveraging the information obtained from the exploration phase. Specifically, in our experiments under the slippage configuration of means, we observe that for a relatively large $k$, only around 24\% of the non-best alternatives obtain observations in the greedy phase. 
As the means of the non-best alternatives become more dispersed, fewer non-best alternatives obtain observations in the greedy phase. In our experiments,
when the configuration is non-slippage and the means spread over a bounded set, the proportion of the non-best alternatives that obtain observations in the greedy phase converges to about 2\% as $k$ increases. Additionally, when the means of the non-best alternatives can be arbitrarily low and new alternatives are progressively worse as $k$ increases, the aforementioned proportion tends to 0, as only the few top alternatives close to the best alternative obtain observations in the greedy phase.} \orange{We emphasize that in this case,  the EA exploration phase of the EFG procedure is inefficient as a large proportion of the exploration budget will be wastefully allocated to clearly inferior alternatives. Luckily, this inefficiency can be alleviated by adding a seeding phase which is introduced in Section \ref{subsec: EFGPlus}, motivating the use of the EFG$^+$ procedure. }

Furthermore, numerical experiments show that the PCS and PGS of the EFG procedure depend on the specific problem configurations. Firstly, when the mean gap $\gamma \rightarrow 0$ as $k$ grows, the PCS of the EFG procedure with a total sampling budget $B=ck$ could decrease to 0. This indicates that Assumption \ref{assu:asyreg} might be necessary for the EFG procedure to achieve the sample optimality regarding the PCS defined in Definition \ref{def: rate_optimality}.  Secondly, regarding PGS, the EFG procedure also performs significantly better than the EA procedure when the number of good alternatives $g(k)$ is unbounded (e.g., growing at the order $O(k)$ as $k \rightarrow \infty$), showing the value of the greedy phase in improving the sample efficiency. Thirdly, there is an interesting phenomenon regarding PGS: when $g(k)$ grows at the order $O(\sqrt{k})$, the PGS even goes to 1 as $k$ increases; however, when it grows faster, i.e., at the order $O(k)$, the PGS does not exhibit a similar trend. 
\red{This discrepancy may be because, in the former case, the good alternatives constitute only a negligible proportion of the $k$ alternatives and they collectively may only utilize a tiny budget relative to $k$ particularly when $k$ is large; however, in the latter case, the good alternatives may occupy too much of the sampling budget, making it less likely for all non-best alternatives to finish their boundary-crossing processes. Such discrepancy indicates that the PGS of the EFG procedure hinges on the order in which the number of alternatives grows as $k$ increases.}

\section{Concluding Remarks}
\label{sec: conclusion}
This paper is driven by a surprising discovery that the simple greedy procedure exhibits superior performance in solving large-scale R\&S problems. Through a boundary-crossing perspective, we prove that the greedy procedure is sample optimal. More precisely, we provide a non-zero lower bound for the asymptotic PCS as the number of alternatives grows to infinity and show that the lower bound is tight under the slippage configuration of means with a common variance. In addition, we derive an analogous result for the PGS of the greedy procedure, when the number of good alternatives is finite and bounded. Then, to overcome the inconsistency of the greedy procedure, we add an exploration phase to it and design the EFG and $\mbox{EFG}^+$ procedures. Again based on the boundary-crossing perspective, we show that the EFG and $\mbox{EFG}^+$ procedures are sample optimal and consistent. 
\magenta{We further extend the $\mbox{EFG}^+$ procedure to the $\mbox{EFG}^{++}$ procedure to take advantage of the widely available parallel computing resources.}
We verify our findings by the numerical experiments and show that the greedy procedures may perform comparably well to the existing sample-optimal R\&S procedures. From our viewpoint, this research suggests that a new mindset may be necessary for designing large-scale R\&S procedures.


To close this paper, we summarize several potential research directions for future investigation. From the theoretical side, two primary research questions arise. First, the current PGS analysis is conducted under the assumption that the number of good alternatives is finite and bounded. 
\red{It is interesting and important to study how to extend the analysis to more general settings to provide a rigorous characterization of the greedy procedures' PGS when the number of good alternatives is unbounded.}
Second, apart from the PCS and PGS, the EOC is another commonly used measure of the effectiveness of an R\&S procedure. It is interesting to ask whether the greedy procedures remain sample optimal in terms of the EOC.

\magenta{From the practical side, we think that three interesting research issues emerge. Firstly, as mentioned in Section \ref{sec: intro}, in this paper, our attention is on the pure R\&S procedures, ignoring the possible structural information of ``nearby" alternatives. However, for many practical large-scale problems, the ``nearby"  structure may be too informative to be overlooked. Procedures exploiting such structural information, e.g., by using Gaussian process regression to capture the similarity between the means of neighboring alternatives \citep{semelhago2019gaussian, semelhago2021rapid}, may be much more efficient than our general-purpose large-scale R\&S procedures. Therefore, investigating how the family of greedy procedures can leverage the structural information holds significant promise. Secondly, as shown above, adding a seeding phase exhibits a great improvement on the performance for the EFG procedure. Then, how to allocate the seeding budget more strategically than an equal-allocation scheme to achieve further improvement is worthy of further investigation. Lastly, considering the performance enhancements associated with common random numbers (CRN) demonstrated by \cite{chick2001newCRN} and \cite{hong2022solving}, it becomes interesting to investigate the integration of CRN into the greedy procedures and whether a similar benefit can be achieved. In the EFG procedure, a straightforward implementation involves incorporating CRN exclusively during the equal-allocation exploration phase, while executing the subsequent greedy phase without CRN.  By doing so, similar performance enhancement may be anticipated.}

\bibliographystyle{informs2014} 
\bibliography{ref.bib} 


\ECSwitch
%
%
\ECHead{\magenta{E-Companion to ``The (Surprising) Sample Optimality of Greedy
Procedures for Large-Scale Ranking and Selection''}}
%
%
%
%
%


\section{Technical Supplements for Section \ref{sec: GreedyOpt}}
\vspace*{6pt}
\subsection{Proof of Lemma \ref{lem:argmin}}
\label{subsec: proofargmin}
We prove the lemma by contradiction. For any particular sample path $\{\bar{Z}(n;\omega),n=1,2,\ldots\}$ of the running-average process, we let $M(\omega)=\argmin_{n\geq 1}\bar{Z}(n;\omega)$ denote the location of the minima. Suppose that the lemma is violated. Then, there must exist a collection of sample paths $\Omega=\{\omega: \bar{Z}(n;\omega),n=1,2,\ldots\}$ such that 
\begin{align}
\label{proof:lemma1}
M(\omega) = \infty \mbox{ for any } \omega\in\Omega, \mbox{ and } \Pr\{\omega\in\Omega\}>0.
\end{align}
Further, we let $M_s(\omega)=\argmin_{1\leq n\leq s}\bar{Z}(n;\omega)$. If $M_s(\omega)$ has tied values, we set it as the smallest one. It is clear that, for any $\omega\in\Omega$,  $\{M_s(\omega), s=1,2,\ldots\}$ is a monotonically non-decreasing sequence. In other words, for any $\omega\in\Omega$,
\begin{align*}
M_s(\omega)\uparrow M(\omega)=\infty, \mbox{ as } s\uparrow \infty, \mbox{ and } \bar{Z}(M_1(\omega);\omega)\geq \bar{Z}(M_2(\omega);\omega)\geq \dots\geq \bar{Z}(M_s(\omega);\omega)\geq \dots.
\end{align*}
By the strong law of large number, $\bar{Z}(M_s(\omega);\omega)\downarrow 0$ as $s\uparrow \infty$. Therefore, we have that $\bar{Z}(M_s(\omega);\omega)=\min_{1\leq n\leq s}\bar{Z}(n;\omega)\geq \lim\limits_{s\uparrow\infty}\bar{Z}(M_s(\omega);\omega)=0$ for any $s\geq 1$, or equivalently, $\min_{n\geq 1}\bar{Z}(n;\omega)\geq 0$. Then, in the consideration of Lemma \ref{lem:normalbc}, we can derive
\[
\Pr\{\omega\in\Omega\}\leq\Pr\left\{\min_{n\geq 1}\bar{Z}(n;\omega)\geq 0\right\}=\Pr\{N(0)=\infty\}=\exp\left(-\sum_{n=1}^{\infty} n^{-1}\mathrm{P}(\bar{X}(n)\leq 0)\right)=0.
\]
Clearly, this contradicts with Equation \eqref{proof:lemma1}. The proof is completed.\hfill\Halmos
\vspace*{6pt}
\subsection{Proof of Lemma \ref{lem:normalbc}}
\label{subsec: proofNormalBC}
In this proof, we only prove that $C(x) = \exp \left(\sum\limits_{n=1}^{\infty} \frac{1}{n} \Phi\left(-\sqrt{n}x \right)\right)$ is continuous on $x\in(0,\infty)$. The other results are direct consequences of Corollaries 8.39 and 8.44 in \cite{itemSiegmund1985}.
Let 
\[
f_m(x) = \sum_{n=1}^{m} \frac{1}{n} \Phi\left(-\sqrt{n}x \right) \mbox{ and } f(x) =\lim_{m\to\infty}f_m(x) =\sum_{n=1}^{\infty} \frac{1}{n} \Phi\left(-\sqrt{n}x \right).
\]
Given any constant $\delta>0$, it is clear that
\begin{eqnarray*}
\left|  f_m(x) - f(x) \right|= \sum_{n=m+1}^{\infty} \frac{1}{n} \Phi\left(-\sqrt{n} x \right) \leq\sum_{n=m+1}^{\infty} \frac{1}{n} \Phi\left(-\sqrt{n} \theta \right), \forall x\in[\theta,\infty)
\end{eqnarray*}
Therefore, $f_m(x)$ uniformly converges to $f(x)$ on $x\in[\theta,\infty)$. Further, by Theorem 7.12 of \cite{itemRudin1976}, $C(x)$ is continuous on $[\theta,\infty)$. Notice that the continuity holds for any arbitrarily small $\theta>0$, then the conclusion is drawn.\hfill \Halmos

\vspace*{6pt}
\subsection{Proof of Theorem \ref{thm: Greedy_PCS_opt}}
\label{subsec: proofGreedyOpt}
As stated in Equation \eqref{eqn:pcs2}, the PCS can be formulated as follows,
\[
\mbox{PCS}  \geq  \Pr\left\{ck \geq \argmin_{n\geq 1}\bar X_1(n)+\sum_{i=2}^{k}N_i\left(\min_{n \geq 1}\bar X_1(n)\right) \right\}.
\]
Then, to complete the proof, it suffices to build several useful properties on the sum of the boundary-crossing times of non-best alternatives, $\sum_{i=2}^{k}N_i\left(\min_{n \geq 1}\bar X_1(n)\right)$.

To begin with, we first set up some necessary notations. For any given $c>C(\gamma/\overline{\sigma})$, we can arbitrarily select a constant $\epsilon$ with $0<2\epsilon<c-C\left({\gamma}/{\overline\sigma}\right)$. Since $C(\cdot)$ is a continuous and monotonically decreasing function by Lemma \ref{lem:normalbc}, we can always choose a unique positive constant $\gamma_\epsilon^-\in(0,\gamma)$ such that
\begin{eqnarray}\label{thm1: C1}
C\left(\frac{\gamma-\gamma_\epsilon^-}{\overline\sigma}\right) =c-2\epsilon.
\end{eqnarray}

Suppose that the condition $A_\epsilon = \{\min_{n\geq 1} \bar X_{1}(n) > \mu_1 -\gamma_\epsilon^-\}$ holds. Because $\mu_1-\mu_i\geq \gamma$ and  $\sigma_i^2\leq \overline\sigma^2$,
we have from Equation \eqref{eqn: Nbound} that
\[
\sum_{i=2}^{k} N_i\left(\min_{n\geq 1}\bar X_1(n)\right)\leq \sum_{i=2}^k N_i\left(\mu_1 -\gamma_\epsilon^-\right)\leq \sum_{i=2}^k \tilde{N}_i \left(\frac{\gamma-\gamma_\epsilon^-}{\overline\sigma}\right).
\]
Notice that
$\tilde{N}_i \left(\frac{\gamma-\gamma_\epsilon^-}{\overline\sigma}\right), i=2,3,\ldots,k,$ are independent and identically distributed, we further have, as $k\to\infty$,
\begin{eqnarray}\label{eqn:proofThereom1_1}
\frac{1}{k}\sum_{i=2}^{k} N_i\left(\min_{n\geq 1}\bar X_1(n)\right)\leq  \frac{1}{k}\sum_{i=2}^k  \tilde{N}_i \left(\frac{\gamma-\gamma_\epsilon^-}{\overline\sigma}\right)
\stackrel{\mbox{a.s.}}{\longrightarrow}\E\left[\tilde{N}_i\left(\frac{\gamma-\gamma_\epsilon^-}{\overline\sigma}\right)\right]= C\left(\frac{\gamma-\gamma_\epsilon^-}{\overline\sigma}\right)=c-2\epsilon, \quad\quad \quad
\end{eqnarray}
according to the strong law of large numbers. The last equality above holds by the choice of $\gamma_\epsilon^-$ given in Equation \eqref{thm1: C1}. We let
\[
\Omega_{\epsilon}=\left\{\frac{1}{k}\sum_{i=2}^{k} N_i\left(\min_{n\geq 1}\bar X_1(n)\right)- (c-2\epsilon)\leq  \epsilon \right\}.
\]
Then by the definition of the strong law of almost sure convergence and Equation \eqref{eqn:proofThereom1_1},
\begin{equation}\label{thm1: PCS_slln}
\begin{aligned}
\lim_{k\to\infty} \Pr\{\Omega_{\epsilon}|A_\epsilon\}&\geq \lim_{k\to\infty} \Pr\left\{\frac{1}{k}\sum_{i=2}^k  \tilde{N}_i \left(\frac{\gamma-\gamma_\epsilon^-}{\overline\sigma}\right)-(c-2\epsilon)\leq \epsilon\big|A_\epsilon\right\} \\
&=  \lim_{k\to\infty} \Pr\left\{\frac{1}{k}\sum_{i=2}^k  \tilde{N}_i \left(\frac{\gamma-\gamma_\epsilon^-}{\overline\sigma}\right)-(c-2\epsilon)\leq \epsilon\right\}\\
&=1.
\end{aligned}
\end{equation}

Now we are ready to study the PCS formulated in Equation \eqref{eqn:pcs2}. To be specific,
\begin{eqnarray}
\label{thm1: CScond}
\notag \mbox{PCS} &  \geq &  \Pr\left\{ck \geq \argmin_{n\geq 1}\bar X_1(n)+\sum_{i=2}^{k}N_i\left(\min_{n \geq 1}\bar X_1(n)\right) \right\}\\
\notag & \geq & \Pr\left\{ \left\{ck \geq \argmin_{n\geq 1}\bar X_1(n)+\sum_{i=2}^{k}N_i\left(\min_{n \geq 1}\bar X_1(n)\right)\right\} \cap A_\epsilon\cap \Omega_{\epsilon} \right\} \\
\notag & \geq & \Pr\left\{\left\{c k \geq \argmin_{n\geq 1} \bar X_1(n) +(c-2\epsilon)k+\epsilon k\right\} \cap A_\epsilon\cap \Omega_{\epsilon} \right\}\quad \mbox{(by definition of $\Omega_\epsilon$)} \\
\notag &=&  \Pr\left\{ \left\{\epsilon k \geq \argmin_{n\geq 1} \bar X_1(n)\right\} \cap A_\epsilon\cap \Omega_{\epsilon} \right\}\\
\notag & \geq & \Pr\{A_\epsilon\cap \Omega_\epsilon\}-\Pr\left\{\argmin_{n\geq 1} \bar X_1(n) > \epsilon k\right\}\\
\notag & = & \Pr\{\Omega_{\epsilon}|A_\epsilon\}\Pr\{A_\epsilon\}-\Pr\left\{\argmin_{n\geq 1} \bar X_1(n) > \epsilon k\right\},
\end{eqnarray}
where the last inequality arises due to \magenta{the fact that $\Pr\left\{A\cap B\right\}=\Pr\left\{B\right\}-\Pr\left\{A^c\cap B\right\}\geq \Pr\left\{B\right\}-\Pr\left\{A^c\right\}$}. From Equation \eqref{thm1: PCS_slln}, letting $k\to\infty$ in above gives rise to
\begin{equation}\label{thm1: PCS4}
\begin{aligned}
\liminf_{k\to\infty} \mbox{PCS}&\geq\limsup_{k\to\infty}  \Pr\{\Omega_{\epsilon}|A_\epsilon\}\Pr\{A_\epsilon\} -\Pr\left\{\argmin_{n\geq 1} \bar X_1(n) =\infty\right\}\\
&= \Pr\left\{\min_{n\geq 1} \bar X_{1}(n) >  \mu_1 -\gamma_\epsilon^-\right\}-\Pr\left\{\argmin_{n\geq 1} \bar X_1(n) =\infty\right\}.
\end{aligned}
\end{equation}
Notice that the above statement holds for any $\epsilon$ sufficiently close to zero. Besides, it is clear that $C(\cdot)$ is continuous, and therefore,
$
  \lim\limits_{\epsilon \to 0} \gamma_\epsilon^- = \gamma_0$
where $\gamma_0$ is a positive constant given by $C\left(\frac{\gamma-\gamma_0}{\bar \sigma}\right) =c$. Then, by Lemma \ref{lem:argmin}, letting $\epsilon\to 0$ yields
\begin{eqnarray*}
\liminf_{k \to \infty} \mbox{PCS} &\geq& \Pr\left\{\min_{n\geq 1} \bar X_{1}(n) >  \mu_1 -\gamma_0\right\}- \Pr\left\{\argmin_{n\geq 1} \bar X_1(n) =\infty\right\}\\
\notag&=& \Pr\left\{\min_{n\geq 1} \bar X_{1}(n) >  \mu_1 -\gamma_0\right\}\\
\notag &=&\left[C\left(\frac{\gamma_0}{\sigma_1}\right)\right]^{-1},
\end{eqnarray*}
This concludes the proof of the theorem. \hfill\Halmos

\vspace*{6pt}
\subsection{Proof of Theorem \ref{thm: Greedy_PCS_tightness}}
\label{subsec: proofPCStightness}
As stated in Section \ref{subsec: tightness}, it sufficient to show the following statement
\begin{align}\label{eqn: PICS}
\liminf_{k\to\infty}\, \mbox{PICS} \geq \Pr\left\{\min_{n\ge 1} \bar X_1(n) < \mu_1-\gamma_0 \right\}=\Pr\{\exists\, n \geq 1 \text{ s.t. } \bar X_1(n) < \mu_1 - \gamma_0 \}.
\end{align}

By Lemma \ref{lem:normalbc}, $C(\cdot)$ is a strictly decreasing and continuous function in $(0,\infty)$. Then, for any constant $\epsilon$ with $0<2\epsilon<c-C\left({\gamma}/{\sigma}\right)$, we can always choose  $0<\gamma_\epsilon^-, \gamma_\epsilon^+<\gamma$ such that
  \begin{eqnarray}
     \label{cor1: boundaries}
    C\left(\frac{\gamma-\gamma_\epsilon^-}{\sigma}\right) =c-2\epsilon \text{ and }
    C\left(\frac{\gamma-\gamma_\epsilon^+}{\sigma}\right) =c+2\epsilon.
  \end{eqnarray}
Further, it is easy to check that $\gamma_\epsilon^-<\gamma_0<\gamma_\epsilon^+$ and $\gamma_\epsilon^-, \gamma_\epsilon^+\to \gamma_0$ as $\epsilon\to 0$. 

Let $m_\epsilon=\inf\{n\geq 1: \bar{X}_1(n)<\mu_1-\gamma_\epsilon^+\}$. Suppose for now the condition $B_\epsilon=\{m_\epsilon<\infty, \min_{1 \leq n\leq m_\epsilon-1}\bar{X}_1(n) \geq \mu_1-\gamma_\epsilon^-\}$ is forced. When $m_\epsilon=1$, we set $\bar{X}_1(m_\epsilon-1)=\bar{X}_1(0)=\infty$. From the boundary-crossing perspective, the greedy procedure would produce an incorrect selection whenever the sample mean of alternative 1 falls below $\mu_1-\gamma_\epsilon^+$ but the remaining sampling budget is not enough to let alternative 1 become the sample best again. In other words, we may write
\begin{eqnarray}\label{falseselection}
\mbox{PICS}\geq\Pr\left\{\sum_{i=2}^k N_i\left(\min_{1 \leq n\leq m_\epsilon-1}\bar{X}_1(n)\right) +m_\epsilon\leq B < \sum_{i=2}^k N_i\left(\min_{1 \leq n\leq m_\epsilon}\bar{X}_1(n)\right)+m_\epsilon\right\}.
\end{eqnarray}
Under the SC-CV and condition $B_\epsilon$, the sum of boundary-crossing times satisfies that, as $k\to\infty$,
\begin{eqnarray*}
  \lefteqn{\frac{1}{k-1}\sum_{i=2}^k N_i\left(\min_{1\leq n\leq m_\epsilon-1} \bar{X}_1(n)\right)}\\
  &\leq& \frac{1}{k-1}\sum_{i=2}^k N_i\left(\mu_1-\gamma_\epsilon^-\right)= \frac{1}{k-1}\sum_{i=2}^k \tilde N_i\left(\frac{\gamma-\gamma_\epsilon^-}{\sigma}\right) \to C\left(\frac{\gamma-\gamma_\epsilon^-}{\sigma}\right)=c-2\epsilon,\mbox{ a.s. } 
  \end{eqnarray*}
  by the strong law of large numbers and Equation \eqref{cor1: boundaries}. Likewise, as $k\to\infty$,
\begin{eqnarray*}
\lefteqn{\frac{1}{k-1}\sum_{i=2}^k N_i\left(\min_{1\leq n\leq m_\epsilon} \bar{X}_1(n)\right)}\\
&\geq &\frac{1}{k-1}\sum_{i=2}^k N_i\left(\mu_1-\gamma_\epsilon^+\right) = \frac{1}{k-1}\sum_{i=2}^k \tilde N_i\left(\frac{\gamma-\gamma_\epsilon^+}{\sigma}\right)\to C\left(\frac{\gamma-\gamma_\epsilon^+}{\sigma}\right)=c+2\epsilon,\mbox{ a.s. }
\end{eqnarray*}
Therefore, given any $\epsilon>0$,  the collections of sample paths denoted by
\begin{equation}\label{eqn: Omega}
\begin{aligned}
\Omega_{\epsilon}^-=&\left\{\frac{1}{k-1}\sum_{i=2}^k N_i\left(\min_{1\leq n\leq m_\epsilon-1} \bar{X}_1(n)\right)-(c-2\epsilon) \leq \epsilon \right\}\\
\Omega_{\epsilon}^+=&\left\{\frac{1}{k-1}\sum_{i=2}^k N_i\left(\min_{1\leq n\leq m_\epsilon} \bar{X}_1(n)\right)-(c+2\epsilon)   \geq -\epsilon \right\}
\end{aligned}
\end{equation}
satisfy that
\begin{eqnarray}\label{cor1: PCS_slln}
\lim_{k\to\infty} \Pr\{\Omega_\epsilon^+|B_\epsilon\} = \lim_{k\to\infty} \Pr\{\Omega_\epsilon^-|B_\epsilon\} = 1.
\end{eqnarray}

Now we are ready to study the PICS in Equation \eqref{falseselection}. Plugging Equation \eqref{falseselection} and Equation \eqref{eqn: Omega} into Equation \eqref{falseselection} gives
\begin{eqnarray*}
\notag \mbox{PICS}
&\geq& \Pr\left\{\left\{\sum_{i=2}^k N_i\left(\min_{1 \leq n\leq m_\epsilon-1}\bar{X}_1(n)\right) +m_\epsilon\leq B < \sum_{i=2}^k N_i\left(\min_{1 \leq n\leq m_\epsilon}\bar{X}_1(n)\right)+m_\epsilon\right\}\bigcap\Omega_\epsilon^+\bigcap\Omega_\epsilon^-\bigcap B_\epsilon\right\}\\
\notag &\geq & \Pr\left\{\left\{(k-1)(c-\epsilon) +m_\epsilon\leq ck < (k-1)(c+\epsilon)+m_\epsilon\right\}\cap\Omega_\epsilon^+\cap\Omega_\epsilon^-\cap B_\epsilon\right\}\\
\notag &= & \Pr\left\{\left\{c-(k-1)\epsilon<m_\epsilon\leq c+(k-1)\epsilon \right\}\cap\Omega_\epsilon^+\cap\Omega_\epsilon^-\cap B_\epsilon\right\}\\
\notag &\geq& \Pr\left\{\left\{c-(k-1)\epsilon<m_\epsilon\leq c+(k-1)\epsilon \right\}\cap B_\epsilon\right\}-\Pr\left\{(\Omega_\epsilon^+)^c\cap B_\epsilon\right\}-\Pr\left\{(\Omega_\epsilon^-)^c\cap B_\epsilon\right\}\\
\notag &=& \Pr\left\{\left\{c-(k-1)\epsilon<m_\epsilon\leq c+(k-1)\epsilon \right\}\cap B_\epsilon\right\}-\Pr\left\{(\Omega_\epsilon^+)^c| B_\epsilon\right\}\Pr\{B_\epsilon\}-\Pr\left\{(\Omega_\epsilon^-)^c| B_\epsilon\right\}\Pr\{B_\epsilon\},
\end{eqnarray*}
where the last inequality holds \magenta{due to the fact that}
\magenta{
\begin{eqnarray*}
    \Pr\left\{A \cap B\cap C \cap D\right\} & = & \Pr\left\{A \cap D\right\} - \Pr\left\{A \cap (B\cap C)^c \cap D\right\} \\
     & \geq & \Pr\left\{A \cap D\right\} - \Pr\left\{ (B\cap C)^c \cap D\right\} \\
     & = & \Pr\left\{A \cap D\right\} - \Pr\left\{(B^c \cap D) \cup (C^c \cap D)\right\} \\
     & \geq &  \Pr\left\{A \cap D\right\} -\Pr\left\{B^c \cap D\right\} - \Pr\left\{C^c \cap D\right\}.
\end{eqnarray*}}
Taking $k\to\infty$ on both sides, we can derive
\begin{eqnarray}\label{proof: PICS}
\liminf_{k\to\infty}\ \mbox{PICS} \geq \Pr\left\{m_\epsilon<\infty, \min_{1 \leq n\leq m_\epsilon-1}\bar{X}_1(n) \geq \mu_1-\gamma_\epsilon^-\right\}
\end{eqnarray}
from Equation \eqref{cor1: PCS_slln}. The inequality above holds for any sufficiently small $\epsilon>0$. Recall that $\gamma_\epsilon^-\to\gamma_0$ as $\epsilon\to 0$, and accordingly, $m_\epsilon\to m=\inf\{n\geq 1:\bar{X}_1(n)<\mu_1-\gamma_0\}$ almost surely. Letting $\epsilon\to 0$ on the right hand side of the inequality above, we readily obtain that
\begin{eqnarray*}
\liminf_{k\to\infty}\ \mbox{PICS} &\geq& \limsup_{\epsilon\to 0}\Pr\left\{m_\epsilon<\infty, \min_{1 \leq n\leq m_\epsilon-1}\bar{X}_1(n) \geq \mu_1-\gamma_\epsilon^-\right\}\\
&=&\Pr\left\{m<\infty, \min_{1 \leq n\leq m-1}\bar{X}_1(n) \geq \mu_1-\gamma_0\right\} =\Pr\{\exists\, n \geq 1 \text{ s.t. } \bar X_1(n) < \mu_1 - \gamma_0 \}.
\end{eqnarray*}
The proof is completed.\hfill \Halmos

\vspace*{6pt}

\vspace*{6pt}
\subsection{Proof of Proposition \ref{prop: Greedy_PGS_opt}}
\label{subsec: proofgreedyPGS}

To analyze  the greedy procedure’s PGS from the boundary-crossing perspective, we regard the minimum of the running maximum of the good alternatives, i.e., $\min_{r\ge 1} \bar Y_\delta(r)$, as the boundary. Then, we prove the conclusion based on the PGS statement in Equation \eqref{eqn:pgs}, i.e.,
\begin{eqnarray*}
{\rm PGS}&\ge& \Pr\left\{\sum_{i\in{\cal N}} N_i\left(\min_{r\ge 1} \bar Y_\delta(r)\right)+ \argmin_{r\ge 1} \bar Y_\delta(r)\leq B \right\}.
\end{eqnarray*}
Notice that Equation \eqref{eqn:pgs} exhibits a similar form to the PCS statement in Equation \eqref{eqn:pcs2}. Following almost the same analysis of Theorem \ref{thm: Greedy_PCS_opt}, 
we obtain an analogous result of Equation \eqref{thm1: PCS4} as 
\begin{eqnarray}\label{proof: PGS1}
\liminf_{k\to\infty}\, {\rm PGS} &\ge& \limsup_{k\to\infty}\,\Pr\left\{\min_{r\ge 1} \bar Y_\delta(r) >  \mu_1 -\delta_\epsilon^-\right\}-\liminf_{k\to\infty}\,\Pr\left\{\argmin_{r\ge 1} \bar Y_\delta(r)=\infty\right\} \quad\quad
\end{eqnarray}
where $0<\epsilon<(c-C(\delta/\overline{\sigma}))/2$ and $\delta_\epsilon^-$ is chosen to satisfy
$C\left(\frac{\delta-\delta_\epsilon^-}{\overline\sigma}\right)=c-2\epsilon$.
By the continuity of $C(\cdot)$ stated in Lemma \ref{lem:normalbc},  
 $\delta_\epsilon^-\to\delta_0, \mbox{ as } \epsilon\to 0$.  
Letting $\epsilon\to 0$ in Equation \eqref{proof: PGS1} yields
\begin{eqnarray}\label{proof: PGS2}
\liminf_{k\to\infty}\, {\rm PGS} &\ge& \limsup_{k\to\infty}\,\Pr\left\{\min_{r\ge 1} \bar Y_\delta(r) >  \mu_1 -\delta_0\right\}-\liminf_{k\to\infty}\,\Pr\left\{\argmin_{r\ge 1} \bar Y_\delta(r)=\infty\right\}.  \quad\quad
\end{eqnarray}

Notice that Equation \eqref{eqn:pgs2} provides a reasonable upper bound for $\argmin_{r\ge 1} \bar Y_\delta(r)$. Specifically, 
\begin{equation*}
\argmin_{r\ge 1} \bar Y_\delta(r) \le \sum_{i\in{\cal G}} \argmin_{n\ge 1}\bar X_i(n),
\end{equation*}
Then, we can conclude that $\Pr\left\{\argmin_{r\ge 1} \bar Y_\delta(r)=\infty\right\}\le \sum_{i\in{\cal G}}\Pr\left\{\argmin_{n\ge 1}\bar X_i(n)=\infty\right\}$,  which amounts to zero by Lemma \ref{lem:argmin} and the assumption that $\limsup_{k\to\infty}|{\cal G}|<\infty$. Meanwhile, because $\min_{r\ge 1} \bar Y_\delta(r)=\max_{i\in\mathcal{G}}\min_{n\geq 1}\bar{X}_i(n)$, Equation \eqref{proof: PGS2} can be further written as
\begin{eqnarray*}
\liminf_{k\to\infty} {\rm PGS} \ge \limsup_{k\to\infty}\Pr\left\{\min_{r\ge 1} \bar Y_\delta(r) >  \mu_1 -\delta_0\right\}
=\limsup_{k\to\infty}\Pr\left\{\exists\,i\in\mathcal{G}:\min_{n\geq 1}\bar{X}_i(n)>\mu_1-\delta_0\right\}
\end{eqnarray*}
It is clear that the right-hand-side probability is at least $\Pr\left\{\min_{n\geq 1}\bar{X}_1(n) > \mu_1-\delta_0\right\}$. The proof is completed.\hfill\Halmos

\vspace*{6pt}
\section{Technical Supplements for Section \ref{sec: ExploreFirst}}
\vspace*{6pt}
\subsection{Proof of Lemma \ref{lem:argmin_EFG}}
\label{subsec: proofargminEFG}
Lemma \ref{lem:argmin_EFG} is a generalization of Lemma \ref{lem:argmin}. Similar to the proof of Lemma \ref{lem:argmin},  we prove Lemma \ref{lem:argmin_EFG} by contradiction.
For any $n_0\geq 1$, we let $M(\omega; n_0)=\argmin_{n\geq n_0}\bar{Z}(n;\omega)$ denote the location of the minima of the running-average process $\{\bar{Z}(n;\omega),n=n_0,n_0+1,\ldots\}$. Suppose that the lemma is violated. Then, there must exist a collection of sample paths $\Omega(n_0)=\{\omega: \bar{Z}(n;\omega),n=n_0,n_0+1,\ldots\}$ such that 
\begin{align}
\label{proof:lemma3}
M(\omega;n_0)=\infty \mbox{ for any } \omega\in\Omega(n_0), \mbox{ and } \Pr\{\omega\in\Omega(n_0)\}>0.
\end{align}
By repeating similar arguments in Section \ref{subsec: proofargmin} for the case $n_0=1$, we can derive
\[
\Pr\{\omega\in\Omega(n_0)\}\leq\Pr\left\{\min_{n\geq n_0}\bar{Z}(n;\omega)\geq 0\right\}=\Pr\{N(0;n_0)=\infty\}.
\]
where $N(0;n_0)$ is defined in Section \ref{subsec: runningave_explore} and we have that $\Pr\{N(0;n_0)=\infty\}=0$ by Lemma \ref{lem:normalbc_EFG}.
Clearly, this contradicts with Equation \eqref{proof:lemma3}. The proof is completed.\hfill\Halmos

\vspace*{6pt}
\subsection{Proof of Theorem \ref{thm: ExploreFirstGreedy_opt}}
\label{subsec: proofEFGopt}
Before moving to show Theorem \ref{thm: ExploreFirstGreedy_opt}, we first prepare the following Lemma \ref{lem: C_EFG}.
\begin{lem}
\label{lem: C_EFG}
Given any positive integer $n_0$, let $N(x;n_0)=\inf\{n\geq n_0:\bar{Z}(n)<x\}$. Then, $C(x; n_0)=\E[N(x;n_0)]$ is a strictly decreasing and continuous function on $x\in(0,\infty)$.
\end{lem}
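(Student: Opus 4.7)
The plan is to establish both properties directly at the pathwise level and then pass to expectations, since unlike the $n_0=1$ case of Lemma \ref{lem:normalbc} there is no closed-form series expression for $C(x;n_0)$ when $n_0\ge 2$.

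For strict monotonicity on $(0,\infty)$, I would first observe the pathwise fact that $x\mapsto N(x;n_0)$ is nonincreasing: if $x_1<x_2$, then any $n$ with $\bar Z(n)<x_1$ also satisfies $\bar Z(n)<x_2$, so $N(x_2;n_0)\le N(x_1;n_0)$. To upgrade inequality in expectations to strict inequality, I would exhibit a positive-probability event on which the gap is strictly positive: since $\bar Z(n_0)\sim N(0,1/n_0)$ has a strictly positive density, the event $A=\{x_1\le \bar Z(n_0)<x_2\}$ has positive probability; on $A$ we have $N(x_2;n_0)=n_0$ while $N(x_1;n_0)\ge n_0+1$. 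Combined with the pathwise bound, this gives $C(x_1;n_0)\ge C(x_2;n_0)+\Pr(A)>C(x_2;n_0)$.

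For continuity at a fixed $x>0$, I would apply dominated convergence after proving a.s.\ convergence of the first-passage time. Let $x_m\to x$. On $\{N(x;n_0)<\infty\}$ (which has probability $1$ by Lemma \ref{lem:normalbc_EFG}), I split into two cases. If $x_m\uparrow x$, monotonicity gives $N(x_m;n_0)\downarrow L\ge N(x;n_0)$, and at $n^*=N(x;n_0)$ we have $\bar Z(n^*)<x$ strictly, so for all sufficiently large $m$ also $\bar Z(n^*)<x_m$, forcing $N(x_m;n_0)\le n^*$ and hence $L=N(x;n_0)$. If $x_m\downarrow x$, then $N(x_m;n_0)\uparrow L\le N(x;n_0)$; since $N(x_m;n_0)$ is integer-valued and bounded by $L$, along a subsequence it equals a fixed integer $n^*\le L$, and $\bar Z(n^*)<x_m\to x$ together with the probability-zero event $\{\bar Z(n^*)=x\}$ yields $\bar Z(n^*)<x$ a.s., so $n^*\ge N(x;n_0)$, forcing $L=N(x;n_0)$. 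This gives $N(x_m;n_0)\to N(x;n_0)$ a.s. To dominate, fix any $a\in(0,x)$; for all $m$ large, $x_m\ge a$, hence by the pathwise monotonicity and Equation \eqref{eqn: NvsNE},
\[
N(x_m;n_0)\le N(a;n_0)\le n_0\,\tilde N\!\left(\tfrac{a}{\sqrt{n_0}}\right),
\]
and the right-hand side has finite expectation by Lemma \ref{lem:normalbc}. Dominated convergence then yields $C(x_m;n_0)\to C(x;n_0)$.

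The main obstacle is the $x_m\downarrow x$ direction of the a.s.\ convergence, because a priori the limit of first-passage times to boundaries above $x$ could be strictly below $N(x;n_0)$; handling this cleanly requires the observation that integer-valuedness together with the probability-zero hit event $\{\bar Z(n)=x\}$ upgrades a weak inequality $\bar Z(n^*)\le x$ to the strict $\bar Z(n^*)<x$ almost surely. Everything else, including the dominating bound via the $n_0=1$ reduction, is routine once Lemma \ref{lem:normalbc_EFG} and Equation \eqref{eqn: NvsNE} are in hand.
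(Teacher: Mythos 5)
Your proof is correct and follows essentially the same route as the paper's: pathwise monotonicity of $N(\cdot;n_0)$ plus the positive-probability event $\{x_1\le \bar Z(n_0)<x_2\}$ for strict decrease, and dominated convergence (with a dominating bound at a fixed level $a\in(0,x)$) for continuity. The only difference is that you spell out the almost-sure convergence $N(x_m;n_0)\to N(x;n_0)$ — including the $x_m\downarrow x$ case via integer-valuedness and the null event $\{\bar Z(n)=x\}$ — which the paper's proof asserts without detail, so your version is if anything more complete.
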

\begin{proof}{Proof:}
Firstly, we show that $C(x; n_0)$ is strictly decreasing in $x$ for any fixed $n_0$. Set $x_1$ and $x_2$ with $0<x_1< x_2 < \infty$. Taking a sample-path viewpoint, it is straightforward to see that $N(x_1;n_0)\geq N(x_2;n_0)$. Moreover, we find that the strict inequality holds with a non-zero probability, i.e.,  
\begin{eqnarray*}
\Pr\left\{N(x_1;n_0)> N(x_2;n_0)\right\} & \geq & \Pr\left\{N(x_1;n_0)> n_0, N(x_2;n_0)=n_0\right\} \\
& = & \Pr\left\{x_1 \leq \bar Z(n_0) < x_2\right\} = \Phi\left(\sqrt{n_0}x_2\right) - \Phi\left(\sqrt{n_0}x_1\right) >0.
\end{eqnarray*}
As a consequence, we can conclude that $C(x_1;n_0)>C(x_2;n_0)$.

Secondly, we prove the continuity of $C(x;n_0)$ with respect to $x$. For any $x_0\geq\epsilon>0$, we denote a sequence of numbers $\{x_m,m=1,2,\ldots\}\subset  [\epsilon, \infty)$ such that $\lim_{m\to\infty} x_m=x_0$. Recalling that $N(x;n_0)$ is decreasing in $x$, we can derive  
\begin{eqnarray*}
|C(x_m;n_0)|=\E[N(x_m;n_0)]\leq \E[N(\epsilon;n_0)].
\end{eqnarray*}
From Lemma \ref{lem:normalbc_EFG}, $\E[N(\epsilon;n_0)]=C(\epsilon;n_0)<\infty$. Further, by the Lebesgue's dominated convergence theorem, 
\[
\lim_{m\to\infty} C(x_m;n_0) =\lim_{m\to\infty} \E[N(x_m;n_0)]=\E\left[\lim_{m\to\infty} N(x_m;n_0)\right]
=\E\left[N(x_0;n_0)\right] = C(x_0;n_0).
\]
In other words, $C(x;n_0)$ is continuous in $x\in[\epsilon,\infty)$. Because the continuity holds for any $\epsilon>0$, we conclude that $C(x;n_0)$ is continuous in $x\in(0,\infty)$. The proof is completed.
 \hfill\Halmos
\end{proof}

\vspace*{10pt}
\textit{\textbf{Proof of Theorem \ref{thm: ExploreFirstGreedy_opt}}:} 
The proof is conducted based on the proofs of Theorems \ref{thm: Greedy_PCS_opt} and \ref{thm: Greedy_PCS_tightness}. To analyze the explore-first greedy procedure’s PCS from the boundary-crossing perspective, we regard the minimum of the running average of the best alternative after the exploration phase, i.e., $\min_{n\ge n_0} \bar X_1(n)$, as the boundary. Indeed, following almost the same analysis, 
we can conclude the following result which is analogous to Equation \eqref{thm1: PCS4}
\begin{eqnarray}\label{proof: PCS_EFG1}
\liminf_{k\to\infty}\, {\rm PCS} &\ge& \Pr\left\{\min_{n\ge n_0} \bar X_1(n) >  \mu_1 -\gamma^-_\epsilon\right\}-\Pr\left\{\argmin_{n\ge n_0} \bar X_1(n)=\infty\right\},
\end{eqnarray}
where 
\begin{align}\label{proof: PCS_EFG2}
C\left(\frac{\gamma-\gamma_\epsilon^-}{\overline\sigma};n_0\right)=c-2\epsilon.
\end{align}
By the continuity of $C(\cdot;n_0)$ stated in Lemma \ref{lem: C_EFG}, it is easy to check that $\gamma_\epsilon^-\to \gamma_0$ as $\epsilon\to 0$. Letting $\epsilon\to$ in Equation \eqref{proof: PCS_EFG1}, we can obtain
\begin{align}\label{proof: PCS_EFG3}
 \notag  \liminf_{k\to\infty}\, {\rm PCS} &\ge \Pr\left\{\min_{n\ge n_0} \bar X_1(n) >  \mu_1 -\gamma_0\right\}-\Pr\left\{\argmin_{n\ge n_0} \bar X_1(n)=\infty\right\}\\
   &=\Pr\left\{\min_{n\ge n_0} \bar X_1(n) >  \mu_1 -\gamma_0\right\}.
\end{align}
The second inequality arises from Lemma \ref{lem:argmin_EFG}.

In addition, we define $m_\epsilon(n_0)=\inf\{n\geq n_0:\bar{X}_1(n)<\mu_1-\gamma_\epsilon^-\}$ with $\gamma_\epsilon^-$ defined in Equation \eqref{proof: PCS_EFG2}. Under the SC-CV, we derive the analogous result of Equation \eqref{proof: PICS}, namely,
\begin{eqnarray*}
\liminf_{k\to\infty}\ \mbox{PICS} \geq \Pr\left\{m_\epsilon(n_0)<\infty, \min_{1 \leq n\leq m_\epsilon(n_0)-1}\bar{X}_1(n) \geq \mu_1-\gamma_\epsilon^-\right\}.
\end{eqnarray*}
Recalling that $\gamma_\epsilon^-\to \gamma_0$ as $\epsilon\to0$, we can have $m_\epsilon(n_0)\to m(n_0)=\inf\{n\geq n_0:\bar{X}_1(n)<\mu_1-\gamma_0\}$ almost surely. Therefore, by letting $\epsilon\to 0$, the above statement is further written as
\begin{eqnarray}\label{proof: PCS_EFG4}
\liminf_{k\to\infty}\ \mbox{PICS} \geq \Pr\left\{m(n_0)<\infty, \min_{1 \leq n\leq m(n_0)-1}\bar{X}_1(n) \geq \mu_1-\gamma_0\right\}=\Pr\left\{m(n_0)<\infty\right\}.
\end{eqnarray}
Considering Equation \eqref{proof: PCS_EFG3} and Equation \eqref{proof: PCS_EFG4}, the conclusion is drawn.\hfill\Halmos

\vspace*{6pt}
\subsection{The EFG Procedure's Sample Optimality Regarding PGS}
\label{subsec: proofEFGPGS}


\begin{repeatproposition}[\textbf{Proposition 2}]
		\label{prop: EFG_PGS_opt}
Suppose that $\sigma^2_i \leq \bar \sigma^2 < \infty$ for all $i=1,\ldots,k$ and $\limsup_{k\to\infty}|{\cal G}|<\infty$ for a given $\delta>0$. If the total sampling budget $B$ satisfies $B/k = n_0+n_g$ and $n_g > C\left(\frac{\delta}{\bar \sigma}; n_0\right)-n_0$, the PGS of the explore-first greedy procedure satisfies
\[
\liminf\limits_{k \to \infty} {\rm PGS} \geq \limsup_{k\to\infty}\Pr\left\{ \exists\ i\in{\cal G} : \min_{n\ge n_0}\bar X_{i}(n) > \mu_1-\delta_0 \right\}\geq \Pr\left\{\min_{n\ge n_0}\bar X_{1}(n) > \mu_1-\delta_0\right\},
\]
where $\delta_0$ is a positive constant such that $\delta_0 \in (0, \delta)$ and $C\left(\frac{\delta-\delta_0}{\bar \sigma};n_0\right) =n_0+n_g$.
\end{repeatproposition}

\begin{proof}{Proof.}
For the explore-first greedy procedure, we again let $\bar Y_\delta(r)=\max_{i\in{\cal G}} \bar X_i(n_i)$, where $n_i\geq n_0$ denotes the sample size of alternative $i$ after the exploration phase and $r=\sum_{i\in{\cal G}} n_i$. Note that $r \geq |\cal G| n_0$. 
To analyze  the explore-first greedy procedure’s PGS from the boundary-crossing perspective, we regard the minimum of the running maximum of the good alternatives after the exploration phase, i.e., $\min_{r\ge |\cal G| n_0} \bar Y_\delta(r)$, as the boundary.
Moreover, for each alternative $i \in {\cal N}$, let $N_i(x; n_0)=\inf\{n\geq n_0: \bar X_i(n) < x\}$ denote the boundary-crossing time after $n_0$ observations $w.r.t.$ a boundary $x$. 
Then, analogous to Equation (\ref{eqn:pgs}), we have
\begin{equation}
\label{eqn:EFGpgs}
{\rm PGS}\ge \Pr\left\{\sum_{i\in{\cal N}} N_i\left(\min_{r\ge |\cal G|n_0} \bar Y_\delta(r); n_0\right)+ \argmin_{r\ge |\cal G|n_0} \bar Y_\delta(r)\leq B \right\}.
\end{equation}
We prove the conclusion based on the PGS statement in Equation \eqref{eqn:EFGpgs}.
Notice that Equation \eqref{eqn:EFGpgs} exhibits a similar form to the PCS statement in Equation \eqref{eqn:pcs_EFG}. Following the same logic of obtaining Equation \eqref{proof: PGS1} for proving Proposition \ref{prop: Greedy_PGS_opt}, we can obtain an analogous result of Equation \eqref{proof: PCS_EFG1} as follows,
\begin{eqnarray}\label{proof: EFGPGS1}
\liminf_{k\to\infty}\, {\rm PGS} &\ge& \limsup_{k\to\infty}\,\Pr\left\{\min_{r\ge |\cal G|n_0} \bar Y_\delta(r) >  \mu_1 -\delta_\epsilon^-\right\}-\liminf_{k\to\infty}\,\Pr\left\{\argmin_{r\ge |\cal G|n_0} \bar Y_\delta(r)=\infty\right\},  \quad\quad
\end{eqnarray}
where $0<\epsilon<(c-C(\delta/\overline{\sigma}; n_0))/2$ and 
$\delta_\epsilon^-$ is chosen to satisfy
$C\left(\frac{\delta-\delta_\epsilon^-}{\overline\sigma};n_0\right)=c-2\epsilon.$
By the continuity of $C(\cdot;n_0)$ stated in Lemma \ref{lem: C_EFG}, it is easy to check that $\delta_\epsilon^- \to \delta_0$ as $\epsilon\to 0$.
Letting $\epsilon\to 0$ in Equation \eqref{proof: EFGPGS1} yields
\begin{eqnarray}\label{proof: EFGPGS2}
\liminf_{k\to\infty}\, {\rm PGS} &\ge& \limsup_{k\to\infty}\,\Pr\left\{\min_{r\ge |\cal G|n_0} \bar Y_\delta(r) >  \mu_1 -\delta_0\right\}-\liminf_{k\to\infty}\,\Pr\left\{\argmin_{r\ge |\cal G|n_0} \bar Y_\delta(r)=\infty\right\}, \quad\quad
\end{eqnarray}

Notice that one reasonable upper bound for $\argmin_{r\ge |\cal G|n_0} \bar Y_\delta(r)$ is that
\begin{eqnarray*}
  \argmin_{r\ge |\cal G|n_0} \bar Y_\delta(r) \le \sum_{i\in{\cal G}} \argmin_{n\ge n_0}\bar X_i(n),
\end{eqnarray*}
i.e., the  process $\left\{\bar Y_\delta (r), r\ge |\cal G|n_0\right\}$ reaches its minimum before all $\left\{\bar X_i(n), n\geq n_0\right\}$ processes reach their minimums for all $i\in{\cal G}$. 
Then, we can conclude that $\Pr\left\{\argmin_{r\ge |\cal G|n_0} \bar Y_\delta(r)=\infty\right\}\le \sum_{i\in{\cal G}}\Pr\left\{\argmin_{n\ge n_0}\bar X_i(n)=\infty\right\}$,  which amounts to zero by Lemma \ref{lem:argmin_EFG} and the assumption that $\limsup_{k\to\infty}|{\cal G}|<\infty$. Meanwhile, because $\min_{r\ge |\cal G|n_0} \bar Y_\delta(r)=\max_{i\in\mathcal{G}}\min_{n\geq n_0}\bar{X}_i(n)$, Equation \eqref{proof: EFGPGS2} can be further written as
\begin{eqnarray*}
\liminf_{k\to\infty} {\rm PGS} \ge \limsup_{k\to\infty}\Pr\left\{\min_{r\ge |\cal G|n_0} \bar Y_\delta(r) >   \mu_1 -\delta_0\right\}
=\limsup_{k\to\infty}\Pr\left\{\exists\,i\in\mathcal{G}:\min_{n\geq n_0}\bar{X}_i(n) > \mu_1-\delta_0\right\}.
\end{eqnarray*}
It is clear that the right-hand-side probability is at least $\Pr\left\{\min_{n\geq n_0}\bar{X}_1(n) > \mu_1-\delta_0\right\}$. The proof is completed.\hfill\Halmos
\end{proof}

\vspace*{6pt}
\subsection{Proof of Lemma \ref{lem: bct_mean_convergence}}
\label{subsec: proofCto0}
For any fixed $n_0$, the function $C(x;n_0)$ for any $x>0$ can be given by
\begin{eqnarray*}
C(x;n_0)-n_0 &=& \E\left[\inf\{n\geq n_0: \bar{Z}(n)< x\}\right]-n_0\\
&=& \E\left[\inf\{n\geq 0: \bar{Z}(n+n_0)< x\}\right]:=\E[M(x;n_0)]
\end{eqnarray*}
where we denote $M(x;n_0) = \inf\{n\geq 0: \bar{Z}(n+n_0)<x\}$ above for simplicity of notation. By this definition, we must have the following fact holds: when $M(n_0)\geq n\geq 1$ is true, we must have that the event $\bar{Z}(n_0+n-1)\ge x$ holds. This indicates that
\begin{eqnarray*}
\E[M(n_0;x)] = \sum_{n=1}^{\infty} \Pr\{M(n_0;x)\geq n\}\leq \sum_{n=1}^{\infty} \Pr\{\bar{Z}(n+n_0-1)\ge x\}=\sum_{n=n_0}^{\infty} \Pr\{\bar{Z}(n)\ge x\}=\sum_{n=n_0}^{\infty} \bar\Phi(\sqrt{n}x),
\end{eqnarray*}
where $\bar\Phi(\cdot)$ denotes the complementary cumulative distribution function of a standard normal distribution. Because $\bar\Phi(x)\leq \exp(-x^2/2)$ for $x>0$,
we further have
\[
\E[M(n_0;x)] \leq  \sum_{n=n_0}^{\infty} \exp(-n x^2/2)
=\frac{\exp\left(-n_0 x^2/2\right)}{1-\exp(-x^2/2)}.
\]

Set $x=(\gamma-\gamma_0)/\overline\sigma$, and we can have
\begin{eqnarray}
n_g = C\left(\frac{\gamma-\gamma_0}{\overline\sigma}\right)-n_0 =\E\left[M\left(\frac{\gamma-\gamma_0}{\overline\sigma};n_0\right)\right]\leq \frac{\exp\left(-n_0 \frac{(\gamma-\gamma_0)^2}{2\overline\sigma^2}\right)}{1-\exp\left(-\frac{(\gamma-\gamma_0)^2}{2\overline\sigma^2}\right)}.
\end{eqnarray}
Letting $\kappa=\frac{(\gamma-\gamma_0)^2}{2\overline\sigma^2}$ and $\beta=(1-e^{-\kappa})^{-1}$ leads to the conclusion. The proof is completed.
\hfill \Halmos

\vspace*{6pt}
\newpage
\subsection{The Enhanced Explore-First Greedy (EFG$^+$) Procedure and Proof of Theorem \ref{thm: EFGP_Opt}}
\label{subsec: proofEFGP_Opt}

\begin{procedure}[ht]
  \caption{\textbf{Enhanced Explore-First Greedy (EFG$^+$) Procedure}}
  \label{procedure: EFGPlus}
  \begin{algorithmic}[1]
  \REQUIRE {$k$ alternatives $X_1,\ldots,X_k$, the total sampling budget $B=(n_{sd} + n_0 + n_g) k $, and the total number of groups $G$ satisfying that $2^G-1 \leq k$ and $2 \leq G\leq n_0$.
  } 
  \STATE For each alternative $i=1,\ldots, k$, take $n_{sd}$ independent observations $X_{i1},\ldots,X_{in_{sd}}$ and set $\bar X^{sd}_i =\frac{1}{n_{sd}}\sum_{j=1}^{n_{sd}} X_{ij}$.
  \STATE According to $\bar X^{sd}_i$, sort the alternatives in descending order as $\{ (1), (2), \ldots, ( k) \}$.
  \STATE 
  Let $I^r$ denote the group of alternatives for $r=1,\ldots, G$. Set $\Delta = 2^G-1$. Then, let $I^{1} = \{ (1), (2), \ldots, ( \lfloor  k / \Delta \rfloor ) \}$,  $I^{r} = \{ (\lfloor k 2^{r-2} / \Delta \rfloor+1), \ldots, ( \lfloor k 2^{r-1} / \Delta \rfloor) \}$ for $r=2, \ldots, G-1$, and let $I^{G} = \{ (\lfloor k 2^{G-2} / \Delta \rfloor+1), \ldots, (k) \}$.
  \FOR{$r \in \{1, 2, 3, \ldots, G\}$}
  \STATE Set the sample size allocated to each alternative $i \in I^r$ as $n^r = \left\lfloor \frac{n_0 (2^G-1)}{G 2^{r-1} } \right\rfloor$. 
  \STATE For each alternative $i \in I^r$, take $n^r$ independent observations $X_{i1},\ldots,X_{in^r}$, set $\bar X_i(n^r)=\frac{1}{n^r}\sum_{j=1}^{n^r} X_{ij}$ and  let $n_i=n^r$.
  \ENDFOR
  \WHILE{$\sum_{i=1}^k n_i < (n_0 +n_g) k$}
  \STATE Let $s = \arg\max_{i=1,\ldots,k}  \bar{X}_{i}\left(n_i\right)$ and take one observation $x_{s}$ from alternative $s$;
  \STATE Update $\bar X_{s}(n_{s}+1) = \frac{1}{n_{s}+1}\left[n_{s}\bar X_{s}(n_s) + x_s\right]$ and let $n_s = n_s+1$;
  \ENDWHILE
   \STATE Select $\arg\max_{i\in\{1,\ldots,k\}} \bar{X}_{i}\left(n_i\right)$  as the best.
  \end{algorithmic}
  \end{procedure}

 In the following, we first show that the total amount of observations allocated in the exploration phase, $\sum_{r=1}^G n^r |I^r|$,  does not exceed the given exploration budget $n_0 k$. Specifically, we have,
\begin{eqnarray*}
    \sum_{r=1}^G n^r |I^r| 
= \sum_{r=1}^{G-1} n^r |I^r| + n^G\times \left(k - \sum_{r=1}^{G-1}|I^r|\right) 
= \sum_{r=1}^{G-1} (n^r-n^G)|I^r| + n^G k.
\end{eqnarray*}
As $n^r\geq n_G$ and $|I^r|$ is chosen as $\left\lfloor \frac{k 2^{r-1}}{2^G-1} \right\rfloor$ for $r=1,\ldots, G-1$, we can further derive
\begin{eqnarray}
\label{eq: EFGP_exploration}
  \notag \sum_{r=1}^G n^r |I^r| &\leq&  \sum_{r=1}^{G-1} (n^r-n^G)\frac{k 2^{r-1}}{2^G-1}  + n^G k
   = \sum_{r=1}^{G-1} n^r\frac{k 2^{r-1}}{2^G-1}  -n^G k + n^G k\\
   &=& \sum_{r=1}^{G} \left\lfloor \frac{n_0(2^G-1)}{G 2^{r-1}} \right\rfloor \frac{k 2^{r-1}}{2^G-1}\leq \sum_{r=1}^{G}  \frac{n_0(2^G-1)}{G 2^{r-1}}  \frac{k 2^{r-1}}{2^G-1}= n_0k.
\end{eqnarray}

Now, we study the PCS of the EFG$^+$ procedure. 
For each alternative $i$, we denote its allocated sample size in the exploration phase by $e_i$. 
For simplicity, we let $\mathbf{e}$ represent the vector $[e_1, \ldots, e_k]$. 
Applying the boundary-crossing perspective, we can  formulate the PCS of the EFG$^+$ procedure as
\begin{eqnarray}
	\label{PCS: EFGPlus}
	\mbox{PCS}  \geq  \mbox{E}_{\mathbf{e}} \left[\Pr\left\{(n_0+n_g) k \geq \argmin_{n\geq e_1}\bar X_1(n)+\sum_{i=2}^{k}  N_i\left(\min_{n \geq e_1}\bar X_1(n); e_i\right)  \Bigm| \mathbf{e} \right\}\right].
\end{eqnarray}
The above PCS lower bound is harder to analyze compared to that for the Greedy and  EFG procedures.  It is because the $e_i$ are mutually correlated random variables as they are calculated based on all alternatives' ranking information in the seeding phase. 
 Thus, we consider eliminating the impact of the $\mathbf{e}$ on the PCS lower bound inside the expectation and thus simplify the analysis. 
 
Notice the fact that
$n^G \leq e_i \leq n^1$, and thus $\argmin_{n\geq n^1}\bar X_1(n) \geq \argmin_{n\geq e_1}\bar X_1(n)\geq \min_{n \geq n^G}\bar X_1(n)$ from the sample-path viewpoint. Then,  for the PCS lower bound in \eqref{PCS: EFGPlus}, we further have
\begin{eqnarray}
\label{PCS: EFGPlus2}
	\notag \mbox{PCS} & \geq &  \mbox{E}_{\mathbf{e}} \left[\Pr\left\{(n_0+n_g) k \geq \argmin_{n\geq n^1}\bar X_1(n)+\sum_{i=2}^{k}  N_i\left(\min_{n \geq n^G}\bar X_1(n); e_i\right)  \Bigm| \mathbf{e} \right\}\right]\\
& \geq & \Pr\left\{(n_0+n_g) k \geq \argmin_{n\geq n^1}\bar X_1(n)+\sum_{i=2}^{k}  N_i\left(\min_{n \geq n^G}\bar X_1(n); n^1\right)  \right\}.
\end{eqnarray}
The second inequality holds because $N_i(x;n)$ is non-decreasing in $n$ for any fixed $x$.
We note that the PCS lower bound in Equation \eqref{PCS: EFGPlus2} shares almost the same form as the PCS lower bound of the EFG procedure in Equation \eqref{eqn:pcs_EFG}.
Thus, we can prove Theorem \ref{thm: EFGP_Opt} by similar arguments of showing the sample optimality of the EFG procedure stated in  \ref{subsec: proofEFGopt}.

Specifically, facilitated by the condition
that the boundary $\min_{n\ge n^G} \bar X_1(n)$ is always above the level $\mu_1 -\gamma^-_\epsilon$
where $\gamma^-_\epsilon$ is the constant satisfying
\begin{align}\label{proof: PCS_EFGP2}
C\left(\frac{\gamma-\gamma_\epsilon^-}{\overline\sigma};n^1\right)=n_0 + n_g-2\epsilon,
\end{align}
we can get the following result which is analogous to Equation \eqref{proof: PCS_EFG1}, 
\begin{eqnarray}\label{proof: PCS_EFGP1}
\liminf_{k\to\infty}\, {\rm PCS} &\ge& \Pr\left\{\min_{n\ge n^G} \bar X_1(n) >  \mu_1 -\gamma^-_\epsilon\right\}-\Pr\left\{\argmin_{n\ge n^1} \bar X_1(n)=\infty\right\}.
\end{eqnarray}
Notice that $\gamma_\epsilon^-\to \gamma_0$ as $\epsilon\to 0$, where $\gamma_0$ is the unique constant satisfying $C\left(\frac{\gamma-\gamma_0}{\overline\sigma};n^1\right)=n_0 + n_g$. 
Then, by letting $\epsilon\to 0$ in Equation \eqref{proof: PCS_EFGP1} and applying Lemma  \ref{lem:argmin_EFG}, we can further derive
\begin{align}\label{proof: PCS_EFGP3}
 \notag  \liminf_{k\to\infty}\, {\rm PCS} &\ge \Pr\left\{\min_{n\ge n^G} \bar X_1(n) >  \mu_1 -\gamma_0\right\}-\Pr\left\{\argmin_{n\ge n^1} \bar X_1(n)=\infty\right\}\\
   &=\Pr\left\{\min_{n\ge n^G} \bar X_1(n)>  \mu_1 -\gamma_0\right\}.
\end{align}
The conclusion of interest is now drawn. \hfill \Halmos

\newpage
\section{\magenta{Additional Numerical Experiments \& Results}}
\label{sec: numerical_EC}

\vspace*{6pt}
\subsection{Additional Numerical Results for the Configurations EM-IV, EM-DV}
\label{subsec: IVDV_EC}

\begin{figure}[htbp]
  \centering
  \includegraphics[width=0.75\textwidth]{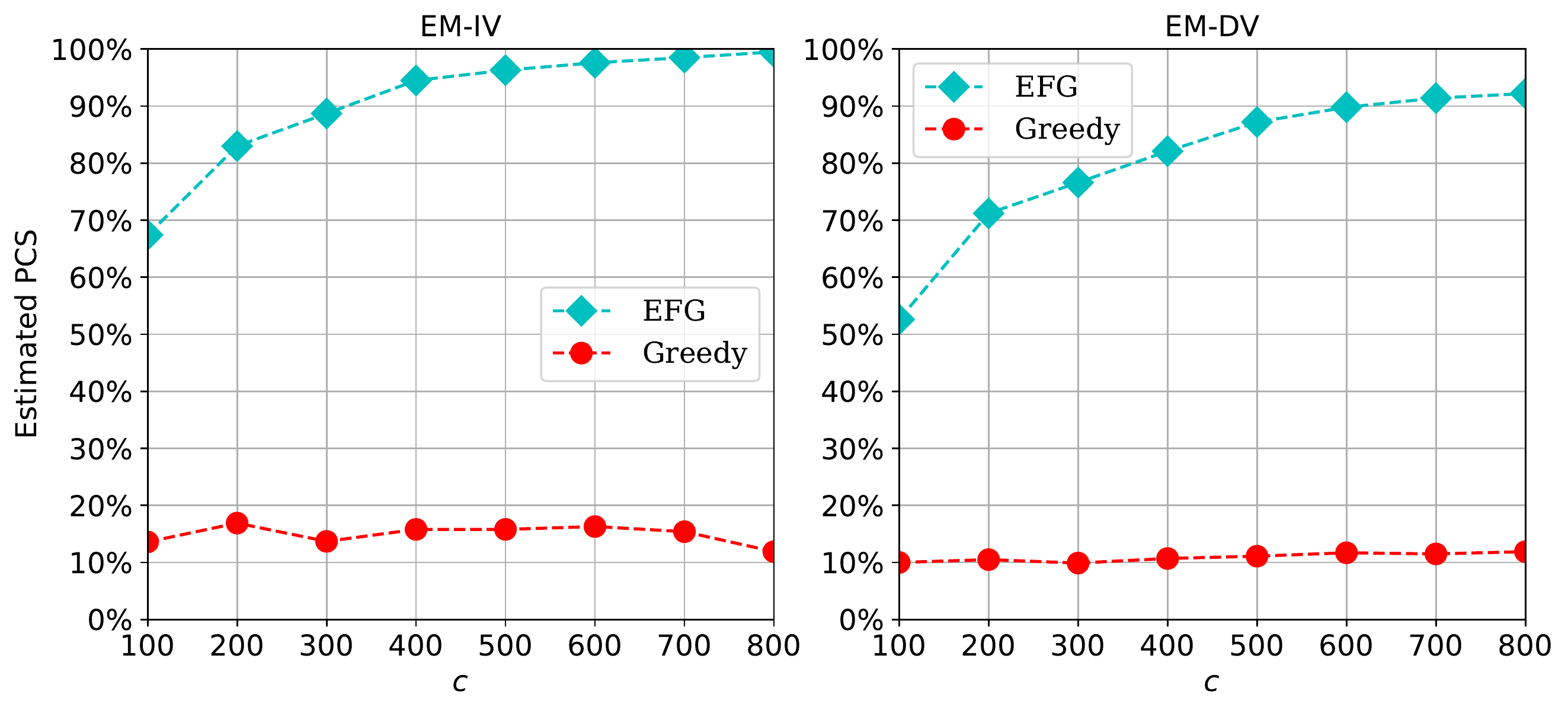}
  \caption{Estimated PCS of the EFG procedure and the greedy procedure for different values of $c$. $k=8192$. }
  \label{numerical: EFGConsistency2}
\end{figure}

\begin{figure}[htbp]
  \centering
  \includegraphics[width=0.75\textwidth]{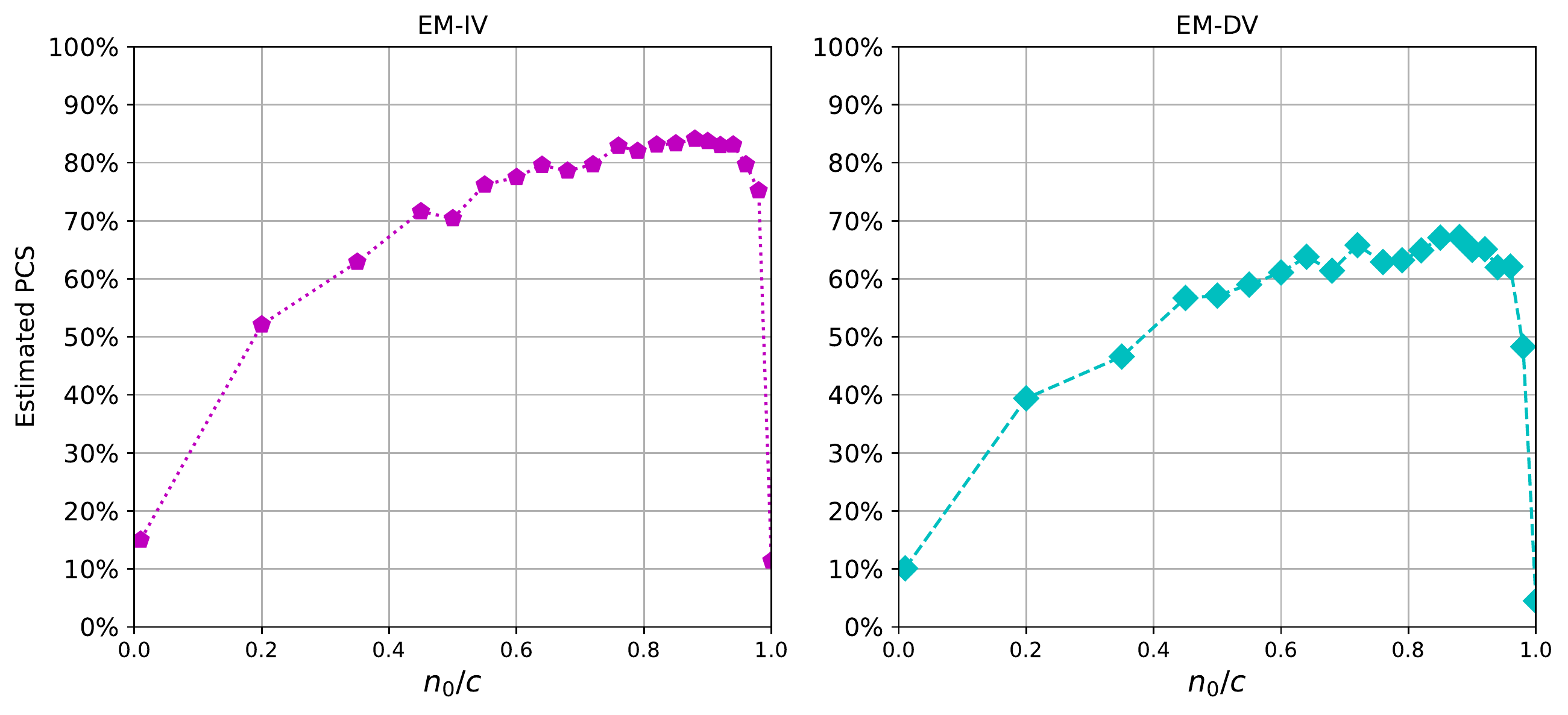}
  \caption{Estimated PCS of the EFG procedure for different values of $p=n_0/c$. $c=200$ and $k=8192$. }
  \label{numerical: EFGBudgetAllocation2}
\end{figure}

\subsection{The Slippage Configuration is Least Favorable for the Greedy Procedures}
\label{subsec: least_favorable_EC}
Theorem \ref{thm: Greedy_PCS_tightness}, together with Theorem \ref{thm: ExploreFirstGreedy_opt}, show that for the greedy procedures, when $k \rightarrow \infty$, the slippage configuration is least favorable among all configurations of means satisfying Assumption \ref{assu:asyreg}. We now provide additional numerical results to validate this result and show that the slippage configuration is also least favorable for a finite $k$. 

We consider the following problem configuration 
\[\mu_1 = 0.1, \mu_i = -\frac{\lambda(k-2)}{k}, \forall i=2, \dots, k; \sigma_i^2 = 1, \forall i=1, \dots, k,\]
where $\lambda>0$ can be a constant or a function of $k$. In the configuration, the gap between the best and second-best means, i.e., $\mu_2-\mu_1$, remains $0.1$, regardless of the choice of $\lambda$. Besides, the inferior means of the non-best alternatives are distributed in the set $[-\lambda, 0]$. 
In this experiment, we consider the following five choices of $\lambda\in\{ 0, 0.2, 2, \sqrt{k}, 0.1k\}$. When $\lambda=0$, the mean configuration is the slippage configuration. As $\lambda$ is increased, the inferior means become more dispersed.
When $\lambda = 0.1k$, the inferior means are progressively worse as $k$ grows. 

For each choice of $\lambda$, we test the greedy procedure and the EFG procedure with the same experiment setting used in Section \ref{subsec: numerical1}. We then plot the estimated PCS of the two procedures against different $k$ for each $\lambda$ in Figure \ref{fig: least_favorable}. From Figure \ref{fig: least_favorable}, we can find that for both procedures, the PCS under $\lambda=0$ is the lowest for different values of $k$. This shows that the slippage configuration is least favorable for the greedy procedures.
Furthermore, as the means of the alternatives become more dispersed, the PCS of the procedures become higher. 

\begin{figure}[htbp]
  \centering
  \includegraphics[width=0.8 \textwidth]{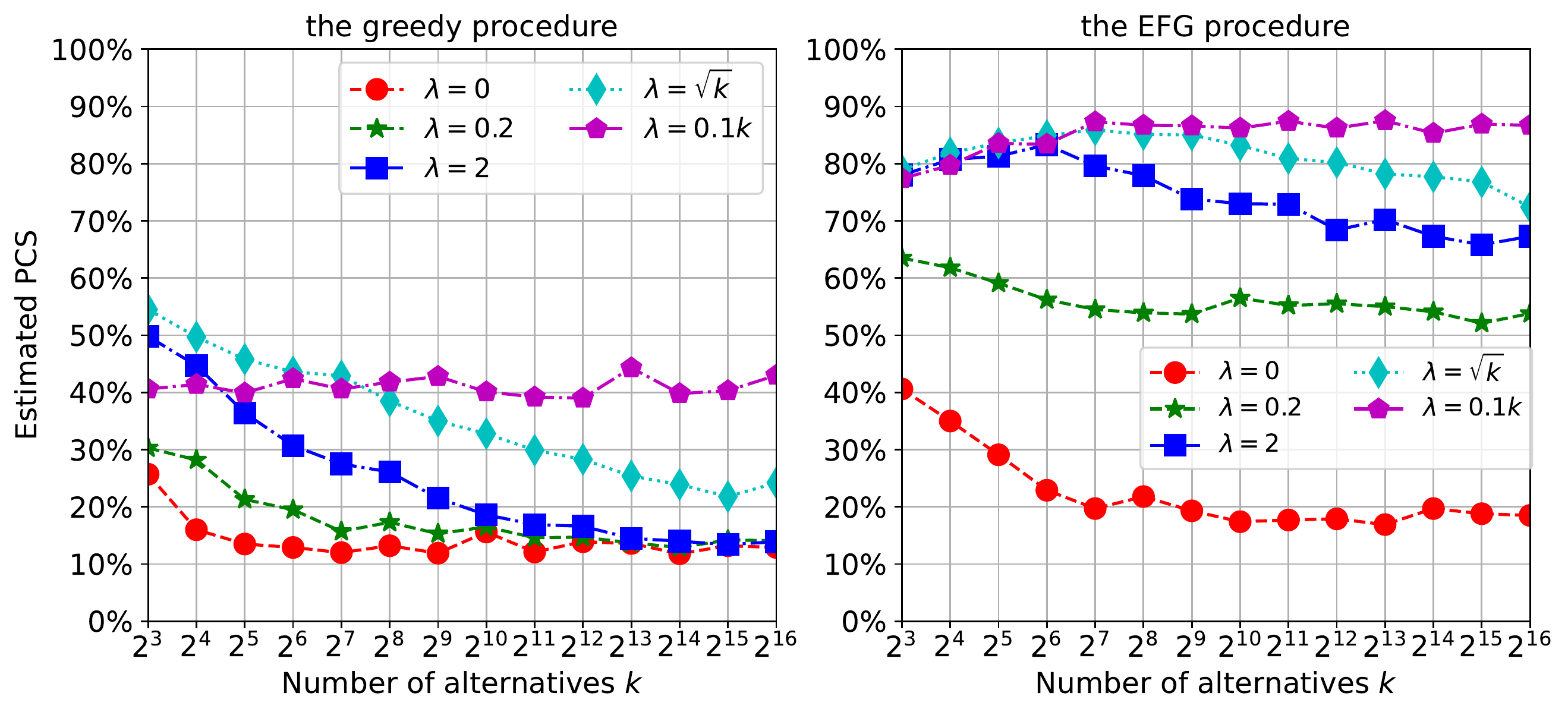}
  \caption{Estimated PCS of the greedy procedure and the EFG procedure under different mean configurations.}
  \label{fig: least_favorable}
\end{figure}

\subsection{Budget Allocation between Exploration and Exploitation}
\label{subsec: allocationEE}

 In Section \ref{subsec: allocationEEmain}, using the problems with $k=8192$ alternatives, we show that the PCS of the EFG procedure may be insensitive to the value of $p$, i.e., the proportion of the exploration budget to the total sampling budget, over a wide near-optimal range of $p$. Now we study how the optimal $p$ and the near-optimal range of $p$ vary for different $k$.

In this experiment, we again use the configurations SC-CV, EM-CV, EM-IV and EM-DV, and set the number of alternatives as $k=2^l$ with $l$ ranging from $3$ to $14$. For each $k$, we let $p=0.02a$ where $a$ is an integer and let $a$ increase from 0 to 50. 
For every combination of $k$ and $p$, we use a total sampling budget $B=200k$ and estimate the PCS based on 2000 independent macro replications.
Recall that the PCS curve against different values of $p$ has an inverted U-shape. For each $k$, we can find an optimal value of $p$ that maximizes the PCS and then estimate a near-optimal range of $p$ by setting the lower (upper) bound of the range as the first (last) $p$ that maintains a PCS within $10\%$ to the maximal PCS. For each problem configuration, we plot the optimal $p$ and the near-optimal range of $p$ against different $k$ in Figure \ref{fig: allocationEE}.
\begin{figure}[htbp]
  \centering
    \includegraphics[width=0.75\textwidth]{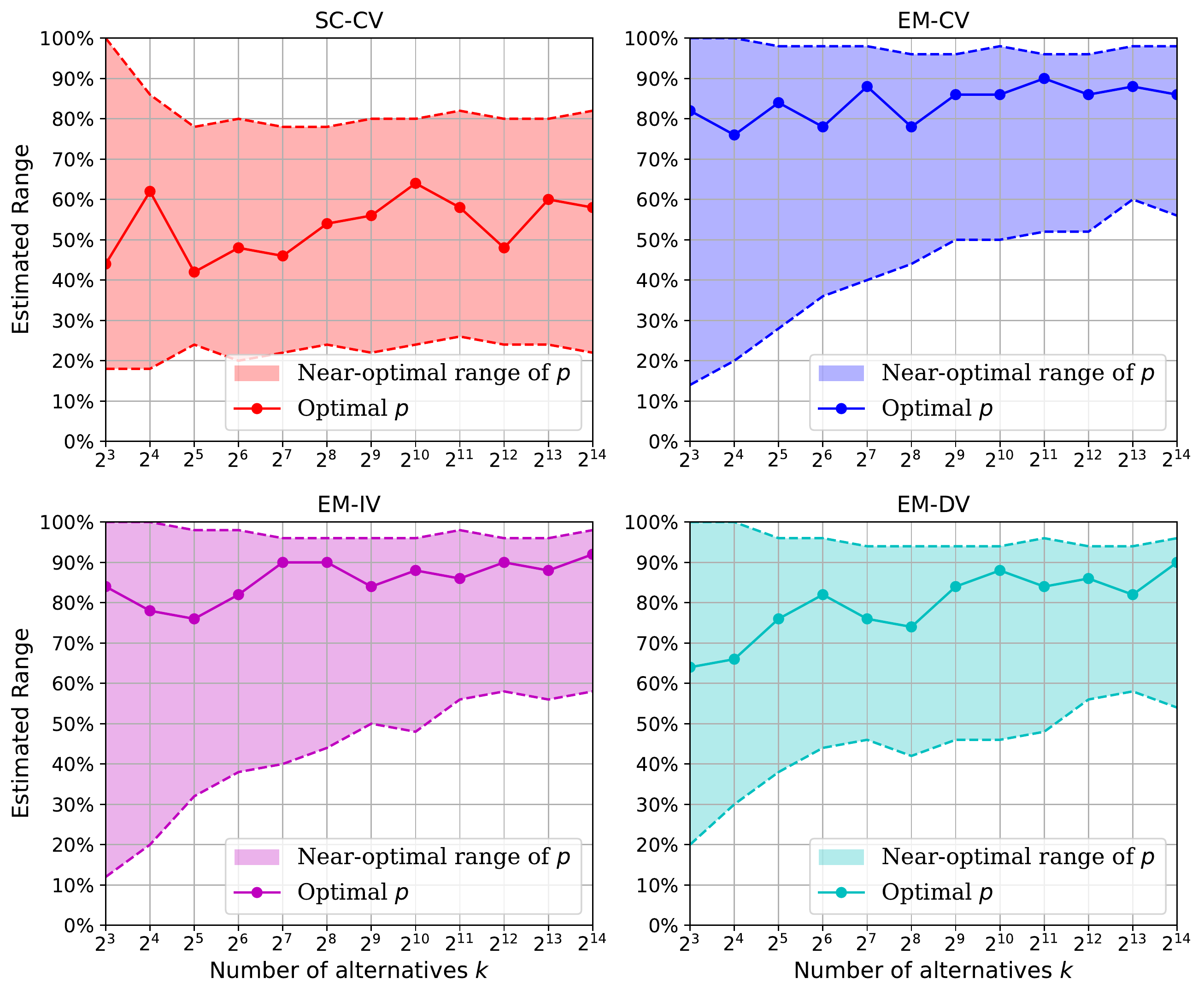}
    \caption[]{Estimated optimal $p$ and the near-optimal range under different configurations.}
    \label{fig: allocationEE}
  \end{figure}

From Figure \ref{fig: allocationEE}, we can observe that the PCS of the EFG procedure is not sensitive to the value of $p$ over a wide range for both small-scale and large-scale problems. For all four configurations, as $k$ grows, the near-optimal range of $p$ shrinks slowly. However, after $k$ is large enough, e.g., $k \geq 2^{13}=8192$, the range will remain almost the same. Notice that under SC-CV, the optimal $p$ and the lower and upper bounds of the near-optimal range are typically much lower than those under other configurations. This is because SC-CV is a much harder configuration under which the EFG procedure needs more greedy budget to maintain the sample optimality. 

\subsection{Comparison between Sample-Optimal Fixed-Budget R\&S Procedures}
\label{subsec: EC_comparison}

\subsubsection{Procedures in Comparison.}
\label{subsec: settings_EC}
 The necessary introduction and the implementation settings of the compared procedures are summarized as follows. 
\begin{itemize}
  \item \textbf{EFG procedure and EFG$^+$ procedure}.
   For the EFG procedure,  when solving the practical problem, we allocate $90\%$ of the total sampling budget to the exploration phase. Then,
for the EFG$^+$ procedure, we allocate $20\%$ of the total sampling budget to the seeding phase and another $10\%$ to the greedy phase. Moreover, we let $G=11$, which is selected to let each group have at least one alternative for the problem instances tested in the experiments. 
  \item \textbf{FBKT procedure and FBKT-Seeding procedure}. We have briefly introduced the FBKT procedure in Section \ref{sec: intro}. Readers can go there for a review. The FBKT-Seeding procedure improves the FBKT procedure with a seeding approach, and it is expected to perform better than the FBKT procedure when all alternatives' means are scattered over a wide range. Readers can refer to \cite{itemHong2022} for more details. When implementing the procedures, we follow the default settings extracted from the paper.
  \item \textbf{Sequential halving (SH) procedure and the modified SH procedure}. 
  The original SH procedure is developed by \cite{itemKarnin2013}, and the modified procedure is introduced in Appendix B.6 of \cite{itemZhao2023}. They both proceed in $L=\lceil \log_2k\rceil$ rounds. In each round $l$, $l=1,\ldots, L$, the (modified) SH procedure allocates a sampling budget $\left(T_{\ell} = \frac{B}{L}\right)$ $T_{\ell}=\left\lfloor\frac{B}{81 k} \cdot\left(\frac{16}{9}\right)^{\ell-1} \cdot \ell\right\rfloor$ equally to the survived alternatives and eliminates the lower half in terms of the sample means. Then, the procedures select the unique alternative that survives all the $L$ rounds as the best.
  \end{itemize}

\subsubsection{The Throughput Maximization Problem.}
\label{subsec: TP_EC}
The throughput maximization involves a three-stage flow line with a single service station per stage and each service station has $i.i.d.$ exponential service time. For stage $i$, $i=1,2,3$, the service rate is denoted as $x_i$. In stage 1, there are an infinite number of jobs waiting to be processed in front of the service station. Each job needs to pass through the three stages in sequence and be processed by all three service stations in turn. For stage $i$, $i=2,3$, the buffer size (or storage space) $b_i$ is finite, and it includes the position in the service station. The problem adopts a service protocol called \textit{production blocking}. It means that for stage $i$, $i=1,2$, if the downstream stage is full, the service station is blocked, then it cannot transfer a finished job to the next stage and serve a new job until the buffer of the downstream stage becomes available. 

The goal of the problem is to maximize the expected throughput of the three-stage flow line by allocating a total service rate $S_1 \in\mathbb{Z}_+^5$ among the three service stations and a total buffer size $S_2 \in\mathbb{Z}_+^5$  among stage 2 and stage 3. Formally, we write the optimization problem as   
\begin{eqnarray*}
    \max_x&&\mbox{E}\left[f\left(x;\xi\right)\right]\\
    s.t.&&x_1+x_2+x_3=S_1\\
    &&b_2+b_3=S_2\\
    &&x=\left(x_1,x_2,x_3,b_2,b_3\right)\in\mathbb{Z}_+^5
\end{eqnarray*}
where $f$ represents the flow line's throughput and $\xi$ represents the randomness. A nice feature of the problem is that for each alternative (i.e., feasible solution), we can evaluate its true mean throughput by solving the balancing equations of the underlying Markov chains (see p. 189 of  \cite{itemBuzacott1993}).  Then, we can easily find the best alternative or identify good alternatives given the IZ parameter $\delta$. 

\subsubsection{Implementation Settings.}
\label{subsec: setting_EC}
For each alternative, every time we generate an observation of the throughput, we independently simulate the flow line until 1050 jobs are finished and estimate the throughput based on the last 50 jobs. In every experiment, we run 500 independent macro replications to estimate the PCS or the PGS, and round the results to two decimal places. 
 
  \subsubsection{A Comparison Using Configurations with Bounded Means.} 
  \label{subsec: more_compare_EC}
  We now compare the sample-optimal fixed-budget R\&S procedures on the configurations EM-IV and EM-DV used previously in Section \ref{sec: numerical}. 
  A common feature of these configurations is that the means of all alternatives are bounded between -1 and 0.1 regardless of how large $k$ is. For each configuration, we set $k=2^l$ with $l$ ranging from 11 to 16 and set $B=100k$ for each $k$. The estimated PCS based on 1000 independent macro replications of the six procedures against different $k$  under each configuration is plotted in  Figure \ref{fig: compareRateOptimal}.
  
  \begin{figure}[htbp]
  \centering
    \includegraphics[width=0.85\textwidth]{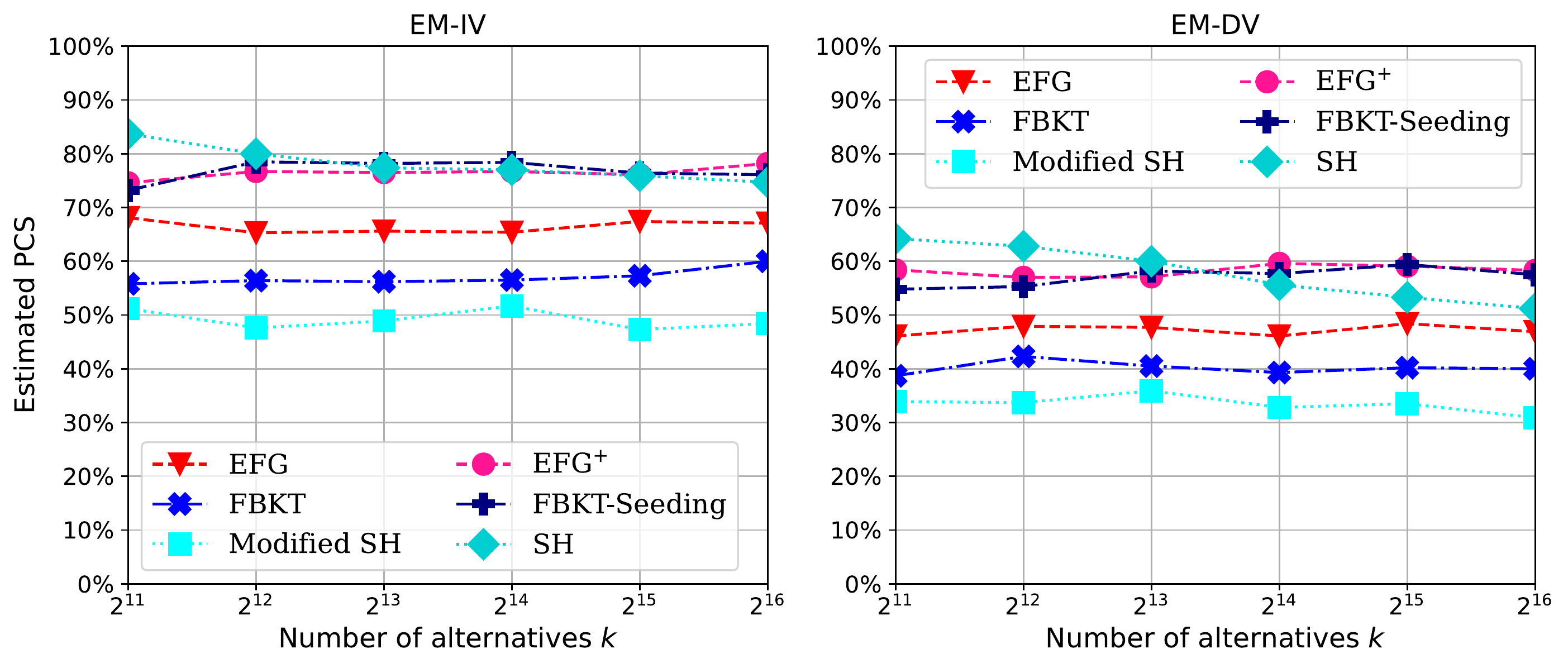}
    \caption[]{A comparison between the EFG procedures, the FBKT procedures, and the SH procedures.}
    \label{fig: compareRateOptimal}
  \end{figure}
  
  We have the following findings from Figure \ref{fig: compareRateOptimal}. First, the EFG$^+$ procedure performs significantly better than the EFG procedure for the configurations with bounded means. Under both configurations, the gap between the PCS of the two procedures remains around 10\% as $k$ grows. 
  Second, when $k$ is large enough, the EFG$^+$ procedure achieves the same level of PCS as the FBKT-Seeding procedure and the SH procedure under EM-IV and a higher PCS than the SH procedure under EM-DV. This result indicates that, although the greedy procedures achieve the sample optimality in a completely different and surprisingly simple manner, they have the potential to perform better than the median elimination procedures.

\subsection{Budget Allocation Among the Alternatives}
\label{subsec: allocationAlternativesMain}
 
In this subsection, we study how the EFG procedure allocates the sampling budget to all the alternatives. Since the exploration phase is simply an EA phase, we focus on the budget allocation in the greedy phase. Specifically, we study the budget allocation using  the following three different problem configurations with a unit variance: 
 \begin{itemize}
  \item the slippage configuration where $\mu_1 = 0.1, \mu_i = 0, \forall  i=2, \dots, k$;
  \item the configuration with bounded means where  $\mu_1 = 0.1, \mu_i = -\frac{2(k-2)}{k},\forall  i=2, \dots, k$;
  \item the configuration with progressively worse means where $\mu_1 = 0.1, \mu_i = -0.1(k-2), \forall  i=2, \dots, k$.
 \end{itemize}

We test the EFG procedure for each configuration with the same implementation setting used in Section \ref{subsec: numerical1} and analyze the budget allocation based on 1000 independent macro replications.
\orange{For a start, we plot the PCS of the EFG procedure under each configuration against different $k$ in Figure \ref{fig: budget_allocation_pcs}. The PCS curves show that the EFG procedure is sample optimal for all three configurations. Besides, the PCS of the EFG procedure becomes higher as the means become more scattered from the slippage configuration to the configuration with progressively worse means.}
Next, we plot the average proportion of the non-best alternatives that obtain observations in the greedy phase and the average minimal mean of these alternatives against different $k$
in Figure \ref{fig: budget_allocation1}.  We then plot the average sample size obtained in the greedy phase of the non-best alternatives ever selected in the greedy phase against different $k$ in Figure \ref{fig: budget_allocation3}, which also displays how the proportion of the greedy budget allocated to the best alternative changes as $k$ increases. 

\begin{figure}[htbp]
  \centering
  \includegraphics[width=0.45 \textwidth]{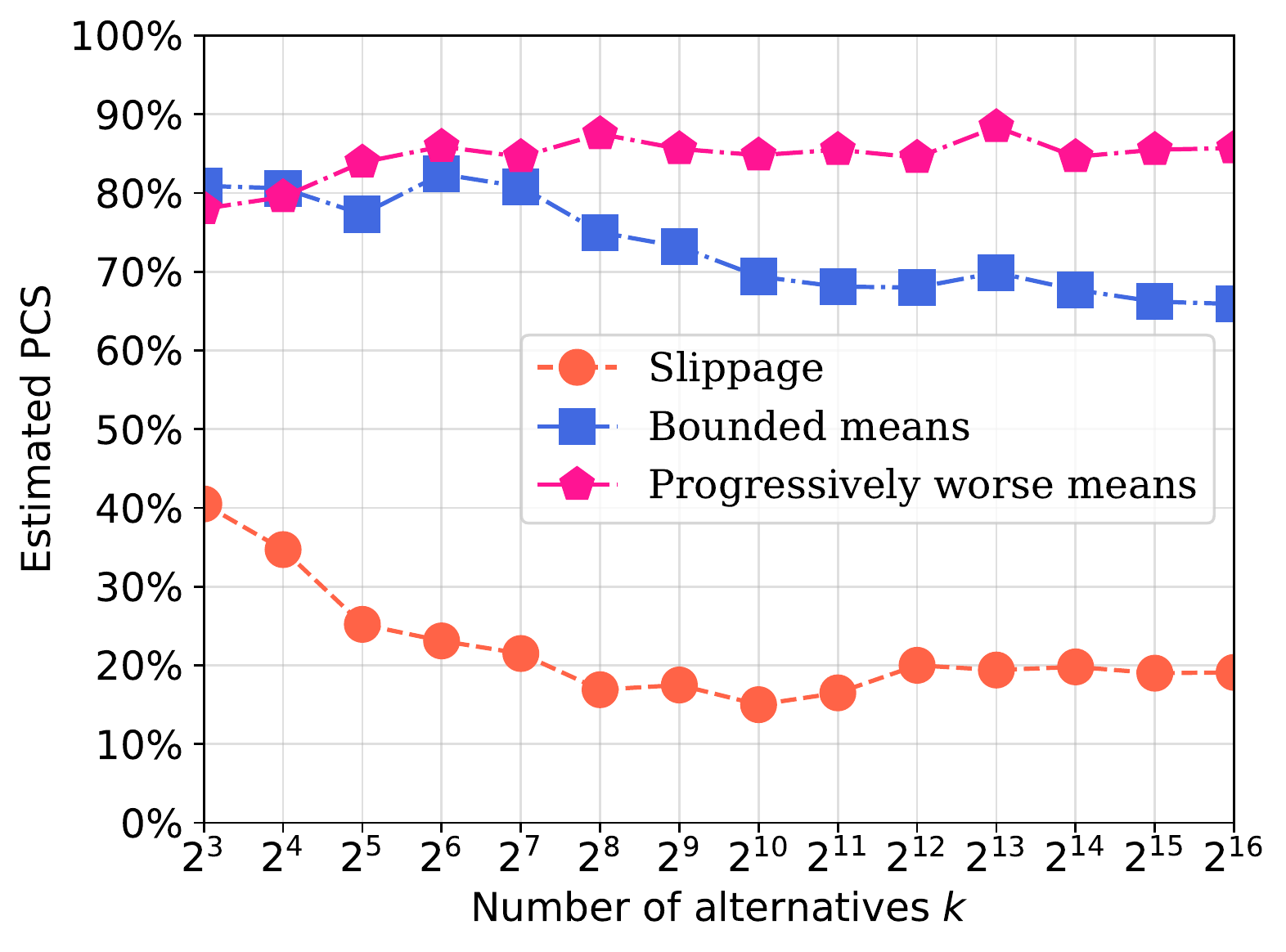}
  \caption{Estimated PGS of the EFG procedure under different configurations.}
  \label{fig: budget_allocation_pcs}
\end{figure}

\begin{figure}[htbp]
  \centering
  \includegraphics[width=0.75 \textwidth]{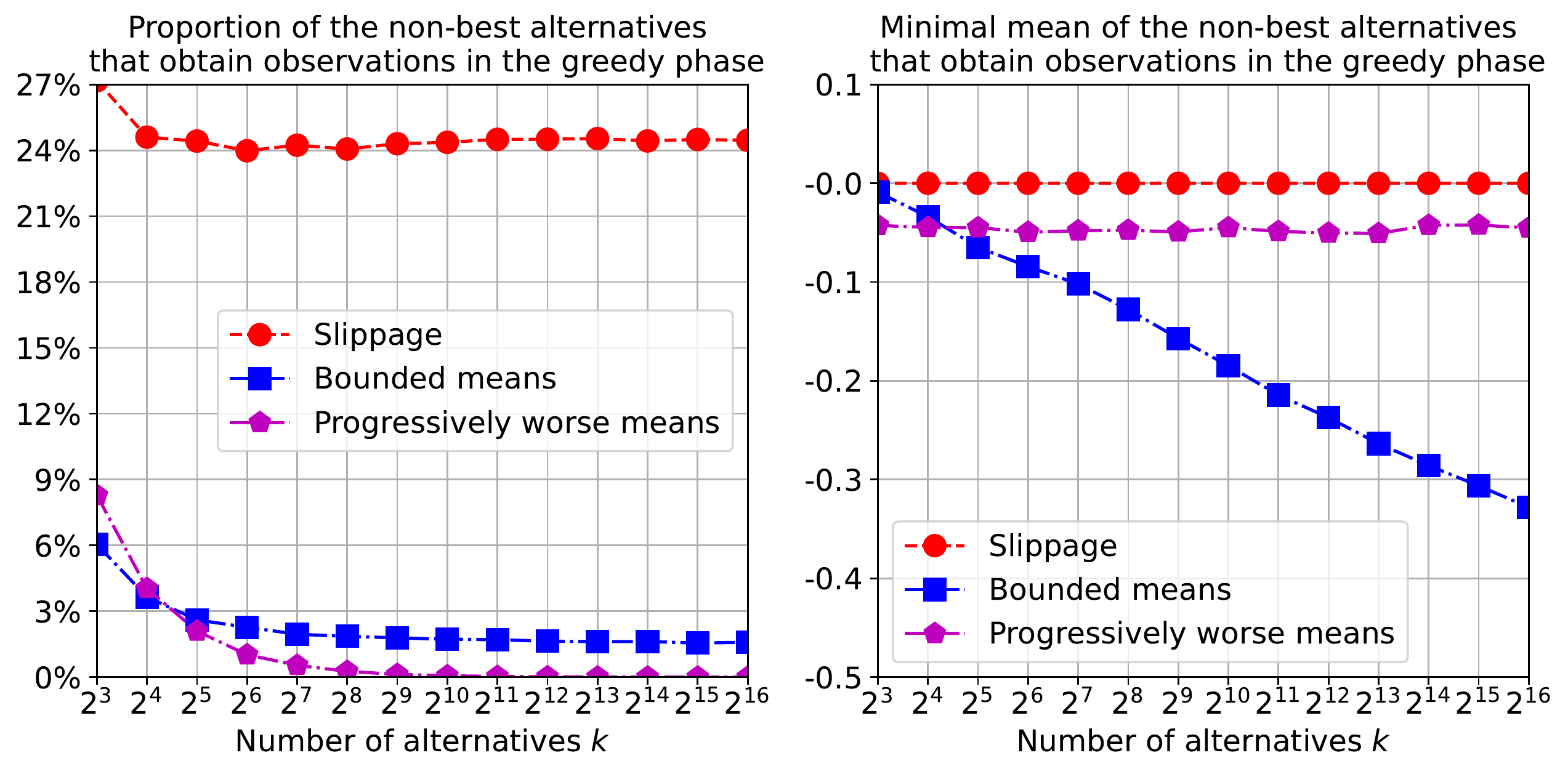}
  \caption{Characterizations of the budget allocation in the greedy phase for the EFG procedure: Part I.}
  \label{fig: budget_allocation1}
\end{figure}

We summarize the findings from Figure \ref{fig: budget_allocation1} as follows. First, in the greedy phase of the EFG procedure, only a small proportion of the non-best alternatives obtain observations. 
Even under the slippage configuration, the proportion is only about $24\%$. Furthermore, as the means of the alternatives become more dispersed, the aforementioned proportion becomes smaller. 
Second, the non-best alternatives that obtain observations in the greedy phase typically have a mean performance very close to the best mean (0.1 in this experiment). For instance, under the configuration with progressively worse means, only a few top alternatives whose means are within 0.2 to the best mean on average obtain observations in the greedy phase. 

The above results indicate that the purely exploitative budget allocation of the EFG procedure in the greedy phase may be very efficient. 
Intuitively speaking, for the non-best alternatives, even if the sample mean is upper-biased after the exploration phase, only the ones close to the best alternative may have a larger sample mean than the best alternative and need more observations to reduce the estimation bias. As demonstrated before, in the greedy phase, the EFG procedure will focus on those alternatives that are worthy of further sampling. \red{Notice that the intuition here may also explain the PCS result shown in Figure \ref{fig: budget_allocation_pcs}. As the means become more scattered, there will be fewer alternatives that are strong competitors to the best, making it easier for the procedure to reduce the bias causing a possible wrong selection and select the true best.}

\begin{figure}[htbp]
  \centering
  \includegraphics[width=0.75 \textwidth]{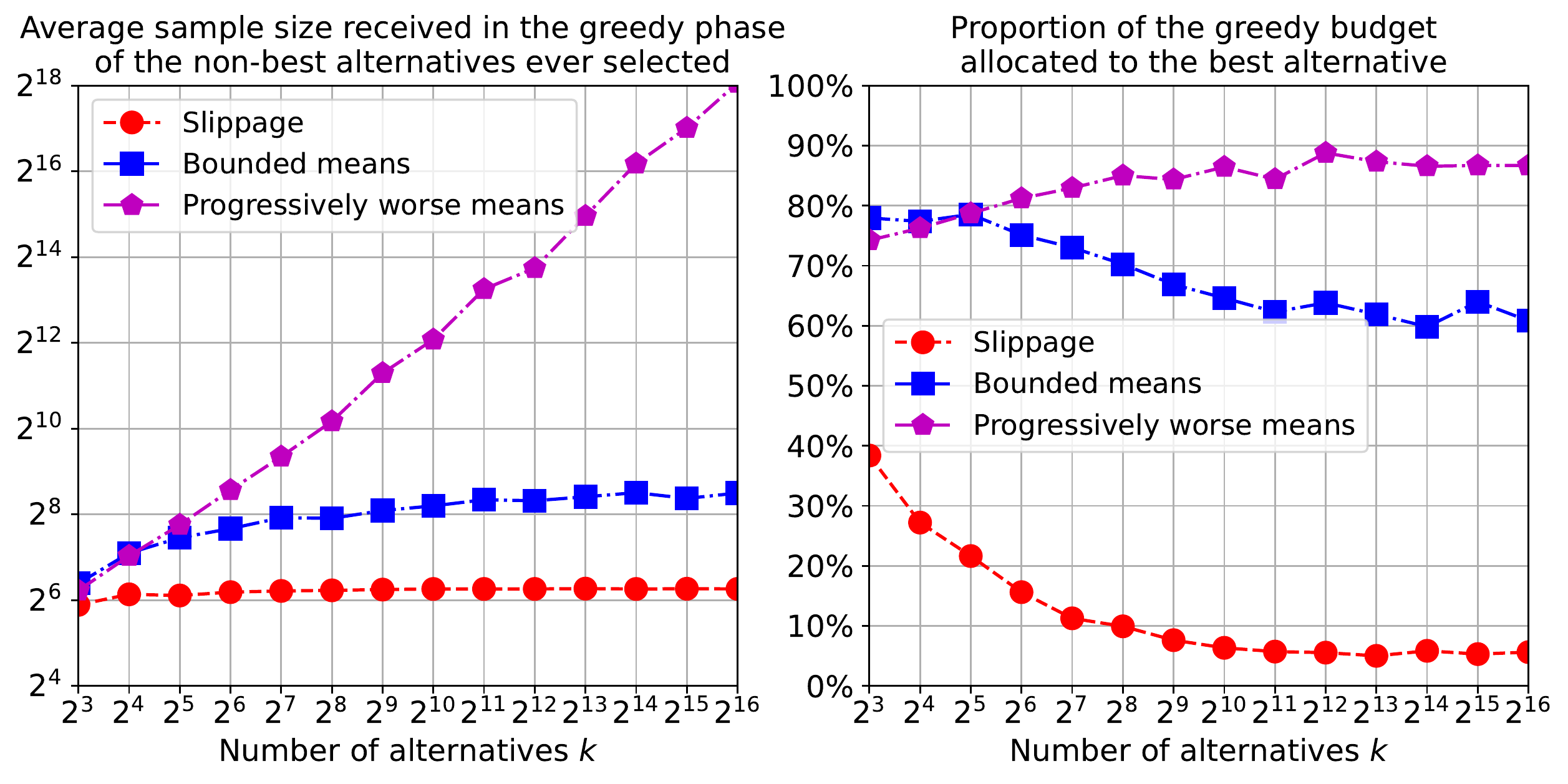}
  \caption{Characterizations of the budget allocation in the greedy phase for the EFG procedure: Part II.}
  \label{fig: budget_allocation3}
\end{figure}
We obtain the following findings from Figure \ref{fig: budget_allocation3}. First, for each non-best alternative, once it is selected at any stage in the greedy phase, it may be selected many more times in the subsequent stages. This may be because the non-best alternatives that obtain observations in the greedy phase typically bear a very competitive (e.g., highly upper-biased) sample mean after the exploration phase; then, it may require a considerable number of additional observations to debias the sample mean for each of them to finish the boundary-crossing process. Furthermore, we notice that for the configuration with progressively worse means, the average sample size received in the greedy phase of the non-best alternatives ever selected in the greedy phase grows linearly in $k$. This may be because even when $k$ is large, as discussed above, only a few top alternatives obtain observations in the greedy phase; then, they share all the greedy budget $n_g k$, which also grows linearly in $k$. Second, under all three configurations, the best alternative obtains a fixed proportion of the sampling budget in the greedy phase, and the proportion will converge as $k$ increases. Notice that under both non-slippage configurations, the proportion exceeds $50\%$ for all values of $k$, indicating that we may allocate too much sampling budget to the greedy phase. 

\begin{figure}[htbp]
  \centering
  \includegraphics[width=0.75 \textwidth]{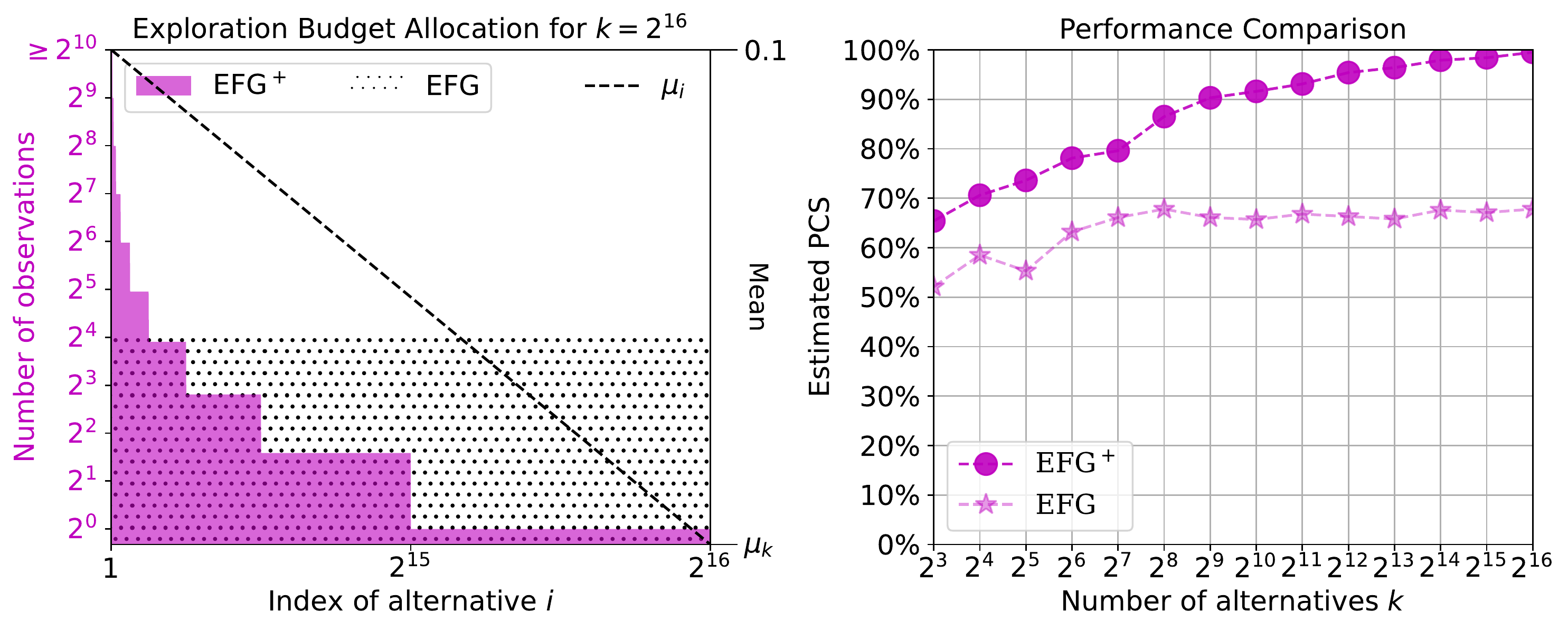}
  \caption{A comparison between the EFG procedure and the EFG$^+$ procedure under the configuration with progressively worse means.}
  \label{fig: budget_allocation_efgplus}
\end{figure}

\red{To end this subsection, we note that for the configuration with progressively worse means, the EFG procedure may be inefficient in the exploration phase. Under the configuration, a large portion of the alternatives may be clearly inferior; for each of them, only a few observations may be sufficient to reveal its inferiority. However, the EFG procedure allocates the exploration budget equally to all alternatives, which is extravagant since the majority of the total sampling budget is used for exploration. This inefficiency further motivates the use of a seeding phase in the EFG$^+$ procedure which is introduced in Section \ref{subsec: EFGPlus}. To illustrate the effect of the seeding phase, we compare the exploration budget allocation of the EFG procedure and the EFG$^+$ procedure (using an additional $10\%$ of the sampling budget $B$ in the seeding phase) for $k=2^{16}$ and $B=20k$ in Figure \ref{fig: budget_allocation_efgplus}, which also displays the procedures' PCS for different $k$. From Figure \ref{fig: budget_allocation_efgplus}, we can see that compared to the EFG procedure, the EFG$^+$ procedure allocates much less budget to the inferior alternatives (indexed with large $i$), validating the seeding phase’s usefulness for enhancing the exploration. Consequently, the EFG$^+$ procedure obtains a much higher PCS than the EFG procedure. }

\subsection{On Assumption \ref{assu:asyreg} and the PGS of the EFG Procedure}
\label{subsec: additional_EC}
Recall that our sample optimality results on the PCS rely on Assumption \ref{assu:asyreg} that requires the gap between the best and second-best means to remain above a positive constant no matter how large $k$ is. Besides, the results on the PGS rely on the assumption that the number of good alternatives remains bounded as $k$ grows to infinity.
We now consider problem configurations violating the structural assumptions to study the EFG procedure’s performance. In the experiments, unless otherwise specified, we estimate the PCS and PGS for the tested procedure based on 1000 independent macro replications and set the IZ parameter $\delta$ as 0.05.
Regarding notation, for different $k$, we use $\gamma(k)$ to denote the mean gap between the best and second-best alternatives and $g(k)$ to denote the total number of good alternatives.

\subsubsection{Problem Configurations with Randomly Generated Means.}
We first consider the following two problem configurations: (1) the configuration with means $i.i.d.$ generated from a Normal distribution and a common variance (Normal-CV) under which $\mu_i \sim \text{Normal}(0,1), \, \sigma_i^2=9, \, \forall i=1,\ldots, k$;
     (2) the configuration with means $i.i.d.$ generated from a Beta distribution and a common variance (Beta-CV) under which
    $\mu_i  \sim \text{Beta}(1.5,2),\, \sigma_i^2=9,  \, \forall i=1,\ldots, k$.
Notice that for the two configurations, $\gamma(k)$ is the difference between the two order statistics $\mu_{(k)}$ and $\mu_{(k-1)}$ of the randomly generated means. 
For the two configurations, we plot the $\gamma(k)$ and the $g(k)$ for different $\delta$  estimated based on one particular sample path against $k$  in Figure \ref{fig: Normal-CV} and Figure \ref{fig: Beta-CV}, respectively. Histograms of means for $k=2^{16}$ are also visualized there. 

\begin{figure}[htbp]
  \centering
  \includegraphics[width=\textwidth]{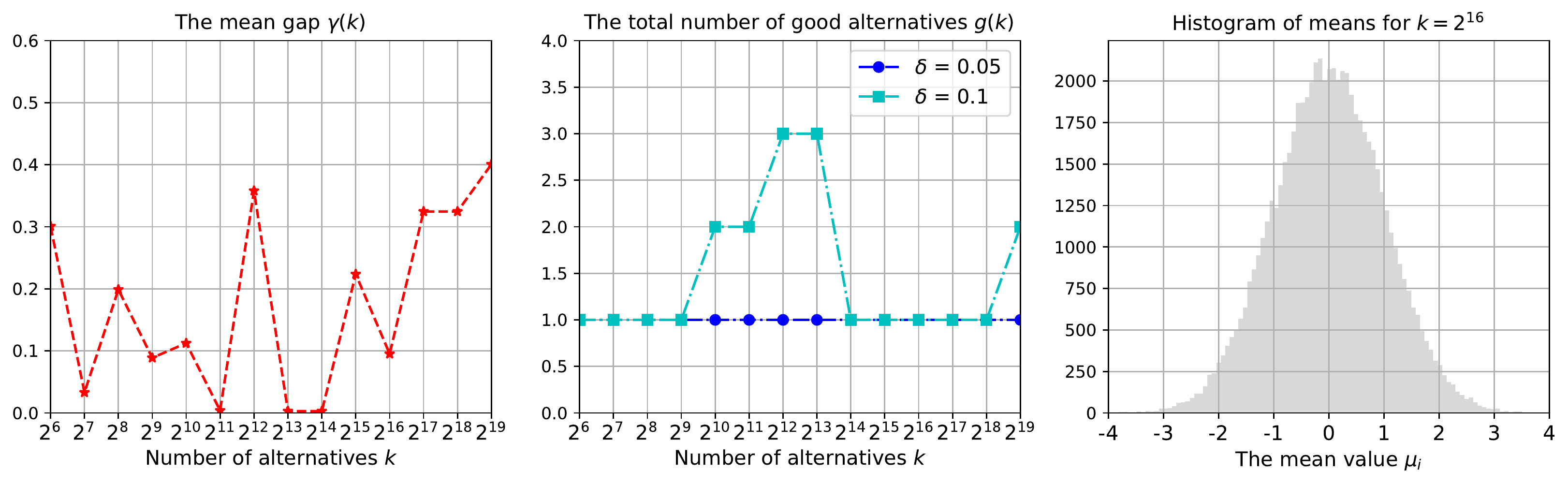}
  \caption{Characterizations of the configuration Normal-CV.}
  \label{fig: Normal-CV}
\end{figure}

\begin{figure}[htbp]
  \centering
  \includegraphics[width=\textwidth]{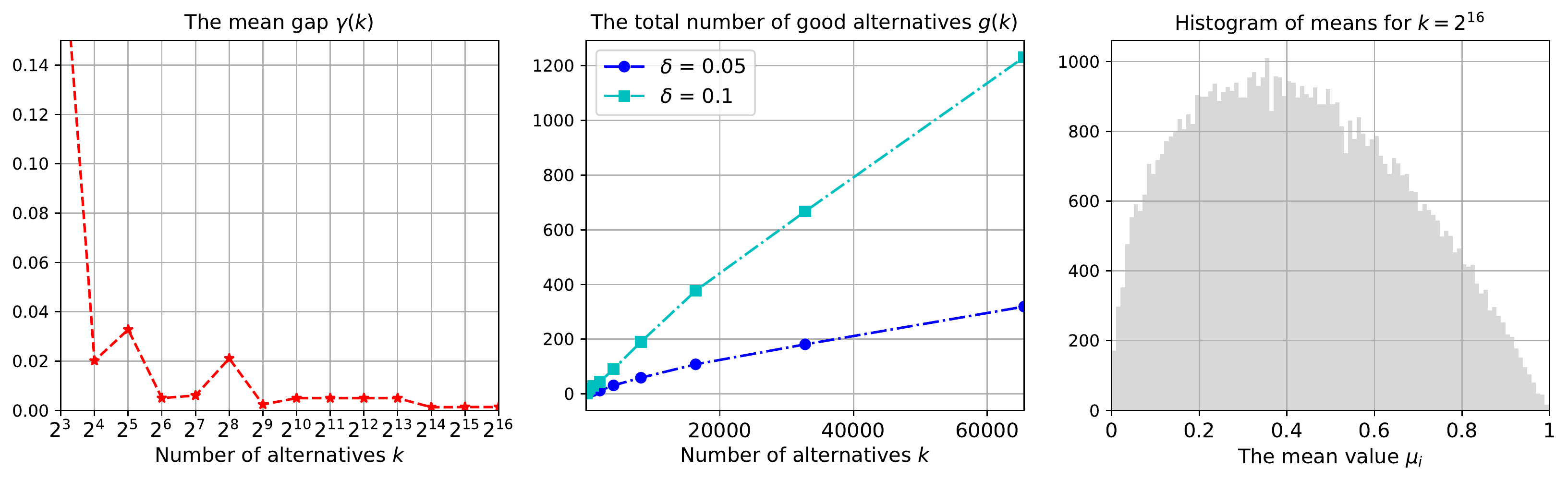}
  \caption{Characterizations of the configuration Beta-CV.}
  \label{fig: Beta-CV}
\end{figure}

From Figure \ref{fig: Normal-CV}, we can find that under Normal-CV, $\gamma(k)$ fluctuates up and down as $k$ increases, and it does not remain above some positive constant as required by Assumption \ref{assu:asyreg}.  However, it still exhibits a key structure of Assumption \ref{assu:asyreg}: it does not shrink to zero as $k$ increases. 
This may be because, for the normal distribution, the distribution of the difference of two order statistics may only depend on the order lag, which is known to hold for the exponential distribution \citep{itemBalakrishnan2014}. 
Additionally, under Normal-CV, $g(k)$ shows no apparent growth as $k$ increases.  Interestingly,  the configuration Beta-CV exhibits completely different features. From Figure \ref{fig: Beta-CV}, we can see that as $k$ increases, the mean gap $\gamma(k)$ shrinks to zero quickly, and $g(k)$ grows linearly in $k$.

We test the EFG procedure and compare it with the EA procedure for the two configurations. In the experiment,  we let $k=2^l$ with $l$ ranging from 3 to 16, and for each $k$, we let $B=100k$ and allocate $10k$ to the greedy phase for the EFG procedure. Then, for the two configurations,  we plot the estimated PCS and PGS of the two procedures against different $k$  in Figure \ref{fig: N_PCS_PGS} and Figure \ref{fig: B_PCS_PGS}, respectively.  
We have the following findings from Figure \ref{fig: N_PCS_PGS}. First, under Normal-CV, which violates Assumption \ref{assu:asyreg}, the EFG procedure may also be sample optimal for the PCS (PGS) as the estimated PCS (PGS) remains above $60\%$ ($70\%$) for all values of $k$. 
Second, as $k$ grows, the PCS of the EA procedure does not plummet to zero, as in Figure \ref{numerical: GreedyEFGRateOptimality}. This may be because, under Normal-CV, the number of alternatives generated that are close to the best does not increase as $k$ grows, which may also lead to the non-diminishing PGS of the EA  procedure. 
However, despite this, the EFG procedure still significantly outperforms the EA procedure for both the PCS and PGS, showing the substantial performance gains of adding a greedy phase.

\begin{figure}[htbp]
  \centering
  \includegraphics[width=0.75 \textwidth]{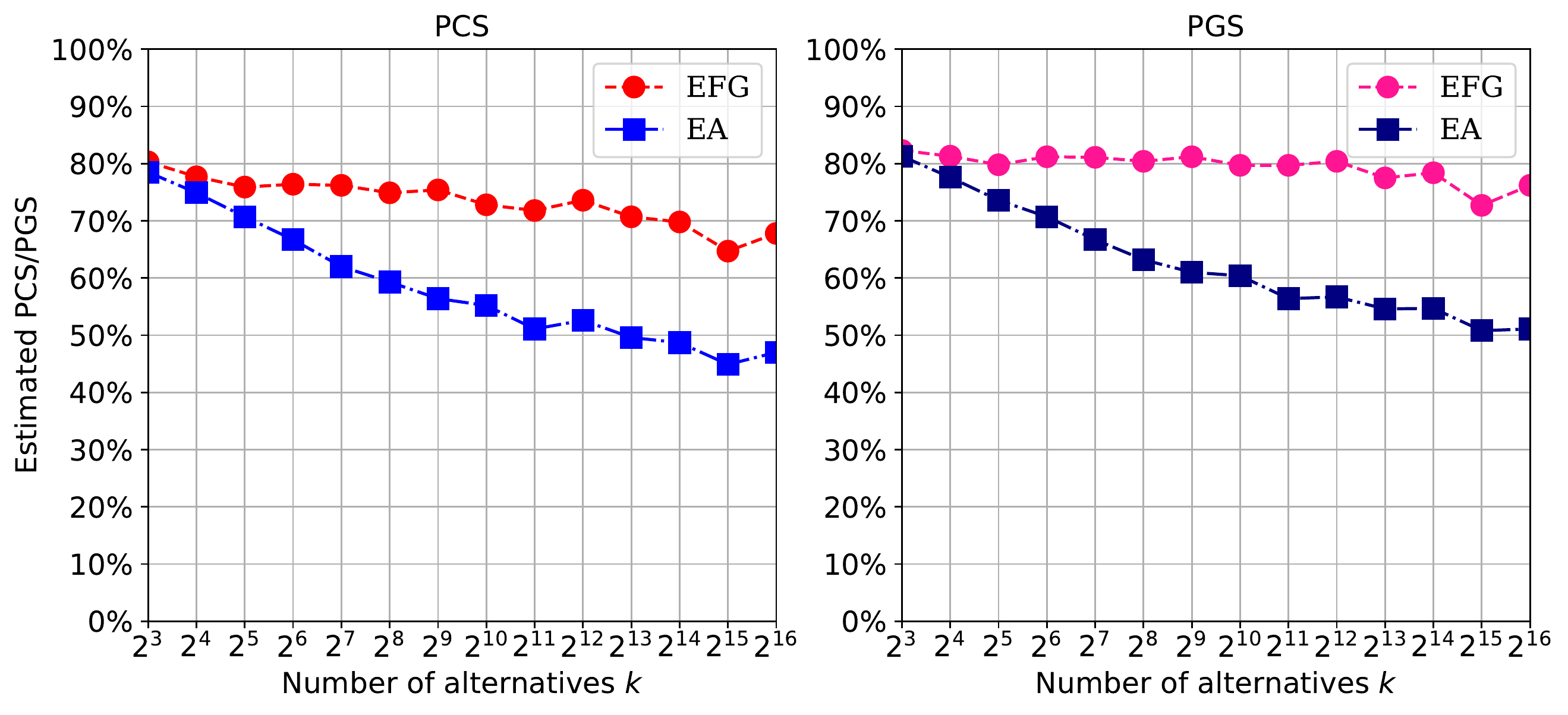}
  \caption{A comparison between the EA procedure and the EFG procedure under Normal-CV.}
  \label{fig: N_PCS_PGS}
\end{figure}
\begin{figure}[htbp]
  \centering
  \includegraphics[width=0.75 \textwidth]{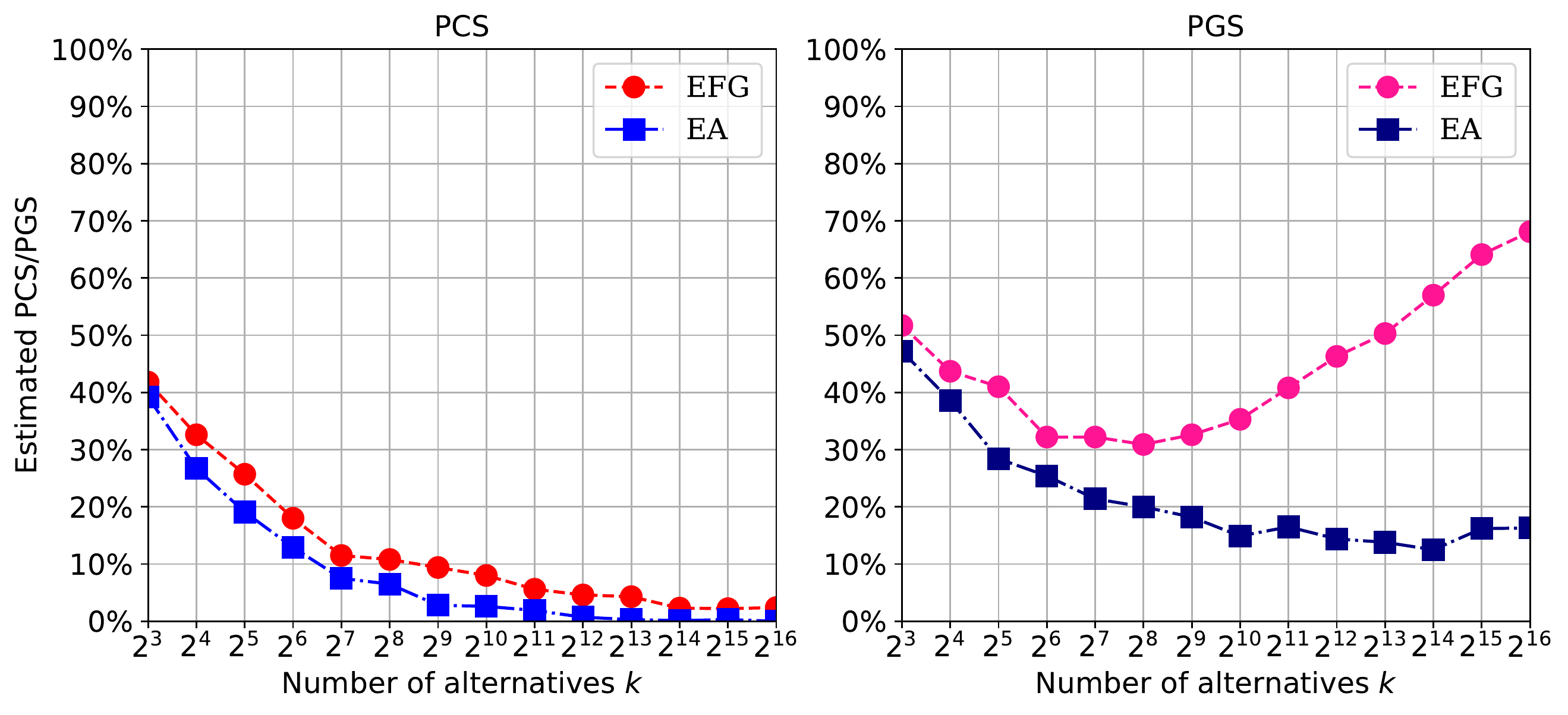}
  \caption{A comparison between the EA procedure and the EFG procedure under Beta-CV.}
  \label{fig: B_PCS_PGS}
\end{figure}

We obtain three findings from Figure \ref{fig: B_PCS_PGS}. First, under Beta-CV, both the PCS of the EFG and EA procedures diminish to zero as $k$ grows. While the EFG procedure's PCS remains higher than that of the EA procedure, this shows that when $\gamma(k) \rightarrow 0$, the EFG procedure may fail to maintain a non-zero PCS given a total sampling budget $B$ that is $O(k)$. Furthermore, when $\gamma(k) \rightarrow 0$, the attainable lower bound of the growth order of the total sampling budget required for a positive PCS may no longer be $O(k)$. In such scenarios, we conjecture that the EFG procedure may also achieve the sample optimality, i.e., attain the new minimal order of the total sampling budget. However, we will not go further on this issue because the objective of selecting a good alternative may be more meaningful when $\gamma(k) \rightarrow 0$.


Second, under Beta-CV, the PGS of the EA procedure does not decrease to zero, which is not surprising. When $g(k)$ grows at the order $O(k)$ as under Beta-CV, obtaining a non-zero PGS is trivial. One can even randomly select an alternative, and the resulting PGS $g(k)/k$ is positive no matter how large $k$ is. Therefore,  when $g(k)$ grows at the order of $O(k)$, a persistent non-zero PGS may not indicate the sample optimality.
Last, the EFG procedure's PGS exhibits an interesting monotonic increase after $k\geq 2^8$, and it becomes much larger than the EA procedure's PGS when $k$ is large. Again, this demonstrates the greedy phase's power in improving the sample efficiency. 

We note here that in the above experiment, $g(k)$ either remains bounded or grows linearly in  $k$. However, it may only grow at a sub-linear order such that $g(k) \rightarrow \infty$ but $g(k)/k \rightarrow 0$ as $k\rightarrow \infty$. The growth order of $g(k)$ may have a significant impact on the PGS. Therefore, we conduct additional experiments to compare the PGS of the EFG procedure under different growth orders of $g(k)$. See  \ref{subsec: explicit_rates} for the details and the results.

\subsubsection{Problem Configurations with Explicit Orders of  $g(k) \rightarrow \infty$.}
\label{subsec: explicit_rates}
 To characterize the impact of the growth order of $g(k)$ on the PGS of the EFG procedure, we consider the following configuration with a common unit variance and 
\begin{eqnarray}
\label{eq: configuration_infty}
    \mu_1=\delta,\, \mu_2 = \delta-\frac{\delta}{k},\, \mu_i \sim \mathcal{U}(0, \mu_2), \, \forall i=3, \ldots, g(k) ; \mu_j \sim \mathcal{U}(-1, 0), \, \forall j= g(k) +1, \ldots, k \quad \quad 
\end{eqnarray}
where $\mathcal{U}(a, b)$ is the uniform distribution with the support $[a,b]$  and $g(k)$ is a function of $k$.  In this configuration, $\gamma(k)$ converges to zero quickly as $k$ increases and our focus is only on the PGS.  
 In our experiment, we let $\delta = 0.05$, and consider the following two choices of $g(k)$: (1) $g(k) =  \left\lceil 0.05k \right\rceil $;  (2) $g(k) =  \left\lceil 0.5\sqrt{k} \right\rceil$. When $k$ is very small, the total number of good alternatives is set as $\max\{g(k), 2\}$. 
In this experiment, for each $k$, we let $B=100k$ and allocate $10k$ to the greedy phase for the EFG procedure.  See Figure \ref{fig: explicit_rates2} for the comparison between the PGS under $g(k) =   \left \lceil 0.05k \right\rceil $ and the PGS under $g(k) =   \left\lceil 0.5\sqrt{k} \right\rceil$ that are estimated based on 2000 independent macro replications. 



\begin{figure}[htbp]
  \centering
  \includegraphics[width=0.45 \textwidth]{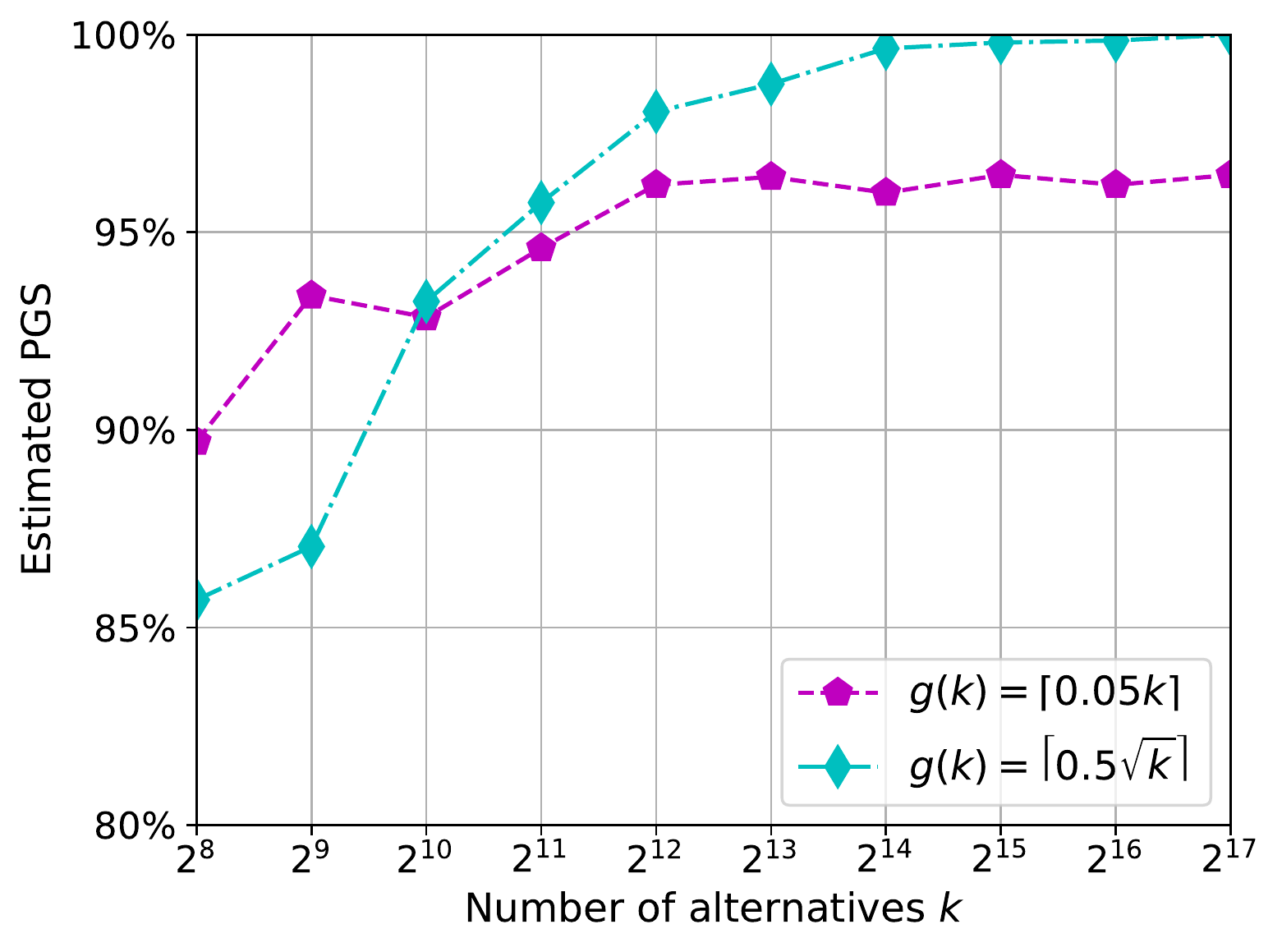}
  \caption{Estimated PGS of the EFG procedure under different $g(k)$.}
  \label{fig: explicit_rates2}
\end{figure}

From Figure \ref{fig: explicit_rates2}, we can see that
the PGS under $g(k) =   \left \lceil 0.05k \right\rceil $ becomes lower than that under $g(k) =   \left\lceil 0.5\sqrt{k} \right\rceil$ when $k \geq 2^{10}$. The former grows to 1 as $k$ increases, while the latter remains around $97\%$.
This may be because, when $g(k) =   \left\lceil 0.5\sqrt{k} \right\rceil$, the good alternatives only represent a negligible proportion of the alternatives  (notice that $\frac{g(k)}{k} \rightarrow 0$ as $k\rightarrow \infty$),  and they together may only use a small proportion of the total sampling budget when $k$ is large, e.g., $1k$. If so, the left budget is sufficient to let all inferior alternatives finish their boundary-crossing processes and ensure a correct selection as $k \rightarrow \infty$. However, when $g(k) =   \left \lceil 0.05k \right\rceil $, this is less likely to happen as the good alternatives may wastefully occupy too much sampling budget in the selection process.

{The phenomenon discussed above indicates that under the asymptotic regime of $k \rightarrow \infty$, if the number of good alternatives is allowed to be unbounded, the sample optimality regarding the PGS should be defined with more delicacy. For instance, when $g(k)$ grows at a sub-linear order as $k\rightarrow \infty$, for an R\&S procedure, being sample optimal may require its PGS to converge to 1 as $k\rightarrow \infty$ within a total sampling budget $B=ck$ for some positive constant $c$.  Then, characterizing the EFG procedure’s performance in terms of redefined sample optimality becomes an interesting and important problem.} 
However, as argued in Section \ref{subsec: GreedyOptPGS}, we need a better upper bound of the sampling budget allocated to the good alternatives in the boundary-crossing process. New technical treatment required for resolving this issue may lie beyond the scope of this paper, and we leave it as a future work. 

\section{\magenta{Parallelization of the EFG$^+$ Procedure}}
\label{sec: parallelization}

In this section, we first introduce the parallel version of the EFG$^+$ procedure, namely the EFG$^{++}$ procedure. We then test its performance in solving the large-scale throughput maximization problem and its parallel efficiency when implemented in a master-worker parallel computing environment. Lastly, we show how to reduce the communication overhead and achieve asynchronization for the EFG$^{++}$ procedure to improve the parallel efficiency.

\subsection{Description of the EFG$^{++}$ procedure}

A detailed description of the EFG$^{++}$ procedure is in Procedure \ref{procedure: EFGPlusPlus}. We design the procedure following a master-worker paradigm of parallel computing. In such a paradigm, the unique master processor (e.g., CPU) controls the main computing logic and manages a bunch of parallel worker processors to execute specific parallelizable tasks, e.g., simulating the alternatives. We implement the procedure in the Python language, and the program codes are available at \href{https://github.com/largescaleRS/greedy-procedures/tree/main/updates}{https://github.com/largescaleRS/greedy-procedures}.

\begin{procedure}[htbp]
  \caption{\textbf{Parallel Explore-First Greedy (EFG$^{++}$) Procedure}}
  \label{procedure: EFGPlusPlus}
  \begin{algorithmic}[1]
    \REQUIRE {$k$ alternatives $X_1,\ldots,X_k$, $q$ parallel worker processors, the total sampling budget $B=(n_{sd} + n_0 + n_g) k $, the total number of groups $G$, and the in-processor mini-batch size $z$.
    }
    \STATE \label{line: task_assign1} Find a task assignment scheme $\{n_{i, j}\}_{i=1, \ldots, k; j=1, \ldots, q}$ by Algorithm \ref{procedure: filling} satisfying
    \begin{equation}
      \label{eq: taskassign_seeding}
      \begin{aligned}
         & \sum_{i=1}^k n_{i, j} \leq\left\lceil n_{sd} k / q\right\rceil,
        \quad  \sum_{j=1}^q n_{i, j}=n_{sd},
        \quad n_{i, j} \in \mathbb{N}_{+}, \quad \forall\, i=1, \ldots, k,\, j=1, \ldots, q.
      \end{aligned}
    \end{equation}
    \FOR{ \textbf{processor}  $j \in \{1, 2, 3, \ldots, q\}$}
    \STATE For each alternative $i=1,\ldots, k$, if $n_{i,j} > 0$, take $n_{i,j}$ independent observations $X_{i1},\ldots,X_{in_{i,j}}$ and set $Y^{sd}_{i,j} =\sum_{l=1}^{n_{i, j}} X_{il}$; otherwise, let $Y^{sd}_{i,j}=0$. Return $\{Y^{sd}_{i,j}\}_{i=1,\ldots, k}.$
    \ENDFOR
    \STATE Block until all $\{Y^{sd}_{i,j}\}_{i=1,\ldots, k},  j=1, \ldots, q$ are received.  For each alternative $i=1,\ldots, k$, let
    $
    \bar X^{sd}_{i} = \frac{\sum_{j=1}^{q} Y^{sd}_{i,j}}{n_{sd}}
    $.
    \STATE According to $\bar X^{sd}_i$, sort the alternatives in descending order as $\{ (1), (2), \ldots, ( k) \}$.
    \STATE
    Let $I^r$ denote the group of alternatives for $r=1,\ldots, G$. Set $\Delta = 2^G-1$. Then, let $I^{1} = \{ (1), (2), \ldots, ( \lfloor  k / \Delta \rfloor ) \}$,  $I^{r} = \{ (\lfloor k 2^{r-2} / \Delta \rfloor+1), \ldots, ( \lfloor k 2^{r-1} / \Delta \rfloor) \}$ for $r=2, \ldots, G-1$, and let $I^{G} = \{ (\lfloor k 2^{G-2} / \Delta \rfloor+1), \ldots, (k) \}$.
    \STATE \label{line: task_assign2}  Find a task assignment scheme $\{n_{i, j}\}_{i=1, \ldots, k; j=1, \ldots, q}$ by Algorithm \ref{procedure: filling}  satisfying
    \begin{equation}
      \label{eq: taskassign_explore}
      \begin{aligned}
         & \sum_{i=1}^k n_{i, j} \leq\left\lceil n_0 k / q\right\rceil,
         \quad \sum_{j=1}^q n_{i, j}=\sum_{r=1}^G n^r \mathbb{I}_{i \in I^r}, \quad n_{i, j} \in \mathbb{N}_{+}, & \forall\, i=1, \ldots, k, j=1, \ldots, q.
      \end{aligned}
    \end{equation}
    \FOR{\textbf{processor}  $j \in \{1, 2, 3, \ldots, q\}$}
    \STATE For each alternative $i=1,\ldots, k$, if $n_{i,j} > 0$, take $n_{i,j}$ independent observations $X_{i1},\ldots,X_{in_{i,j}}$ and set $Y_{i,j} =\sum_{l=1}^{n_{i, j}} X_{il}$; otherwise, let $Y_{i,j}=0$. Return $\{Y_{i,j}\}_{ i=1,\ldots, k}$.
    \ENDFOR
    \STATE Block until all $\{Y_{i,j}\}_{i=1,\ldots, k}, j=1, \ldots, q$ are received. For each alternative $i=1,\ldots, k$,
    let $n_i=\sum_{r=1}^G n^r \mathbb{I}_{i \in I^r}$ and
    $
      \bar X_{i}(n_i) = \frac{\sum_{j=1}^{q} Y_{i,j}}{n_i}
    $.
    \WHILE{$ \sum_{i=1}^{k}n_i + qz \leq  (n_0+n_g) k$}
    \STATE Let $s = \arg\max_{i= 1, \ldots, k} \bar{X}_{i}\left(n_i\right)$;
    \FOR{\textbf{processor}  $j \in \{1, 2, 3, \ldots, q\}$}
    \STATE For alternative $s$, take $z$ independent observations $X_{s1},\ldots,X_{sz}$ and return $Y_j = \sum_{l=1}^{z}X_{sl}$;
    \ENDFOR
    \STATE Block until all $Y_j, j=1,\ldots, q$ are received.
    Let $\bar{X}_{s}\left(n_s\right) = \frac{n_s \bar{X}_{s}\left(n_s\right) + \sum_{j=1}^{q} Y_j}{n_s + qz}$ and $n_s=n_s + qz$.
    \ENDWHILE
    \STATE Select $\arg\max_{i= 1, \ldots, k} \bar{X}_{i}\left(n_i\right)$  as the best.
  \end{algorithmic}
\end{procedure}

As a parallel version of the EFG$^+$ procedure, the EFG$^{++}$ procedure also has three phases: the seeding phase, exploration phase, and greedy phase. 
In Procedure \ref{procedure: EFGPlusPlus}, the three phases are described in lines 1-5, lines 6-12, and lines 13-20, respectively. 
For the seeding and exploration phases, the key to efficient parallelization is load balancing of simulation tasks among the available worker processors, i.e., allocating different processors (almost) the same number of observations to simulate. 
Thus, in both phases, the master processor needs to find a load-balancing simulation task assignment scheme and then assign the tasks to the worker processors accordingly. 
To facilitate this, we also design a simple yet efficient sequential-filling algorithm for solving the task assignment schemes and introduce it in Section \ref{subsec: load_balancing}. 
In Procedure \ref{procedure: EFGPlusPlus}, the algorithm is used in line \ref{line: task_assign1} for the seeding phase and in line \ref{line: task_assign2} for the exploration phase. 
In each of the two phases, after simulation tasks are assigned,  the master processor waits for all the worker processors to finish and return the results. 
Then, the master processor updates the sample sizes and sample means for all alternatives and enters the next phase.

As introduced in Section \ref{subsec: parallel}, we consider using a batching approach to parallelize the fully sequential greedy phase for the EFG$^{++}$ procedure. In the batched greedy phase of the procedure, at each stage, the current-best alternative $s$ will be simulated more than once, and the simulation task will be equally assigned to the worker processors. Then, after all the worker processors finish simulating alternative $s$ and return the results, the master processor updates the sample size and sample mean for alternative $s$ and enters the next stage. When the total sampling budget is exhausted, the master processor delivers the alternative with the largest sample mean as the best. For the worker processors, we use $z$ to denote the number of observations to simulate at each stage and name it the \textit{in-processor mini-batch size}. With the presence of $q$ worker processors, the batch size of the greedy phase equals $ q\times z$. In our implementation, we regard $z$ as a parameter of the procedure.

In this section's experiments, we are concerned about not only one procedure's performance in terms of the PCS and PGS but also its parallel efficiency when implemented in parallel computing environments. Following  \cite{itemNi2017} and \cite{itemHong2022}, we quantify one procedure's  parallel efficiency in  a master-worker parallel computing environment using the measure \textit{utilization} defined by
\[\text{Utilization} = \frac{\text{Total simulation time}}{\text{Wall-clock time} \times \text{Number of parallel worker processors}},
\]
where the total simulation time is the total system time used by all the parallel worker processors for simulating all the observations. The higher the utilization is, the more efficient the procedure is in using the available processors, and the more suitable the procedure is for parallelization. Notice that we can estimate the utilization for each phase of the EFG$^{++}$ procedure.

\subsection{Performance and Parallel Efficiency of the EFG$^{++}$ procedure}
\label{subsec: test_EFGPlusPlus}

We test the performance and parallel efficiency of the EFG$^{++}$ procedure on the throughput maximization (TP) problem, which has been introduced in Section \ref{subsec: TP_EC}. In our experiments, we use a total sampling budget $B=100k$ and allocate 20\%, 70\%, and 10\% of the total sampling budget to the seeding phase, exploration phase, and greedy phase, respectively. Furthermore, unless otherwise specified, we estimate the concerned quantities like the PCS, PGS, wall-clock time, and the utilization of the procedure based on 300 independent macro replications. 

\subsubsection{Testing  the EFG$^{++}$  Procedure on the TP Problem.} 
\label{subsubsec: test_EFGPlusPlus_z1}
We first test the EFG$^{++}$procedure using the problem with $k=11774$ alternatives.   When running the procedure, we try four choices for the number of available worker processors $q \in \{5, 10, 20, 40\}$. For each $q$, we set $z=1$, i.e., let each worker processor simulate the current best only once at each stage of the greedy phase. Then, we report the PCS, PGS, wall-clock time, and utilization of the EFG$^{++}$procedure for different $q$ in Table \ref{tab: EFGPP11774}.

\begin{table}[htbp]
  \centering
  \begin{tabular}{cccccc}
    \hline
    \hline
    $\begin{array}{c}\text {\# of worker} \\
         \text {processors } q\end{array}$   &
    $\begin{array}{c}\text {In-processor} \\
         \text {mini-batch size } z\end{array}$ & PCS & PGS $(\delta=0.01)$ & $\begin{array}{c}\text {Wall-clock} \\
                                                                               \text {time (s)}\end{array}$ & Utilization                     \\
    \hline
    1                                                                                & 1                                                                                       & 0.42 & 0.94                & 107.572                                                                                                                                           & 83.85\%     \\
    5                                                                                & 1                                                                                       & 0.44 & 0.96                & 27.346                                                                                                                                            & 65.19\%     \\
    10                                                                               & 1                                                                                       & 0.46 & 0.96                & 19.059                                                                                                                                            & 46.13\%     \\
    20                                                                               & 1                                                                                       & 0.40 & 0.95                & 15.306                                                                                                                                            & 28.78\%     \\
    40                                                                               & 1                                                                                       & 0.47 & 0.92                & 13.132                                                                                                                                            & 16.68\%     \\ 
    \hline
    \hline
  \end{tabular}
  \caption{Performance and parallel efficiency of the EFG$^{++}$ procedure on the TP problem with $11774$ alternatives.}
  \label{tab: EFGPP11774}
\end{table}



From Table \ref{tab: EFGPP11774}, we have the following findings. First, there is no apparent decrease in the PCS and PGS as the number of worker processors grows from $1$ to $40$. This indicates that using a parallelizable batched greedy phase will not influence the performance of the EFG$^{++}$ procedure when the batch size is not large. Second, the utilization of the EFG$^{++}$ procedure is less than $90\%$ even when $q=1$. This is due to the fact that the time spent to find the current best at each stage of the greedy phase is not negligible in this medium-size problem. 
Third, when the number of worker processors is small (e.g., 5 or 10), the EFG$^{++}$ procedure can efficiently reduce the wall-clock time. In Table \ref{tab: EFGPP11774}, when the number of worker processors increases from 1 to 10, the wall-clock time is reduced from 107.572 seconds to 13.132 seconds. However, if the number of processors continues to increase, the procedure can not be further effectively accelerated. As a result, the utilization keeps decreasing to a deficient level as $q$ grows. To understand this phenomenon, we report the wall-clock times of different phases for the EFG$^{++}$ procedure in Table \ref{tab: EFGPP11774Times}.

\begin{table}[htbp]
  \centering
  \begin{tabular}{ccccc}
    \hline
    \hline
    \multirow{2}{*}{\begin{tabular}[c]{@{}c@{}}\# of worker \\  processors $q$\end{tabular}}  & \multirow{2}{*}{\begin{tabular}[c]{@{}c@{}}In-processor\\ mini-batch size $z$\end{tabular}}  & \multicolumn{3}{c}{Wall-clock time (s)}                                       \\ \cline{3-5} &                                  & \multicolumn{1}{c}{seeding phase} & \multicolumn{1}{c}{exploration phase} & greedy phase \\
     \hline
     1  & 1 & 16.799 & 56.840  & 33.933 \\
     5  & 1 & 3.442  & 11.812 & 12.092 \\
     10 & 1 & 1.762  & 5.957  & 11.341 \\
     20 & 1 & 0.894  & 2.997  & 11.415 \\
     40 & 1 & 0.492  & 1.551  & 11.089 \\
    \hline
    \hline
  \end{tabular}
  \caption{Each phase's wall-clock time of the EFG$^{++}$ procedure on the TP problem with $11774$ alternatives.}
  \label{tab: EFGPP11774Times}
\end{table}
From Table \ref{tab: EFGPP11774Times}, we can see that both the seeding and exploration phases are parallelized very efficiently. As the number of worker processors used $q$ is doubled for the two phases, the wall-clock time is approximately halved. 
The problem lies in the greedy phase. When $q\geq 10$, increasing $q$ shows almost no impact on the wall-clock time of the greedy phase.
This may arise due to the frequent communication between the master processors and the worker processors in the greedy phase. In this experiment, we set $z=1$ for different $q$. Then, in the greedy phase, to obtain one single simulation observation of the alternatives, the master processor has to communicate with one worker processor twice, once to specify which alternative to simulate and once to receive the observation from the worker processor, no matter how many worker processors there are. 

\subsubsection{Improving Utilization by Reducing Communication.} 
\label{subsubsec: test_EFGPlusPlus_commu}
A simple way to reduce the communication cost of running the batched greedy phase in parallel is to enlarge the value of $z$. By doing so, the batch size of each stage is increased, and the total number of rounds of communications is reduced. To show the impact of $z$, we also test the EFG$^{++}$ procedure with $z=10$ for $q=10$, $20$ and also $40$. The estimated PCS, PGS, wall-clock time and utilization for different $z$ and $q$ are summarized in Table \ref{tab: reducing_commu}.
We also keep track of the greedy phase's wall-clock times and report them in Table \ref{tab: reducing_commu}.

\begin{table}[htbp]
  \centering
  \begin{tabular}{ccccccc}
    \hline
    \hline
    $\begin{array}{c}\text {\# of worker} \\
         \text {processors } q\end{array}$    &
    $\begin{array}{c}\text {In-processor} \\
         \text {mini-batch size } z\end{array}$ & PCS                                            &
    $\begin{array}{c}\text {PGS} \\
         (\delta=0.01)\end{array}$          & $\begin{array}{c}\text {The greedy phase's} \\
                                              \text{wall-clock time (s)}\end{array}$ & $\begin{array}{c}\text {Wall-clock} \\
                                                                                        \text {time (s)}\end{array}$ & Utilization                                                   \\
                                                                                        \hline                                                                               10 & 1  & 0.46 & 0.96 & 12.884 & 19.059 & 46.13\% \\
                                                                                        10 & 10 & 0.44 & 0.96 & 1.286  & 8.673  & 94.77\% \\ \hline
                                                                                        20 & 1  & 0.40 & 0.95 & 11.415 & 15.306 & 28.78\% \\
                                                                                        20 & 10 & 0.48 & 0.94 & 1.315  & 5.423  & 78.72\% \\ \hline
                                                                                        40 & 1  & 0.47 & 0.92 & 11.089 & 13.132 & 16.68\% \\
                                                                                        40 & 10 & 0.46 & 0.95 & 1.131  & 3.174  & 66.65\% \\ \hline
    \hline
  \end{tabular}
  \caption{Examining the impact of in-processor mini-batch size $z$ on the performance and parallel efficiency of the EFG$^{++}$ procedure using the TP problem with 11774 alternatives.}
  \label{tab: reducing_commu}
\end{table}

We have the following findings from Table \ref{tab: reducing_commu}. First, we find that for each value of $q$, increasing $z$ does not reduce the PCS and PGS of the EFG$^{++}$ procedure. Recall that the stage batch size of the batched greedy phase $m$ equals $q\times z$. This finding shows that even when the batch size is large, e.g., $m=400$ in the last row of Table \ref{tab: reducing_commu}, using a batched greedy phase for parallelization can deliver the same level of selection accuracy. Second, increasing $z$ can effectively reduce the communication overhead between processors. For each value of $q$, when we increase $z$ from 1 to 10, the wall-clock time of the greedy phase is significantly reduced. As a result, we reduce the wall-clock time of the procedure and improve the parallel efficiency.
These results indicate that with a proper choice of $z$ (e.g., 10), we can parallelize the EFG$^{++}$ procedure efficiently without damaging its selection accuracy in solving large-scale R\&S problems.

\subsection{Asynchronization of the EFG$^{++}$ procedure}
\label{subsec: AsynEFGPlusPlus}

\subsubsection{Procedure Design and the Performance.} 
\label{subsubsec: test_AsynEFGPlusPlus}
We now introduce how to implement the EFG$^{++}$ procedure in an asynchronized fashion. Notice that for both the seeding phase and exploration phase, synchronization of sampling information only happens once at the end of the phase. Therefore, it suffices to consider asynchronizing the batched greedy phase only. Our design is straightforward. At the beginning of the greedy phase, the master processor still requires every worker processor to take $z$ observations from the first current-best alternative. Immediately after that, whenever the master processor receives the simulation results from any worker processor, it updates the sample size and sample mean for the simulated alternative and asks the idle processor to take $z$ observations from the new current-best alternative. After the total sampling budget is exhausted, the procedure delivers the alternative with the largest sample mean as the best. We name the new parallel procedure with an asynchronized greedy phase the Asyn-EFG$^{++}$ procedure. A formal description of the procedure is in Procedure \ref{procedure: Asyn_EFGPlusPlus}. 

\begin{procedure}[htbp]
  \caption{\textbf{Asynchronized Parallel Explore-First Greedy (Asyn-EFG$^{++}$) Procedure}}
  \label{procedure: Asyn_EFGPlusPlus}
  \begin{algorithmic}[1]
    \REQUIRE {$k$ alternatives $X_1,\ldots,X_k$, $q$ parallel worker processors, the total sampling budget $B=(n_{sd} + n_0 + n_g) k $, the total number of groups $G$, and the in-processor mini-batch size $z$.
    }
    \STATE Execute lines 1 - 12 of Procedure \ref{procedure: EFGPlusPlus}.
    \STATE Let $s = \arg\max_{i= 1, \ldots, k} \bar{X}_{i}\left(n_i\right)$;
    \FOR{\textbf{processor}  $j \in \{1, 2, 3, \ldots, q\}$}
    \STATE For alternative $s$, take $z$ independent observations $X_{s1},\ldots,X_{sz}$, return $s$, $j$ and $r=\sum_{l=1}^{z}X_{sl}$;
    \ENDFOR
    \WHILE{$ \sum_{i=1}^{k}n_i + qz \leq  (n_0+n_g) k$}
    \STATE  Block until receive from any processor the alternative id $s$, the sum $r$ and the processor id $j$;
    \STATE Let $\bar{X}_{s}\left(n_s\right) = \frac{n_s \bar{X}_{s}\left(n_s\right) + r}{n_s + z}$ and $n_s=n_s + z$;
    \STATE Let $s = \arg\max_{i= 1, \ldots, k} \bar{X}_{i}\left(n_i\right)$;
    \STATE {Let} \textbf{processor} $j$ \textbf{do}
    \STATE \quad For alternative $s$, take $z$ independent observations $X_{s1},\ldots,X_{sz}$, return $s$, $j$ and $r=\sum_{l=1}^{z}X_{sl}$;
    \ENDWHILE
    \FOR{$t \in \{1, 2, 3, \ldots, q\}$}
    \STATE  Block until receive from any processor the alternative id $s$, the sum $r$ and the processor id $j$;
    \STATE Let $\bar{X}_{s}\left(n_s\right) = \frac{n_s \bar{X}_{s}\left(n_s\right) + r}{n_s + z}$ and $n_s=n_s + z$;
    \ENDFOR
    \STATE Select $\arg\max_{i= 1, \ldots, k} \bar{X}_{i}\left(n_i\right)$  as the best.
  \end{algorithmic}
\end{procedure}

\begin{table}[htbp]
  \centering
  \begin{tabular}{cccccc}
    \hline
    \hline
    $\begin{array}{c}\text {\# of worker} \\
         \text {processors } q\end{array}$   &
    $\begin{array}{c}\text {In-processor} \\
         \text {mini-batch size } z\end{array}$ & PCS & PGS $(\delta=0.01)$ & $\begin{array}{c}\text {Wall-clock} \\
                                                                               \text {time (s)}\end{array}$ & Utilization                     \\
    \hline
    5  & 1  & 0.46 & 0.95 & 29.900 & 59.05\% \\
    5  & 10 & 0.44 & 0.92 & 17.497 & 95.07\% \\ \hline
    10 & 1  & 0.43 & 0.95 & 20.297 & 41.59\% \\
    10 & 10 & 0.44 & 0.96 & 8.673  & 94.77\% \\ \hline
    20 & 1  & 0.44 & 0.94 & 16.824 & 25.17\% \\
    20 & 10 & 0.45 & 0.95 & 5.432  & 76.46\% \\ \hline
    40 & 1  & 0.45 & 0.98 & 15.604 & 13.66\% \\
    40 & 10 & 0.45 & 0.97 & 3.340  & 61.62\% \\ \hline
    \hline
  \end{tabular} \caption{Performance and parallel efficiency of the Asyn-EFG$^{++}$ procedure on the TP problem with $11774$ alternatives.}
  \label{tab: Asyn11774}
\end{table}

Now we test the selection accuracy of the Asyn-EFG$^{++}$ procedure. Here, we use the same problem and implementation settings used in Section \ref{subsec: test_EFGPlusPlus}. The estimated PCS, PGS and wall-clock times for different $q$ and $z$ are summarized in Table \ref{tab: Asyn11774}.  From Table  \ref{tab: Asyn11774}, we can see that all the results regarding the performance and parallel efficiency of the EFG$^{++}$ procedure and the impact of $z$ obtained in Section \ref{subsec: test_EFGPlusPlus} also hold for the Asyn-EFG$^{++}$ procedure.

\subsubsection{Random Simulation Times.}
\label{subsubsec: test_random_times}
 In the above experiments, it only takes 0.07ms on average to take one observation from the alternatives\footnote{We achieve this speed by utilizing the Python library Cython to compile the simulation program as faster C codes. }. When simulating the alternatives is so fast, synchronization in the greedy phase may be fine. To show the value of asynchronization, we conduct the following experiment. For each worker processor, every time it generates one simulation observation, it sleeps for $T$ ms, which is $i.i.d.$ generated from the uniform distribution with the support $[0.5, 1.5]$. We use the problem instance with $k=41624$ alternatives and $40$ worker processors in the experiment. For both the EFG$^{++}$ procedure and the Asyn-EFG$^{++}$ procedure, we estimate the utilization for each of the three phases and summarize them in Table \ref{table: utils_random_times}. 

\begin{table}[htbp]
  \centering
  \begin{tabular}{cccccc}
    \hline
    \hline
    \multirow{2}{*}{\begin{tabular}[c]{@{}c@{}}In-processor\\ mini-batch size $z$\end{tabular}} &  \multirow{2}{*}{Procedure} & \multicolumn{3}{c}{Utilization}                                       \\ \cline{3-5} &                                  & \multicolumn{1}{c}{seeding phase} & \multicolumn{1}{c}{exploration phase} &  \multicolumn{1}{c}{greedy phase} \\ 
    \hline
\multirow{2}{*}{1}  & EFG$^{++}$     & 99.20\% & 99.60\% & 46.71\% \\
                    & Asyn-EFG$^{++}$ & 99.18\% & 99.57\% & 40.48\% \\ \hline
\multirow{2}{*}{10} & EFG$^{++}$     & 99.18\% & 99.57\% & 80.45\% \\
                    & Asyn-EFG$^{++}$ & 99.17\% & 99.56\% & 99.48\% \\ 
\hline
\hline
  \end{tabular}
  \caption{Each phase's parallel efficiency of the EFG$^{++}$ procedure and the Asyn-EFG$^{++}$ procedure on the TP problem with 41624 alternatives.}
  \label{table: utils_random_times}
\end{table}


We obtain the following findings from Table \ref{table: utils_random_times}. 
First, the seeding and exploration phases can be parallelized very efficiently for both procedures. The utilization in the two phases are all above $99\%$. Second, for the Asyn-EFG$^{++}$ procedure, increasing $z$ can also help increase the utilization in the greedy phase due to the reduction of communication overhead. Third, the impact of asynchronization on the parallel efficiency of the EFG$^{++}$ procedure in the greedy phase varies for different values of $z$. When $z=1$, the Asyn-EFG$^{++}$ procedure’s utilization in the greedy phase can be lower than that of the EFG$^{++}$ procedure. This may be because, in the greedy phase, the Asyn-EFG$^{++}$ procedure has to update the current best alternative every time a worker processor returns the simulation results, thereby incurring a larger overhead of finding the \emph{argmin} than the EFG$^{++}$ procedure. However, when $z$ is increased to $10$,  asynchronizing the greedy phase improves the utilization significantly, showing that the profit of asynchronization surpasses the excess overhead of frequent updating the current best alternative. From the results, we conclude that with a proper choice of $z$, asynchronization can improve the EFG$^{++}$ procedure's robustness to random and unequal simulation times.

\subsection{Load-balancing Simulation Task Assignment}
\label{subsec: load_balancing}
The sequential-filling algorithm for load-balancing simulation task assignment among the available worker processors is shown in Algorithm \ref{procedure: filling}. Given $\sum_{i=1}^{k}n_i$ observations to simulate   in total, each of the $q$ processors should simulate no more than $\lceil \sum_{i=1}^{k}n_i/q \rceil$ observations. Informally, we regard $\lceil \sum_{i=1}^{k}n_i/q \rceil$ as the maximum volume of each processor. In the procedure, we use $l_i, i=1, \dots, k$ to denote the unassigned number of observations for alternative $i$, and use $v_j, j=1, \dots, q$ to denote the unfulfilled ``volume'' of processor $j$. The procedure proceeds by filling the simulation task into different processors sequentially and uses $n_{i, j}$ to keep track of how many observations of alternative $i$ are assigned to processor $j$. 

This algorithm is used in the seeding and exploration phases of the EFG$^{++}$ procedure and the Asyn-EFG$^{++}$ procedure. It works very efficiently in our experiments. As displayed in Table \ref{table: utils_random_times}, the utilization of the two procedures in the seeding and exploration phases are always above $99\%$.

\renewcommand{\thealgorithm}{1}
\begin{algorithm}[htbp]
  \caption{\textbf{Sequential-Filling Algorithm}}
  \label{procedure: filling}
  \begin{algorithmic}[1]
    \REQUIRE {The number of alternatives $k$, the number of processors $q$, and the sample size of each alternative $n_i,\, i=1,\ldots, k$.
    }
    \STATE Let $l_i=n_i$, $v_j=\lceil \sum_{i=1}^{k}n_i/q \rceil$, and $n_{i,j}=0$ for $i=1, \ldots, k$ and $j=1, \ldots, q$.
    Set counter $j=0$.
    \FOR{$i \in \{1, 2, 3, \ldots,q\}$}
    \WHILE{\textbf{True}}
    \IF{$l_i \leq v_j$}
    \STATE Let $n_{i,j} = n_{i,j} + l_j$ and $v_j=v_j - l_i$; 
    \STATE \textbf{break}
    \ELSE
    \STATE Let $n_{i,j} = n_{i,j} + v_j$, $l_i = l_i-v_j$, and update the counter $j=j+1$.
    \ENDIF
    \ENDWHILE
    \ENDFOR
    \STATE Return $\{n_{i,j}\}_{i=1,\ldots, k;  j=1, \ldots, q}$.
  \end{algorithmic}
\end{algorithm}

\end{document}